\DeclareDelimiter{\Otilde}[\mathnormal{\widetilde{O}}]{\lparen}{\rparen}
\DeclareDelimiter{\Omegatilde}[\mathnormal{\widetilde{\Omega}}]{\lparen}{\rparen}
\DeclareDelimiter{\DKL}[{\mathcal{D}_{\operatorname{KL}}}]{\lparen}{\rparen}
\DeclareDelimiter{\dTV}[{\mathnormal{d}_{\operatorname{TV}}}]{\lparen}{\rparen}
\DeclareDocumentMathCommand{\tOmega}{}{\widetilde{\Omega}}
\DeclareDocumentMathCommand{\lam}{}{\lambda}
\DeclareDocumentMathCommand{\eps}{}{\epsilon}
\DeclareDocumentMathCommand{\veps}{}{\varepsilon}
\DeclareDocumentMathCommand{\hmu}{}{\hat{\mu}}
\DeclareDocumentMathCommand{\muiso}{}{\mu^{\operatorname{iso}}}
\DeclareDocumentMathCommand{\xset}{}{\mathcal{X}}
\DeclareDocumentMathCommand{\map}{}{\mathcal{M}}
\DeclareDocumentMathCommand{\cert}{}{\mathcal{C}}
\title{Quadratic Speedups in Parallel Sampling from Determinantal Distributions}
\date{}
	\author[1]{Nima Anari}
	\author[1]{Callum Burgess}
	\author[2]{Kevin Tian}
	\author[1]{Thuy-Duong Vuong}
	\affil[1]{Stanford University, \url{{anari,callumb,tdvuong}@stanford.edu}}
	\affil[2]{Microsoft Research, \url{tiankevin@microsoft.com}}
	\author{Nima Anari}
	\email{anari@cs.stanford.edu}
	\affiliation{
	  \institution{Stanford University}
	  \city{Stanford}
	  \state{CA}
	  \country{USA}
	}
	\author{Callum Burgess}
	\email{callumb@stanford.edu}
	\affiliation{
	  \institution{Stanford University}
	  \city{Stanford}
	  \state{CA}
	  \country{USA}
	}
	\author{Kevin Tian}
	\email{tiankevin@microsoft.com}
	\affiliation{
	  \institution{Microsoft Research}
	  \city{Redmond}
	  \state{WA}
	  \country{USA}
	}
	\author{Thuy-Duong Vuong}
	\email{tdvuong@stanford.edu}
	\affiliation{
	  \institution{Stanford University}
	  \city{Stanford}
	  \state{CA}
	  \country{USA}
	}
\begin{document}
	\Tag<sigconf>{
		\begin{abstract}	We study the problem of parallelizing sampling from distributions related to determinants:  symmetric, nonsymmetric, and partition-constrained determinantal point processes, as well as planar perfect matchings. For these distributions, the partition function, a.k.a.\ the count, can be obtained via matrix determinants, a highly parallelizable computation; Csanky proved it is in NC. However, parallel counting does not automatically translate to parallel sampling, as classic reductions between the two are inherently sequential. We show that a nearly quadratic parallel speedup over sequential sampling can be achieved for all the aforementioned distributions. If the distribution is supported on subsets of size $k$ of a ground set, we show how to approximately produce a sample in $\widetilde{O}(k^{\frac{1}{2} + c})$ time with polynomially many processors for any constant $c>0$. In the two special cases of symmetric determinantal point processes and planar perfect matchings, our bound improves to $\widetilde{O}(\sqrt k)$ and we show how to sample exactly in these cases.
		
As our main technical contribution, we fully characterize the limits of batching for the steps of sampling-to-counting reductions. We observe that only $O(1)$ steps can be batched together if we strive for exact sampling, even in the case of nonsymmetric determinantal point processes. However, we show that for approximate sampling, $\widetilde{\Omega}(k^{\frac{1}{2}-c})$ steps can be batched together, for any entropically independent distribution, which includes all mentioned classes of determinantal point processes. Entropic independence and related notions have been the source of breakthroughs in Markov chain analysis in recent years, so we expect our framework to prove useful for distributions beyond those studied in this work.\end{abstract}
		\keywords{sampling, counting, rejection sampling, parallel algorithms, determinantal point process, planar perfect matchings}
		\maketitle
	}
	\Tag{
		\maketitle
		\begin{abstract}\end{abstract}
		\clearpage
	}
	
	\section{Introduction}

Sampling and counting are intimately connected problems. For many classes of measures $\mu$ defined on a, most often exponentially large, space $X$, approximately sampling $x\in X$ with $\P{x}\propto \mu(x)$ and approximately computing the partition function $\sum_{x\in X} \mu(x)$ are polynomial-time reducible to each other \cite{jerrum1986random}. However, this equivalence breaks down for complexity classes below \Class{P}. For example, there is no known polylogarithmic-time parallel reduction between approximate counting and approximate sampling.

Motivated by the underexplored relationship between sampling and counting in the world of parallel algorithms, \textcite{anari2020sampling}, based on the earlier work of \textcite{teng1995independent}, raised the question of designing fast parallel \emph{samplers} for several distributions where counting, even exactly, was possible in polylogarithmic time and polynomial work, i.e., via \Class{NC} algorithms. The distributions in this challenge set all enjoy fast parallel counting algorithms because their partition functions, $\sum_{x}\mu(x)$, can be written as determinants, and determinants are computable in \Class{NC} \cite{csanky1975fast}. \textcite{anari2020sampling} solved one of these challenges and showed how to sample random arborescences in \Class{RNC}, completing the earlier work of \textcite{teng1995independent} on random spanning trees. However, these works are tailored to the random spanning tree and arborescence distributions, as they parallelize the sequential sampling algorithm of \textcite{Aldous90,Broder89}, and there is no known generalization of this algorithm to other distributions.

In this work, we study a general framework to improve the parallel efficiency of sampling-to-counting reductions. We build on the success of recently introduced techniques in the analysis of random walks and sampling algorithms, where combinatorial distributions are analyzed through the lens of high-dimensional expanders \cite{ALOV19,AL20,ALO20}. We show that under one notion of high-dimensional expansion named entropic independence \cite{AJKPV21}, see \cref{def:entropic-independence} for definition, sampling-to-counting reductions can be made sped up nearly quadratically on a \Class{PRAM}.

To set up notation, we consider combinatorial distributions defined on size $k$ subsets of a ground set of elements $\set{1,\dots,n}$, which we denote by an (unnormalized) measure $\mu:\binom{[n]}{k}\to\R_{\geq 0}$. We note that the choice of $\binom{[n]}{k}$ is a standard canonical form for the domain, and many other domains such as product spaces, can be transformed into $\binom{[n]}{k}$ \cite{ALO20}. We access the measure $\mu$ through a counting oracle. Given \emph{any}\footnote{By querying a $T$ of size exactly $k$, a counting query can also return $\mu(T)$.} set $T\subseteq [n]$, the oracle returns
\[ \sum\set*{\mu(S)\given S\in \binom{[n]}{k}, T\subseteq S}. \]
Our goal is to use the oracle and output a random set $S$ that approximately follows $\P{S}\propto \mu(S)$.

The classic reduction from sampling to counting \cite{jerrum1986random} proceeds by picking the $k$ elements of $S$, one at a time. In each step, conditioned on all previously chosen elements, marginals $\P_{S\sim \mu}{i\in S\given \text{previous choices}}$ of all remaining elements $i$ are computed and a new element is picked randomly with probability proportional to the conditional marginals. In each step, marginals can be computed via parallel calls to the counting oracle. However, this procedure is inherently sequential as the choice of each element affects the conditional marginals in future iterations. A parallel implementation of this reduction takes time $\Omega(k)$. The main question we study is:
\begin{quote}
    \textit{For which $\mu$ is there a faster parallel reduction from sampling to counting?}
\end{quote}

Our main result shows that for distributions $\mu$ which are entropically independent \cite{AJKPV21}, sampling-to-counting reductions can be sped up roughly quadratically. Throughout, we use $\Otilde{\cdot}$ to hide logarithmic factors in $n$ and failure probabilities; these factors primarily come from the parallel complexity of linear algebra (e.g., evaluating determinants and partition functions).

\begin{theorem}[Main, see \cref{thm:mainei} for formal version]\label{thm:main}
Let $\mu: \binom{[n]}{k} \to \R_{\geq 0}$ be entropically independent and assume that we have access to a counting oracle for $\mu$. For any constant $c > 0$ and any $\eps>0$, there exists an algorithm that can sample from a distribution within total variation distance $\eps$ of $\mu$ in $\Otilde*{\sqrt{k} \cdot \parens*{\frac k \eps}^{c}}$
parallel time using $(n/\eps)^{O(1/c)}$ machines in the \Class{PRAM} model.
\end{theorem}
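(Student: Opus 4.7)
I would speed up the classical sequential sampling-to-counting reduction via \emph{batching}. Instead of revealing the $k$ elements of $S$ one at a time, I would group the process into $N=k/B$ rounds of size $B\approx k^{\frac{1}{2}-c}$; within each round the next $B$ elements are sampled in parallel from a tractable proposal and corrected to the true conditional via rejection sampling. Provided each round completes in parallel polylogarithmic depth, the total parallel depth matches the claimed $\Otilde*{\sqrt{k}\cdot(k/\eps)^c}$.

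\textbf{Implementing one batch.} Let $T$ denote the elements already placed into $S$, and let $\mu_T$ denote $\mu$ conditioned on $T\subseteq S$. One parallel round of counting-oracle queries yields every singleton conditional marginal $p_i=\Pr[i\in S\mid T\subseteq S]$ for $i\in[n]\setminus T$. From these I would construct a product-type proposal on $B$-subsets of $[n]\setminus T$---essentially the distribution that would be exact if the remaining coordinates were independent with the observed marginals. Given a candidate subset $R$, both the proposal and target densities are computable by a further round of oracle queries, so standard rejection sampling applies. By launching $(n/\eps)^{O(1/c)}$ independent rejection attempts in parallel and returning the first acceptance, the expected-$O(1)$ success probability translates into $\Otilde{1}$ parallel depth per batch.

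\textbf{Divergence bound via entropic independence.} The analytic crux is showing the proposal is close enough to the $B$-coordinate marginal of $\mu_T$ that the rejection step succeeds rapidly. Entropic independence is precisely a subadditivity statement for joint versus marginal entropies of $\mu$, and tensorizing it across a block of $B$ coordinates should yield a KL-type bound of roughly $\Otilde{B/k}$ between the true $B$-coordinate law and the product-of-marginals proposal---i.e., the divergence vanishes when $B\ll k$. After an appropriate Pinsker- or chi-squared-type upgrade, taking $B=k^{\frac{1}{2}-c}$ makes the per-batch error small enough that summing across all $N$ batches keeps total TV within $\eps$.

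\textbf{Main obstacle.} The hardest step is converting the entropic (KL) guarantee into a bound strong enough for rejection sampling, since the parallel depth of rejection scales with the worst-case \emph{density ratio} between target and proposal rather than with their KL distance. I expect to need either a chi-squared or log-Sobolev-style strengthening of entropic independence, or a careful truncation argument that pays in TV rather than in density ratio. A further subtlety is that the bound must hold uniformly over the random history $T$, which requires entropic independence to be preserved under every conditioning $\mu\mapsto\mu_T$ that the algorithm encounters; for the determinantal families of interest this follows from closure properties of their generating polynomials but must be invoked explicitly. The remaining pieces---composing errors across batches, scheduling parallel rejection within the processor budget, and the polylogarithmic overhead from determinant-based oracle queries---are routine.
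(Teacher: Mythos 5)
Your algorithmic skeleton coincides with the paper's: batch the sequential sampling-to-counting reduction into rounds of size roughly $k^{\frac12-c}$, propose each batch by independent draws from the current conditional $1$-marginals, and correct via a truncated rejection step whose failure is charged to total variation. However, the step you label the ``main obstacle'' is the entire technical content of the theorem, and your proposal leaves it unresolved: stating that you ``expect to need either a chi-squared or log-Sobolev-style strengthening of entropic independence, or a careful truncation argument'' does not establish that, on a set of conditional mass $1-O(\eps)$, the likelihood ratio between the true $\ell$-wise marginal and the product proposal is bounded by $(n/\eps)^{O(1/c)}$ --- and without that bound neither the machine count nor the $\Otilde{1}$ depth per batch follows. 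The paper obtains this without any strengthened hypothesis, by (i) a subdivision (isotropic) transformation reducing to nearly uniform marginals while preserving entropic independence (\cref{def:isotropic-transformation}, \cref{prop:near-isotropic}); (ii) a comparison between KL and exponentiated Renyi-type moments valid for near-uniform proposals (\cref{lem:klrenyi}), applied to the per-step conditional KL bound of \cref{lem:klcondmarg}; and (iii) a submartingale/Markov concentration argument (\cref{lem:submartingale,lem:mostomegagood}) run over the certificate set $\tOmega_\varepsilon$ of batches all of whose prefixes are well-represented, which simultaneously controls the prefix-probability term in the conditional KL bound and delivers the high-probability polynomial bound on the density ratio fed into the modified rejection sampler (\cref{alg:modified rejection sampling}). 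None of these ingredients, nor substitutes for them, appear in your plan, so the claimed per-batch acceptance guarantee is unproven.

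There is also a quantitative error in your divergence heuristic: entropic independence does not ``tensorize'' to a KL of order $B/k$ between the $B$-wise marginal and the product of $1$-marginals. Applying entropic independence to each conditional $\mu(\cdot\mid S_t)$ along the batch and telescoping via the chain rule gives order $B^2/k$ (up to a $\log\frac nk$ factor), as in \cref{lem:klmumup}. Your claim that the divergence vanishes whenever $B\ll k$ would permit nearly linear batch sizes, which is already false for symmetric DPPs by the birthday-paradox example (a kernel built from duplicated basis vectors), where any batch of size $\gg\sqrt k$ is rejected with overwhelming probability. Your eventual choice $B=k^{\frac12-c}$ lies in the correct regime, but the stated justification is wrong; moreover, even the correct average KL bound is insufficient on its own, because the algorithm must succeed in every one of the polynomially many rounds, which is precisely why the high-probability concentration machinery above is needed rather than an on-average (Pinsker-type) statement.
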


As a corollary, we get improved parallel sampling for various classes of determinantal point processes: symmetric, non-symmetric, and partition-constrained. All of these distributions are known to be entropically independent \cite{AASV21,AJKPV21}. As an additional result, we show that by using different graph-separator-based techniques, a similar quadratic parallel speedup can be obtained for sampling planar perfect matchings, the only distribution in the challenge set of \cite{anari2020sampling} not covered by \cref{thm:main}; see \cref{sec:planar} for details.

\begin{remark}[Beyond determinantal distributions]
Various notions of high-dimensional expansion, and in particular, entropic independence, have recently resulted in breakthroughs in the analysis of Markov chains and sequential sampling algorithms \cite{AJKPV21,AJKPV21b}, but as far as we know, this is the first work to relate these notions to parallel sampling. Here we explore the applications of \cref{thm:main} to distributions related to determinants because the counting oracle for them can be implemented in \Class{RNC}. However, we suspect \cref{thm:main} can have applications beyond determinantal distributions in the future. As an example, for distributions whose partition functions do not have roots in certain regions of the complex plane, \textcite{barvinok2018approximating} devised efficient deterministic approximate counting algorithms, which have been refined by subsequent works \cite{patel2017deterministic}. These counting algorithms can be parallelized in many settings, as they involve enumerating a polynomial/quasi-polynomial number of small, logarithmic-sized, combinatorial substructures. Recent works \cite{AASV21,chen2021spectral} have shown that the absence of roots in the complex plane implies certain forms of high-dimensional expansion, including entropic independence \cite{AJKPV21}.
\end{remark}

\subsection{Determinantal distributions}

This work considers applications of \cref{thm:main} to various distributions $\mu$ defined based on determinants. Prior progress on designing parallel samplers for problems that enjoy determinant-based counting has been limited. \Textcite{teng1995independent} showed how to simulate random walks on a graph in parallel, which combined with the classic algorithm of \textcite{Aldous90,Broder89} yielded \Class{RNC} samplers for uniformly random spanning trees in a graph. \Textcite{anari2020sampling} extended this to sampling arborescences (directed spanning trees) in directed graphs. In this work, we tackle a much larger class of problems that enjoy determinant-based counting, namely variants of determinantal point processes.

Determinantal point processes (DPPs) have found many applications, such as data summarization \cite{gong2014largemargin,LB12}, recommender systems \cite{GartPK16,Wilhelm18}, neural network compression \cite{MS15}, kernel approximation \cite{LiJS16}, multi-modal output generation \cite{Elfeki19}, and randomized numerical linear algebra \cite{DM21}. Formally, a DPP on a set of items $[n]=\set{1,\dots,n}$ is a probability distribution over subsets $Y\subseteq [n]$ defined via an $n\times n$ matrix $L$ where probabilities are given (proportionally) by principal minors:
$\P{Y}\propto \det(L_{Y, Y})$. The partition function of such a distribution is simply $\det(L+I)$, which can be efficiently computed in parallel.

Note that for the distribution to be well-defined, all principal minors of $L$ have to be $\geq 0$. For symmetric $L$, that is $L = L^\intercal$, this is equivalent to $L$ being positive semi-definite (PSD). Symmetric DPPs, where $L = L^\intercal\succeq 0$ have received the most attention in the literature.
\begin{definition}[Symmetric DPP]
    Given a symmetric $n\times n$ matrix $L\succeq 0$, the symmetric DPP defined by $L$ is the probability distribution over subsets $Y\subseteq [n]$, where
    $ \P{Y}\propto \det(L_{Y, Y})$.
\end{definition}

Beyond (symmetric) determinantal point processes, our work provides sampling algorithms for a variety of other discrete distributions related to determinants: non-symmetric and partition-constrained DPPs, as well as planar perfect matchings. Next, we will outline each family of such distributions.

Recently, \cite{BWLGCG18,Gartrell2019LearningND,gartrell2020scalable} initiated the study of non-symmetric DPPs in applications and argued for their use because of increased modeling power. Symmetric DPPs necessarily exhibit strong forms of negative dependence \cite{BBL09}, which are unrealistic in some applications; on the other hand, non-symmetric DPPs can have positive correlations. Non-symmetric DPPs are characterized by a non-symmetric positive-definite matrix $L$, i.e., a matrix $L$ where $L+ L^\intercal \succeq 0$.

\begin{definition}\label{def:nPSD}
	A matrix $L\in \R^{n\times n}$ is \textit{non-symmetric positive semidefinite} (nPSD) if $L + L^\intercal \succeq 0$.  
\end{definition}

\begin{definition}[Non-symmetric DPP]
    Given an nPSD $n\times n$ matrix $L$, the non-symmetric DPP defined by it is the probability distribution over subsets $Y\subseteq [n]$ given by
    $\P{Y}\propto \det(L_{Y, Y})$.
\end{definition}

A related and more commonly used model related to DPPs is a $k$-DPP, where we constrain the cardinality of the sampled set $Y$ to be exactly $k$. In many applications, restricting to sets of a predetermined size is more desirable \cite{KT12}.

\begin{definition}[$k$-DPP]
    Given a PSD or nPSD matrix $L$, the $k$-DPP defined by it is the distribution of the corresponding determinantal point process restricted to only $k$-sized sets.
\end{definition}

A natural generalization of simple cardinality constraints is partition constraints \cite{KD16}. Partition constraints arise naturally when there is an inherent labeling or grouping of the ground set items that is not captured by the DPP kernel itself. Concretely, suppose that the ground set $[n]$ is partitioned into disjoint sets $V_1 \cup V_2 \cup \dots \cup V_r$, and we want to produce a subset $S$ with $c_1$ items from $V_1$, $c_2$ items from $V_2$ and so on. We define Partition-DPP as the corresponding conditioning of the DPP under these constraints on $S$. \Textcite{KD16} established that efficiently sampling and counting from Partition-DPPs is possible when the number of constraints $r$ is $O(1)$ and that counting is \Class{\#P}-hard without such restrictions on $r$ -- it includes, as a special case, the problem of computing mixed discriminants. Here, we will only study Partition-DPPs where the ensemble matrix $L$ is symmetric PSD and the number of constraints is $O(1)$. \Textcite{AASV21} showed that these distributions are entropically independent and that local Markov chains can be used to sample (sequentially) from these Partition-DPPs.
 \begin{definition}[Partition-DPP]
    Given a symmetric $n\times n$ matrix $L\succeq 0$ and a partitioning of $[n]=V_1\cup V_2\cup\cdots\cup V_r$ into $r=O(1)$ partitions together with $c_1,\dots,c_r\in \Z_{\geq 0}$, the Partition-DPP is the distribution of the DPP defined by $L$ restricted to sets $S$ that have $\card{S\cap V_i}=c_i$ for all $i$.
 \end{definition}

In this work, we establish as corollaries of \cref{thm:main}, a roughly quadratic parallel speedup in sampling from all aforementioned distributions. A crucial part of our algorithm relies on the existence of highly parallel counting oracles for these models. For example, for unconstrained DPPs, the partition function can be written as $\sum_{S} \det(L_{S, S})=\det(L+I)$, and this can be computed in \Class{NC} \cite{csanky1975fast}. For $k$-DPPs and Partition-DPPs, the partition function can be computed via polynomial interpolation \cite{KD16}, which is again highly parallelizable (by, e.g., solving linear systems of equations involving Vandermonde matrices). The entropic independence of all the aforementioned determinantal distributions discussed was established by prior works \cite{AASV21,AJKPV21}.

\begin{theorem}[Sampling from non-symmetric DPPs]\label{thm:nsdpp}
Let $L$ be a $n \times n$ non-symmetric PSD matrix, $\eps>0$, and $k \in [n]$.
\begin{enumerate}
    \item Let $\mu_k: \binom{[n]}{k} \to \R_{\geq 0}$ be the $k$-DPP defined by $L.$ For any constant $c > 0$, there exists an algorithm to approximately sample from within $\eps$ total variation distance of $\mu_k$ in $\Otilde*{\sqrt k(\frac k \eps)^c}$ parallel time using $(n/\eps)^{O(1/c)}$ machines.
    \item Let  $\mu: 2^{[n]} \to \R_{\geq 0}$ be the DPP defined by $L.$ For any constant $c > 0$, there exists an algorithm to approximately sample from within $\eps$ total variation distance of $\mu$ in $\Otilde*{\sqrt n(\frac n \eps)^c}$ parallel time using $(n/\eps)^{O(1/c)}$ machines.
\end{enumerate}
\end{theorem}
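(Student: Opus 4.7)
Both claims will follow by invoking \cref{thm:main} on appropriately chosen entropically independent distributions, so the task reduces to two ingredients: (i) confirming entropic independence of the target distribution, and (ii) implementing its counting oracle in \Class{NC}.

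For part (1), the plan is to apply \cref{thm:main} directly to $\mu_k$. Entropic independence of the non-symmetric $k$-DPP defined by an nPSD matrix has already been established in \cite{AJKPV21,AASV21}, which settles ingredient (i). For ingredient (ii), a counting query on $T \subseteq [n]$ asks for $\sum_{S \supseteq T,\, |S|=k} \det(L_{S,S})$, and I would extract this quantity as a coefficient of a bivariate polynomial. Let $L(z)$ denote the matrix obtained from $L$ by multiplying every row indexed by $i \in T$ by $z$; by multilinearity of the determinant in rows,
\[
\det\bigl(L(z)_{S,S}\bigr) \;=\; z^{|S \cap T|}\det(L_{S,S}),
\]
so the desired count equals $[z^{|T|} w^k]\det(I + w L(z))$, because $|S \cap T| = |T|$ is equivalent to $T \subseteq S$. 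Each evaluation of $\det(I + wL(z))$ at a numerical $(z,w)$ is a single determinant, hence in \Class{NC} by \cite{csanky1975fast}; I would perform these evaluations in parallel on an $O(n) \times O(n)$ grid of interpolation points and solve two Vandermonde systems to extract the coefficient. This gives the counting oracle in \Class{NC}, so \cref{thm:main} yields the claimed parallel time and processor count.

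For part (2), I would reduce the unconstrained DPP $\mu$ to a mixture over $k$-DPPs. The marginal law of $K = |S|$ under $\mu$ satisfies $\P{K=k} \propto e_k(L) := \sum_{|S|=k}\det(L_{S,S})$, and the sequence $(e_k(L))_{k=0}^n$ consists of the coefficients of $\det(I + zL)$, which I recover by parallel polynomial interpolation exactly as above. First sample $K$ in \Class{NC}; then, conditioned on $K = k$, draw from $\mu_k$ via part (1) in $\Otilde{\sqrt{k}(k/\eps)^c} \le \Otilde{\sqrt{n}(n/\eps)^c}$ parallel time. Splitting the total-variation budget $\eps$ evenly across the two stages gives the stated error.

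The main obstacle is conceptual rather than algorithmic: the real content sits in ingredient (i), since entropic independence for non-symmetric $k$-DPPs is far from evident from the determinantal definition and must be imported from \cite{AJKPV21,AASV21}. Ingredient (ii) and the cardinality-sampling step are otherwise routine applications of parallel linear algebra, with the polylogarithmic overhead of determinant evaluation and interpolation absorbed into $\Otilde{\cdot}$.
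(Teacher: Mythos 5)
Your proposal is correct and matches the paper's route: the paper also derives this theorem by feeding entropic independence (via fractional log-concavity of nonsymmetric DPPs from \cite{AASV21,AJKPV21}, \cref{lem:alpha FLC,lem:FLC to entropic-independence}) into the main meta-theorem (\cref{thm:mainei}), with marginals/counting implemented by determinant evaluations plus parallel polynomial interpolation (\cref{prop:fast compute marginal}), and it likewise handles the unconstrained DPP by first sampling the cardinality and then invoking the $k$-DPP sampler. Your bivariate-interpolation implementation of the counting oracle is a minor cosmetic variant of the paper's generating-polynomial argument, so the approaches are essentially identical.
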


\begin{theorem}[Sampling from Partition-DPPs]\label{thm:pcdpp}
Let $L$ be a $n \times n$ symmetric PSD matrix. Let $r = O(1)$, and let $V_1\cup \dots \cup V_r = [n]$ be a partition of $[n]$ together with integers $t_1,\dots,t_r$. Let $k = \sum_{i \in [r]} t_i$. Let $\mu_{L; V,t}: 2^{[n]} \to \R_{\geq 0}$ be the DPP with partition constraints defined by
\[\mu_{L; V,t}(S) \propto \det(L_{S,S})\cdot \prod_{i=1}^r \1[\card{S \cap V_i} = t_i].\]
For any constant $c > 0$, there exists an algorithm to approximately sample from within $\eps$ total variation distance of $\mu_{L; V, t}$ in $\Otilde*{\sqrt k(\frac k \eps)^c}$ parallel time using $(n/\eps)^{O(1/c)}$ machines.
\end{theorem}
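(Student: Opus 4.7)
The plan is to derive \cref{thm:pcdpp} by verifying the hypotheses of \cref{thm:main} for $\mu := \mu_{L;V,t}$. Since $\mu$ is supported only on sets of size $k = \sum_i t_i$, it may be regarded as a measure on $\binom{[n]}{k}$, as \cref{thm:main} requires. Entropic independence of Partition-DPPs with $r = O(1)$ parts is established in \cite{AASV21}, so the only remaining task is to implement a polylogarithmic-depth, polynomial-work counting oracle: on input $T \subseteq [n]$, return
\[Q(T) := \sum\set*{\det(L_{S,S}) \given S \in \tbinom{[n]}{k},\, S\supseteq T,\, \card{S\cap V_i} = t_i \text{ for all } i}.\]

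I would build $Q(T)$ in two steps. \emph{First}, remove the superset constraint $S\supseteq T$ via the Schur complement. Assuming $L_{T,T}$ is invertible, writing $S = T \sqcup S'$ with $S' \subseteq \bar T := [n] \setminus T$ gives
\[\det(L_{S,S}) = \det(L_{T,T}) \cdot \det(M_{S',S'}), \qquad M := L_{\bar T, \bar T} - L_{\bar T, T} L_{T,T}^{-1} L_{T, \bar T}.\]
Thus $Q(T)$ equals $\det(L_{T,T})$ times an unconditioned partition-constrained sum of principal minors of $M$, over the induced partition of $\bar T$ with revised targets $t_i' := t_i - \card{T \cap V_i}$; we may assume each $t_i' \geq 0$, else $Q(T)=0$ trivially.

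\emph{Second}, evaluate the unconditioned sum via the generating polynomial trick, as in \cite{KD16}. Let $Z(z_1,\dots,z_r)$ be the diagonal matrix on $\bar T$ with $Z_{jj} = z_i$ whenever $j \in V_i \cap \bar T$. The identity $\det(I + A) = \sum_{S'} \det(A_{S',S'})$ applied to $A = ZM$ gives
\[\det(I + ZM) = \sum_{S' \subseteq \bar T} \det(M_{S',S'}) \prod_{i=1}^r z_i^{\card{S' \cap V_i}},\]
a polynomial of individual degree at most $n$ in the $r = O(1)$ variables $z_1,\dots,z_r$. The target sum is its coefficient of $\prod_i z_i^{t_i'}$. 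Evaluating $\det(I+ZM)$ on a product grid of $(n+1)^r = \mathrm{poly}(n)$ integer points takes $\mathrm{poly}(n)$ parallel \Class{NC} determinant computations \cite{csanky1975fast}, and extracting the coefficient by tensor-product Vandermonde inversion is again in \Class{NC}. Substituting this oracle into \cref{thm:main} yields the claimed time-processor bound.

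The main subtlety is that $L_{T,T}$ can be singular, breaking the Schur-complement step. I would handle this by applying the whole reduction to $L + \eta I$ for a parameter $\eta$: since $L \succeq 0$, $(L + \eta I)_{T,T}$ is invertible for every $\eta > 0$, and the resulting oracle output is a polynomial in $\eta$ of degree at most $k \leq n$. One extra univariate interpolation over $k+1$ values of $\eta$ followed by evaluation at $\eta = 0$ recovers $Q(T)$, adding only a polynomial factor to the work and preserving polylogarithmic depth.
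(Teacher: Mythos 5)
Your proposal is correct and takes essentially the same route as the paper: the theorem is derived by feeding the entropic independence of Partition-DPPs (from \cite{AASV21}, which via fractional log-concavity also covers all conditional distributions, as the formal \cref{thm:mainei} requires) together with an \Class{NC} marginal/counting oracle into \cref{thm:main}, and the paper realizes that oracle in the same way you do, by evaluating the $r$-variable generating polynomial through determinants of the form $\det(L + \diag(z_i)_{i=1}^n)$ and recovering the needed coefficient by interpolation. Your Schur-complement reduction with $\det(I+ZM)$ and the $\eta$-perturbation for singular $L_{T,T}$ are just a more explicit implementation of the same oracle (and for PSD $L$ a singular $L_{T,T}$ already forces $Q(T)=0$, so the perturbation is not even needed).
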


In the case of symmetric DPPs and symmetric $k$-DPPs, we improve \cref{thm:main} to obtain a parallel runtime of $\Otilde{\sqrt{k}}$. Our algorithms have a small chance $\delta$ of failure but, conditioned on success, they sample exactly from the desired distribution; this is desirable, as we can repeat the algorithm in the case of failure, to eventually obtain an exact sample from the distribution. \Cref{thm:sym DPP} is proven in \cref{sec:symmetric-dpp}.

\begin{theorem}[Sampling from symmetric DPPs] \label{thm:sym DPP}
 Let $L$ be a $n \times n$ symmetric PSD matrix, $k \in [n]$, and $\delta \in (0, 1)$.
 
\begin{enumerate} 
    \item Let $\mu_k: \binom{[n]}{k} \to \R_{\geq 0}$ be the $k$-DPP defined by $L.$ There exists an algorithm that with probability $\ge 1-\delta$, exactly samples from $\mu_k$ in $\Otilde{\sqrt{k}}$ parallel time using $\poly(n) \cdot  \log \frac k \delta$ machines.
    \item Let  $\mu: 2^{[n]} \to \R_{\geq 0}$ be the DPP defined by $L.$  There exists an algorithm, that with probability $\ge 1-\delta$, exactly samples from $\mu$ in $\Otilde{\sqrt{n}}$ parallel time using $\poly(n) \cdot \log \frac n \delta$ machines.
\end{enumerate}
\end{theorem}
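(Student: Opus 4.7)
The plan is to upgrade the batched sampling scheme underlying \cref{thm:main} to an exact sampler by exploiting two features special to the symmetric case: symmetric DPPs satisfy entropic independence with the optimal constant $1$, and both conditional marginals and joint batch probabilities of symmetric DPPs admit closed-form determinantal expressions computable in \Class{NC}. This removes the approximation slack (the $(k/\eps)^c$ factor in \cref{thm:main}) by letting us drive the rejection step exactly.

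The second part of the theorem reduces to the first. For a symmetric PSD matrix $L$ with eigenvalues $\lam_1, \ldots, \lam_n$, the cardinality of a sample from the DPP defined by $L$ is distributed as $\sum_{i=1}^n \mathrm{Bern}(\lam_i/(1+\lam_i))$. Diagonalizing $L$ in $\widetilde{O}(1)$ parallel time using standard \Class{NC} linear algebra and sampling this size $K \le n$ in constant additional time reduces the full-DPP problem to $K$-DPP sampling; this contributes only $\widetilde{O}(1)$ on top of the $\widetilde{O}(\sqrt{K}) \le \widetilde{O}(\sqrt{n})$ cost of the $K$-DPP sampler.

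For the $k$-DPP, the plan is to run $\widetilde{O}(\sqrt{k})$ rounds of the batched scheme. In each round, with currently selected set $T \subseteq [n]$, we (a) compute in one parallel pass of the counting oracle the conditional singleton marginals $\hmu_T(i) := \Pr_\mu[i \in S \mid T \subseteq S]$ for all $i \notin T$; (b) form the product measure $\muiso$ with these marginals on $\set{0,1}^{[n]\setminus T}$; (c) draw a proposal batch $B \sim \muiso$; and (d) accept $B$ (and set $T \leftarrow T \cup B$) with probability proportional to the exact Radon--Nikodym derivative of the true batched conditional law of $\mu$ with respect to $\muiso$, evaluated at $B$. For symmetric $L$, both densities in step (d) are ratios of principal minors of $L$ and thus evaluable in \Class{NC} without any approximation, so step (d) is \emph{exact}: conditioned on no round exhausting its retry budget, the output is distributed exactly as $\mu_k$.

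The main technical obstacle is bounding the per-round acceptance probability from below by a constant. This is where the optimal entropic independence constant of $1$ for symmetric DPPs enters crucially: by the chain rule of entropy applied along the down/exchange walk, for a batch of expected size $b = \widetilde{O}(\sqrt{k})$, the KL divergence between the true batched conditional law and $\muiso$ is $\widetilde{O}(1)$ in expectation. A standard truncation then yields $\Omega(1)$ acceptance probability after discarding a small-probability bad event. To drive the overall failure probability to $\delta$, we run $O(\log(k/\delta))$ independent rejection attempts per round in parallel, which accounts for the $\log(k/\delta)$ factor in the machine count; a union bound over the $\widetilde{O}(\sqrt{k})$ rounds gives overall success probability $\ge 1-\delta$, with the output exact by construction.
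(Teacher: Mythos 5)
There is a genuine gap at the heart of your argument: the mechanism you invoke to control the acceptance probability cannot deliver an \emph{exact} sampler. A KL-divergence bound between the true batched conditional law and the product proposal, obtained via entropic independence and the chain rule, is an average-case statement; to turn it into a usable acceptance rule you yourself resort to ``a standard truncation \ldots after discarding a small-probability bad event.'' But rejection sampling is only exact when the likelihood ratio is bounded by the constant $C$ \emph{pointwise} on the entire support of the proposal. Once you truncate (or clip acceptance probabilities at $1$), the accepted output is distributed as the target restricted to the good set, which is only close to $\mu_k$ in total variation --- this is precisely the approximate regime of \cref{alg:modified rejection sampling} and \cref{thm:mainei}, and it is exactly why the general entropically independent result carries the $(k/\eps)^c$ slack and an approximation error. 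Your claim that ``conditioned on no round exhausting its retry budget, the output is distributed exactly as $\mu_k$'' therefore does not follow from the ingredients you use. The ingredient that actually makes the symmetric case exact is not the entropic independence constant but \emph{negative correlation}: since symmetric DPPs, $k$-DPPs, and all their conditionings are strongly Rayleigh (\cref{lem:dpp real stable}), one has the worst-case inequality $\P_{S\sim\mu}{T\subseteq S}\le\prod_{i\in T}\P_{S\sim\mu}{i\in S}$, which gives the pointwise bound $\mu_t(T)\big/\bigl(t!\prod_{i\in T}\tfrac{p_i}{k}\bigr)\le\exp(t^2/k)=O(1)$ for batch size $t=\lceil\sqrt{k}\rceil$ (\cref{lem:sym DPP rejection sampling}). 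With that uniform bound, plain rejection sampling (\cref{alg:rejection sampling}) is exact with acceptance probability $\Omega(1)$, and $O(\log\frac{k}{\delta})$ parallel copies per round plus a union bound over the $O(\sqrt{k})$ rounds give the stated guarantee.

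A secondary issue is your proposal distribution: you form a Bernoulli product measure on $\{0,1\}^{[n]\setminus T}$ with the conditional marginals $\P_{\mu}{i\in S\mid T\subseteq S}$, which sum to $k-\abs{T}$, so a draw from it has expected size $\approx k-\abs{T}$, not $\widetilde{O}(\sqrt{k})$; with batches that large the likelihood ratio is exponentially large and the acceptance probability collapses (this is the Birthday-Paradox barrier discussed in the introduction). The paper instead draws exactly $t=\lceil\sqrt{k_i}\rceil$ i.i.d.\ elements from the normalized marginals $\tfrac{1}{k}p$ (viewed as ordered tuples) and rejection-samples against $\mu_t=\mu D_{k\to t}$, which is what makes the $\exp(t^2/k)=O(1)$ constant applicable. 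Your reduction of part~2 to part~1 by first sampling the cardinality is fine and matches the paper.
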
 

\Tag{We are also able to refine our results about DPPs so that the runtime is expressed in terms of typical sizes of the sets $S$ in the support, as measured by eigenvalues or traces of the matrix $L$. We leave the details to \cref{sec:symdppbound}, where we prove \cref{thm:sym DPP with bounded eigenvalue}.}

Finally, another distribution whose partition function can be computed via determinants is uniform perfect matchings in planar graphs. These distributions, alongside determinantal point processes, were raised as challenge distributions for parallel sampling \cite{anari2020sampling}. Unfortunately, planar perfect matchings are not entropically independent, and we cannot apply \cref{thm:main} to them. Nevertheless, using alternative techniques based on graph separators, we manage to obtain a similar quadratic speedup in sampling from them.
\begin{theorem}\label{thm:planar}
Let graph $G = (V,E)$ be planar and $ \mu$ be the uniform distribution over perfect matchings of $G.$ There exists an algorithm that samples from $\mu$ in $\Otilde{\sqrt{n}}$ parallel time using $\poly(n)$ machines.
\end{theorem}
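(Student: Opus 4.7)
\medskip

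\noindent\textbf{Proof proposal for \cref{thm:planar}.}
The plan is to replace entropic independence with the classical Lipton--Tarjan planar separator theorem, and then recurse. Since planar graphs admit $O(\sqrt m)$-sized balanced separators that split a graph of $m$ vertices into pieces of size at most $\tfrac{2}{3}m$, and since the partition function for perfect matchings of a planar graph is computable in \Class{NC} via the Kasteleyn/FKT Pfaffian (and stays planar under vertex deletion), we have both a parallel counting oracle and a divide-and-conquer structure matching the $\sqrt{n}$ target.

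Concretely, given a planar graph $G$ with $m$ vertices, I would first compute in \Class{NC} a balanced planar separator $S\subseteq V(G)$ of size $O(\sqrt m)$; a parallel algorithm for this is standard (e.g.\ Gazit--Miller). The key observation is that if we condition a uniformly random perfect matching $M$ on the set of edges of $M$ incident to $S$, the remaining matching on $G\setminus (S\cup N_M(S))$ factorizes as a product of uniform perfect matchings on each connected component of that residual graph, since each component lies strictly inside one side of the separator and is itself planar. So it suffices to (i) sample the ``boundary matching'' that determines, for every $v\in S$, which neighbor $w$ is its partner under $M$, and then (ii) recurse in parallel on each remaining component.

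For step (i), I would use the standard sequential sampling-to-counting reduction, but only over the $O(\sqrt m)$ vertices of $S$: process the vertices of $S$ in some order, and for the current vertex $v$, compute in parallel, for each candidate neighbor $w$, the number of perfect matchings of the current (planar) graph that match $v$ to $w$; sample $w$ proportionally, then delete $\{v,w\}$ from the graph. Each such step is a single round of parallel FKT counts, so the entire boundary-matching phase takes $\widetilde O(\sqrt m)$ parallel time on $\poly(n)$ processors. For step (ii), recurse in parallel on each resulting planar subgraph, which has at most $\tfrac{2}{3}m$ vertices. The recursion has depth $O(\log n)$, and the total parallel time satisfies the recurrence
\[
T(m) \;\le\; T(\tfrac{2}{3}m) + \widetilde O(\sqrt m),
\]
which geometrically sums to $T(n) = \widetilde O(\sqrt n)$. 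Total work is $\poly(n)$, since at each recursion level the subproblems are vertex-disjoint and each counting oracle call is a single $\poly(n)$-work determinant/Pfaffian.

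The main obstacle, and the one place the proof actually needs care, is justifying the conditional factorization claim: after removing $S$ together with its $M$-partners, the distribution over the residual matching is exactly the product of uniform perfect matchings on the connected components of the residual graph, and each component is still planar. The first point follows because conditioning a uniform distribution on any event preserves uniformity on the support, and the support of the residual matching factors over components since no edge of $G$ crosses between them once $S$ is gone. The second point follows from the fact that planarity is closed under vertex deletion. Everything else---parallel separator computation, parallel FKT, and the recursion analysis---is standard, and the polylogarithmic overheads of those subroutines are what the $\widetilde O$ notation absorbs.
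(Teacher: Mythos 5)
Your proposal is correct and follows essentially the same route as the paper: compute an $O(\sqrt n)$ planar separator in parallel, sequentially sample the matched edges of the separator vertices using the parallel Kasteleyn/FKT counting oracle (each step one round of \Class{NC} counting), then recurse in parallel on the vertex-disjoint planar components, with the $O(\log n)$ recursion depth and per-level $\widetilde O(\sqrt m)$ cost giving $\widetilde O(\sqrt n)$ depth and $\poly(n)$ work. Your factorization justification and the recurrence $T(m)\le T(\tfrac{2}{3}m)+\widetilde O(\sqrt m)$ match the paper's argument (the paper bounds each of the $O(\log n)$ levels by $O(\sqrt n)$, which is marginally cruder but yields the same $\widetilde O(\sqrt n)$ bound).
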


\subsection{Techniques and algorithms}

Throughout, we heavily use the fact that for all distributions $\mu$ that we study in this paper, the marginals $\P_{S\sim \mu}{i\in S}$ can be computed in \Class{NC}, and that the distributions $\mu$ are self-reducible---by conditioning on element inclusion, we obtain another distribution in the same family of DPP variants or planar perfect matchings on a smaller graph. These two properties alone form the basis of the most classic, inherently sequential, algorithm for sampling described below.
\begin{Algorithm*}
\For{$i=1,\ldots,k$}{
    Compute the marginals of $\mu$ conditioned on elements $x_1,\dots, x_{i-1}$.\;
    Sample an element outside $x_1,\dots,x_{i-1}$ with probability proportional to the computed marginals. Call the sampled element $x_i$.\;
}
\Return{$\set{x_1,\dots,x_k}.$}\;
\end{Algorithm*}

Our main idea is to use rejection sampling to batch the iterations of this algorithm. Roughly speaking, we compute marginals of $\mu$ and sample a batch of elements $x_1,\dots, x_\l$ i.i.d.\ from these marginals. We then use rejection sampling to accept or reject the batch to make sure any set $\set{x_1,\dots,x_\l}$ is selected with probability given by the $\l$-order marginals $\propto \P_{S\sim \mu}{\{x_1,\dots,x_\l\}\subseteq S}$. Once we have a batch of elements successfully accepted, we continue sampling the next batch from the distribution conditioned on including this batch. A high-level description of this algorithm can be seen in \cref{alg:batched sample}. Our innovation is to implement the batch sampling step highlighted via (*) by i.i.d.\ sampling from marginals and performing a correction based on rejection sampling.

\begin{Algorithm}[ht!]
        \KwIn{$\mu: \binom{[n]}{k}\to \R_{\geq 0} $}
        $k_0 \leftarrow k$\;
        $\mu^{(0)}\leftarrow \mu$\;
		\For{$i=0,1,\ldots, 2\sqrt{k}$}{
			(*): Sample $ T_i\sim \mu^{(i)}$ with $\abs{T_i} = \ceil{\sqrt{k_i}}$\;
			Update $\mu^{(i+1)} \leftarrow \mu^{(i)} (\cdot  \mid T_i)$\;
			Update $k_{i+1} = k_i- \abs{T_i}$\;
		}
		\Return{$T:=\bigcup_i T_i.$}
		\caption{Batched sampling} \label{alg:batched sample}
	\end{Algorithm}
	
The sizes of the batches we sample dictate the parallel runtime of this algorithm. Even for symmetric DPPs, there is a natural barrier at batch size $\l \simeq \sqrt{k}$. Consider $L$ to be the Gram matrix of vectors $\set{e_1,e_1,\dots, e_k, e_k}$ (where every standard basis vector $e_i \in \R^n$ is repeated twice, so $k = \frac n 2$). The marginals of this DPP are uniform, but because of the Birthday Paradox, any sample of $\gg \sqrt{k}$ elements contains a pair of identical vectors with high probability, resulting in a joint marginal of $0$, i.e., an overwhelming probability of rejection. Hence, we must set the batch size $\l \lesssim \sqrt{k}$ to have a non-negligible acceptance probability.

\Tag{A significant difficulty that arises for DPP variants beyond symmetric DPPs is the lack of negative dependence. Roughly speaking, due to the lack of negative dependence, the acceptance probabilities used in symmetric DPPs for rejection sampling have to be scaled down in other cases by a factor of $\simeq 2^\l$; otherwise, we would sometimes have to accept with probability $>1$. See \cref{sec:hardex} for a detailed example of where this phenomenon can be observed. This scaling of $2^\l$ means the overall acceptance probability can be at most $2^{-\l}$, which forces the number of machines we use to sample, in parallel, possible batches for one iteration to scale up by a factor of $2^\l$. This limits $\l$ to be only logarithmic, and the parallel runtime would not be improved beyond logarithmic factors.}

\Tag<sigconf>{Unfortunately, beyond symmetric DPPs, rejection sampling only works up to batches of size $\Otilde{1}$. An example demonstrating this can be found in the full version of this paper.}
We overcome this challenge by replacing rejection sampling with approximate rejection sampling, where we allow the acceptance probabilities to go above $1$ on a small subset of the event space, and if we see any batch of this kind we declare the algorithm has failed. Our main insight is that such bad batches of elements must consist of large groups of highly correlated elements. On the other hand, we quantify limits on correlations in our distributions of interest by obtaining novel consequences of entropic independence \cite{AJKPV21}. Intuitively, we prove that correlations in any entropically independent distribution must be limited to small groups of elements, and a batch of $\simeq k^{\frac{1}{2}-c}$ elements will likely contain no more than one element from each correlated group. Formalizing this intuition, we prove \cref{thm:nsdpp,thm:pcdpp} in \cref{sec:ei}. \Tag{We give an example showing this sub-polynomial overhead in the parallel depth may be necessary for our batched rejection sampling approach in \cref{sec:hardex}.}

Finally, for planar perfect matchings, we use a completely different approach. We use planar separators to break the graph into smaller pieces; the pieces can then be processed in parallel, once we fix the part of the perfect matching touching the separator. 

\subsection{Further related work}

The works \cite{teng1995independent, anari2020sampling} study the problems of parallel sampling spanning trees and arborescences from graphs. While spanning trees are special cases of DPPs, these works parallelize an algorithm of \textcite{Aldous90,Broder89}, which is very specific to spanning trees, and does not have a known counterpart for general DPPs or other distributions studied in this work.

There have been several approaches to parallel sampling in the literature. Notably, \textcite{FHY21}, and more recently \textcite{liu2021simple}, showed how to efficiently parallelize a popular class of sampling algorithms based on Markov chains and obtain nearly optimal parallelism for several graphical models such as the hardcore, Ising, and proper coloring models in certain regimes. While these results manage to parallelize certain types of Markov chains, they do not apply to distributions studied in this work. A prerequisite for these parallelization techniques to work is the existence of a Markov chain with single-site updates, that is changing one coordinate in each move assuming the distribution is supported on a product space, that mixes in nearly-linear time. No such Markov chain is known for our application distributions.

Finally, subsequent to our results, \Tag<sigconf>{it was recently communicated to us that} \cite{parallel-sl} have obtained \Class{RNC} sampling algorithms for distributions that satisfy a stronger condition than entropic independence and using a stronger oracle than the simple counting oracle: an oracle that can compute for any given $\lambda\in \R_{>0}^n$, the value $\sum_{S}\mu(S)\prod_{i\in S}\lambda_i$. While their general result is not comparable to our main result, \cref{thm:main}, it does as a corollary give an \Class{RNC} sampling algorithm for Partition-DPPs. For nonsymmetric DPPs and planar perfect matchings, as far as we know, \Class{RNC} sampling is still open.

\Tag{
	\subsection*{Acknowledgment}
	This work was supported by an NSF CAREER Award CCF-2045354 and a Sloan Research Fellowship. Thuy-Duong Vuong was supported by a Microsoft Research Ph.D.\ Fellowship. Callum Burgess was supported by the Stanford CURIS program.
}
	\section{Overview}

In this section, we give a technical outline of our proofs and a roadmap for the rest of the paper.

\textit{Symmetric DPPs.} In \cref{sec:symmetric-dpp}, we prove our most basic result, \cref{thm:sym DPP} (sampling symmetric DPPs), as an introduction to our techniques and to demonstrate how we apply \cref{alg:batched sample}. Conveniently, symmetric DPPs exhibit strong negative dependence properties which the rest of our distributions do not. To prove \cref{thm:sym DPP}, we directly bound the acceptance probability of \cref{alg:batched sample} for batch size $\ell \simeq \sqrt{k}$. We show that directly applying negative dependence bounds the acceptance probability by $\exp(-\frac{\ell^2} k)$. \Tag{In \cref{sec:symdppbound} we obtain refined results for DPPs whose size is not constrained (i.e., not $k$-DPPs), but whose kernel matrix satisfies various forms of spectral boundedness, all related to the typical size of sampled sets.}

\textit{Entropic independence.} In \cref{sec:ei}, we provide a meta-result (\cref{thm:mainei}, a formal restatement of \cref{thm:main}) used to derive \cref{thm:nsdpp,thm:pcdpp} as corollaries. \Cref{thm:mainei} shows that for \emph{any} entropically independent distribution over subsets of size $k$, we can reduce sampling to marginal computations with  $\Otilde{k^{\frac{1}{2} + c}}$ parallel depth overhead, with a high probability of success. Here, $c$ is any constant, parameterizing the (polynomial) number of machines used.

To prove \cref{thm:mainei}, we use the entropic independence property to derive concentration bounds on the acceptance probability in \cref{alg:modified rejection sampling}.  As a first step towards this goal, we use entropic independence to demonstrate that up to parallel depth $\ell \approx k^{\frac{1}{2} - c}$, the Kullback-Leibler (KL) divergence between our target distribution (the order-$\ell$-marginals) and our proposal distribution (the product distribution on $1$-marginals) is bounded. This KL divergence bound does not suffice for our overall scheme; intuitively, it provides an ``average case'' bound on the log-acceptance probability of \cref{alg:modified rejection sampling}, whereas we would like to have a high probability bound since we need to union bound over at least $\sqrt{k}$ stages of rejection sampling. To simplify our concentration argument, we begin by assuming w.l.o.g.\ that our distribution has roughly uniform $1$-marginals, by using a subdivision process, similar to the ones used in \cite{AD20, ADVY21}. We then use comparison inequalities between KL divergences and (exponentiated) Renyi divergences for nearly-uniform distributions to bound moments associated with our rejection sampler's acceptance probabilities. Finally, we use these moment bounds to show that over a high-probability set of outcomes (in the sense of \cref{alg:modified rejection sampling}), the log of acceptance probability is a submartingale, yielding concentration via Markov's inequality.

\Tag{
\textit{Hard instance.} It is natural to ask: Can we improve the small additional overhead of $(k/\epsilon)^c$ in \cref{thm:mainei} (and hence, \cref{thm:nsdpp,thm:pcdpp}) to a smaller overhead, e.g.\ polylogarithmic? In \cref{sec:hardex}, we give a hard example showing that the subpolynomial overhead may be inherent to rejection sampling strategies, at least in the full generality of entropically independent distributions.
}

\textit{Planar graph perfect matchings.} In \cref{sec:planar}, we consider the problem of parallel sampling of perfect matchings from a planar graph (whose counts can be written as a determinant \cite{Kas67}). Sampling planar perfect matchings was raised as a challenge in \cite{anari2020sampling}; while our approach in \cref{sec:planar} departs from \cref{thm:mainei}, it still results in a quadratic speedup over na\"ive sequential sampling. We attain this speedup, \cref{thm:planar}, by leveraging parallel implementations of the planar separator theorem. \Tag{By sequentially sampling the portion of the matching adjacent to the vertices in a planar separator of size $O(\sqrt n)$, and recursing on the (geometrically smaller) disconnected components, we obtain a roughly-quadratic speedup for this sampling problem as well.}
    \section{Preliminaries}\label{sec:prelims}

In this section, we provide preliminaries for the rest of the paper.

We use $[n]$ to denote the set $\set{1,\dots,n}$. For a set $S$, $\binom{S}{k}$ denotes the family of subsets of size $k$. For a distribution $\mu: 2^{[n]} \to \R_{\geq 0}$ and $T \subseteq [n]$, define $\mu(\cdot \mid T)$ to be the distribution on $2 ^{[n] \setminus T} $ defined by $\mu(F \mid T) \propto \mu(F \cup T).$ We will sometimes use the shorthand $\mu_{\mid T}$.

For a measure or density function $\mu:\binom{[n]}{k}\to\R_{\geq 0}$, the generating polynomial of $\mu$ is the multivariate homogeneous polynomial defined as $g_\mu(z_1,\dots,z_n)=\sum_{S\in \binom{[n]}{k}} \mu(S)\prod_{i\in S} z_i$.

\subsection{Divergences}
Let $q$, $p$ be distributions over the same finite ground set $[n]$. We define the KL divergence and, for $\lam \ge 1$, the $\lam$-divergence between $q$ and $p$ as follows:
\begin{align*}\DKL{q \river p} &\coloneq \E*_q{\log \frac q p} = \sum_{i \in [n]} q_i \log \parens*{\frac{q_i}{p_i}}, \\
\D_\lambda{q \river p} &\coloneq \E*_p{\parens*{\frac q p}^\lambda} = \sum_{i \in [n]} q_i^\lambda p_i^{1 - \lambda}.
\end{align*}
We remark that our definition of $\D_\lambda{}$ is (up to a constant scalar multiplication) the exponential of the standard Renyi divergence of order $\lambda$. The KL and Renyi divergences exhibit the following useful bound\Tag{ whose proof we include for completeness in \cref{sec:proofs}}.
\begin{lemma}\label{lem:klrenyi}
Let $q, p$ be distributions over $[n].$ Suppose for some $ C \geq 1$ and $S \subseteq [n]$: $p_i \leq \frac C n$ for all $i \in [n]$ and $p_i \geq \frac{1}{C n}$ for all $i\in S.$
Then, for any $\lam \ge 1$, if $S=[n]$
\[\D_{\lam} {q \river p} \leq C^{\lam-1} \parens*{1+ n^{\lam-1} \lam(\lam-1) (\DKL{q \river p}  + \log C)}. \]
More generally,
\[\sum_{i\in S} q_i \parens*{\frac{q_i}{p_i}}^{\lam-1} \leq  C^{\lam-1} \parens*{1+ n^{\lam-1} \lam(\lam-1) (\DKL{q \river p}  + \log C)}. \]
\end{lemma}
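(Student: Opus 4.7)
The plan is to reduce everything to the uniform distribution $u$ on $[n]$ (so $u_i = 1/n$) and prove a clean Renyi-KL comparison against that reference. The two hypotheses of the lemma are exactly $p \le Cu$ pointwise and $p_i \ge u_i/C$ for $i \in S$, which give two elementary ``transfer'' inequalities. First, $\DKL{q \river u} \le \DKL{q \river p} + \log C$, since $\sum_i q_i \log(p_i/u_i) \le \log C$. Second, for $i \in S$ one has $(q_i/p_i)^{\lambda-1} \le C^{\lambda-1}(q_i/u_i)^{\lambda-1}$, so $\sum_{i \in S} q_i(q_i/p_i)^{\lambda-1} \le C^{\lambda-1} \D_\lambda{q \river u}$. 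After these moves, the lemma reduces to the single inequality $\D_\lambda{q \river u} \le 1 + \lambda(\lambda-1)\, n^{\lambda-1}\, \DKL{q \river u}$.

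The main ingredient for this uniform-reference comparison is the pointwise Bregman-type bound
\[ X^\lambda - 1 - \lambda(X-1) \;\le\; \lambda(\lambda-1)\, M^{\lambda-1}\, (X\log X - X + 1), \qquad 0 \le X \le M,\ \lambda \ge 1,\ M \ge 1. \]
Both sides and their first derivatives vanish at $X = 1$, so by Taylor's theorem with integral remainder it suffices to compare second derivatives: $(X^\lambda - 1 - \lambda(X-1))'' = \lambda(\lambda-1) X^{\lambda-2}$ and $(X\log X - X + 1)'' = 1/X$, whose ratio $\lambda(\lambda-1) X^{\lambda-1}$ is at most $\lambda(\lambda-1) M^{\lambda-1}$ on $[0, M]$. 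Integrating twice against the matching boundary data at $X = 1$ yields the stated inequality (splitting into $X \ge 1$ and $X \le 1$ for the sign of the integrand).

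To finish, I apply this pointwise bound with $X_i := q_i/u_i = nq_i \le n$ (so one may take $M = n$) and average against $u$. The linear term vanishes because $\mathbb{E}_u[X] = 1$, and $\mathbb{E}_u[X\log X] = \DKL{q \river u}$, giving $\D_\lambda{q \river u} \le 1 + \lambda(\lambda-1)\, n^{\lambda-1}\, \DKL{q \river u}$. Combining this with the two transfer inequalities above gives
\[ \sum_{i \in S} q_i (q_i/p_i)^{\lambda-1} \;\le\; C^{\lambda-1}\bigl(1 + \lambda(\lambda-1)\, n^{\lambda-1}\, (\DKL{q \river p} + \log C)\bigr), \]
which is the general form; the first display of the lemma is the $S = [n]$ specialization, where the left-hand side equals $\D_\lambda{q \river p}$. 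I expect the only step requiring real thought to be the pointwise inequality: one has to choose the ``excess''-type representations of the $\lambda$-th power and of $X\log X$ so that boundary data at $X=1$ cancels and a single second-derivative ratio estimate closes the argument. The transfer inequalities and the final substitution are mechanical bookkeeping.
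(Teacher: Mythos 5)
Your proposal is correct and takes essentially the same route as the paper: both proofs reduce, via the same two $C$-transfer inequalities, to the uniform-reference bound $\D_{\lam}{q \river u} \leq 1 + \lam(\lam-1)\,n^{\lam-1}\,\DKL{q \river u}$ (where $u$ is uniform on $[n]$), and your pointwise comparison of $X^{\lam}-1-\lam(X-1)$ with $\lam(\lam-1)M^{\lam-1}(X\log X-X+1)$, anchored at $X=1$ and averaged against $u$, becomes after the substitution $X=nq_i$ exactly the paper's concavity-plus-Jensen argument for its auxiliary function $f$. A minor point where your write-up is tighter: restricting the sum to $S$ \emph{before} invoking $p_i \geq \frac{1}{Cn}$ is the right way to treat $S \neq [n]$, whereas the paper's final display, read literally, bounds the restricted sum by the full $\D_{\lam}{q \river p}$, which would need the lower bound on $p_i$ for every $i \in [n]$.
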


\subsection{Determinantal point processes}\label{ssec:dppdef}
A DPP on $n$ items defines a probability distribution over subsets $Y \subseteq [n].$ It is parameterized by a matrix $L \in \R^{
n\times n}$: $\P_{L}{Y } \propto \det(L_Y ),$ where $L_Y$ is the principal submatrix whose columns and rows are indexed by $Y.$ We call $L$ the ensemble matrix. 
We define the marginal kernel $K$ of $\mathbb{P}_{L}$ by
\begin{equation} \label{eq:L to K}
    K = L (I+L)^{-1} = I - (I + L)^{-1} = (L^{-1} + I)^{-1}. 
\end{equation}
Then, $\det(K_A) = \P_L{A \subseteq Y}$ (a proof can be found in \cite{KuleszaT12}). This also implies $K \preceq I$ for symmetric $K.$ Conversely,
\begin{equation} \label{eq:K to L}
    L = K(I- K)^{-1} 
    = (I-K)^{-1}-I = (K^{-1}- I)^{-1}.
\end{equation}

Given a cardinality constraint $k$, the $k$-DPP parameterized by $L$ is a distribution over subsets $Y$ of size $k$, defined by $\mathbb{P}_{L}^k[Y] =\frac{ \det(L_Y) } {\sum_{\abs{Y'} =k } \det (L_{Y'})} . $ To ensure that $\mathbb{P}_L$ defines a probability distribution, all principal minors of $L$ must be non-negative: $\det(L_S ) \geq 0$. Any nonsymmetric (or symmetric) PSD matrix automatically has nonnegative principal minors \citep[Lemma 1]{Gartrell2019LearningND}.

Consider a matrix $L \in \R^{n\times n}$, partition $V_1\cup \dots \cup V_r = [n]$ of $[n],$ and tuple $\{c_i\}_{i=1}^r$ of integers.
The DPP with partition constraint (Partition-DPP) $\mu_{L;V,c}: 2^{[n]} \to \R_{\geq 0}$ is  defined by
\[\mu_{L; V,c}(S) \propto\1[\forall i: \abs{S \cap V_i} = c_i] \det(L_{S,S})\]

For any $Y \subseteq [n]$, if we condition the distribution $\mathbb{P}_{L}$ ($\mathbb{P}_{L}^k$ resp.) on the event that items in $Y$ are included in the sample, we still get a DPP ($(k-\abs{Y})$-DPP resp.); the new ensemble matrix is given by the Schur complement $L^Y = L_{\tilde{Y}} - L_{\tilde{Y}, Y} L_{Y, Y}^{-1} L_{Y, \tilde{Y}} $ where $\tilde{Y} = [n] \setminus Y.$ 

For Partition-DPPs, a similar statement holds. Conditioning $\mu_{L;V,c}$ on $Y$ being included in the set results in a Partition-DPP $\mu_{L^Y;V',c'}$ with ensemble matrix $L^Y$ and partition $V'_1 \cup \dots \cup V'_r = [n] \setminus Y$ with $V'_i = V_i \setminus Y,$ and $ c'_i = c_i - \abs{V_i \cap Y}$.

\Tag{We defer the proof of the following computational facts about DPPs to \cref{sec:proofs}.}
\begin{proposition}\label{prop:fast compute marginal}
 Suppose $\mu$ is one of the following distributions.
\begin{enumerate}
    \item k-DPP: $\mu(S) \propto \1[\abs{Y} = k] \det(L_S)$.
    \item DPP: $\mu(S) \propto \det(L_S)$.
    \item Partition-DPP: $\mu_{V,c}(S) \propto\1[\forall i\in [r]: \abs{S \cap V_i} = c_i] \det(L_{S})$ with $r = O(1)$.
\end{enumerate}
There are algorithms that perform the following tasks in $\Otilde{1}$-parallel time using $\poly(n)$ machines.
\begin{enumerate}
    \item Given $S\subseteq T\subseteq [n],$  exactly compute $\P_{X\sim \mu}{T\subseteq X \given S\subseteq X}$.
    \item Given $S \subseteq [n]$ and $t\in [n],$ exactly compute $\P_{T \sim \mu} {\abs{T} = t}.$
\end{enumerate}
\end{proposition}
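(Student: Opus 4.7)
The approach is to reduce both tasks to evaluating determinants of principal submatrices and to extracting coefficients of explicit polynomial generating functions; both ingredients lie in \Class{NC} via Csanky's algorithm \cite{csanky1975fast} and parallel Vandermonde interpolation.

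For an unconstrained DPP with kernel $L$, conditioning on $S\subseteq X$ yields another DPP on $[n]\setminus S$ with ensemble given by the Schur complement $L^S$, computable in $\Otilde{1}$ parallel depth via parallel matrix inversion. The conditional inclusion probability then reads
\[\P{T\subseteq X \mid S\subseteq X} = \det(K^S_{T\setminus S}),\]
where $K^S = L^S(I+L^S)^{-1}$ is the corresponding marginal kernel. The conditional size distribution $\P{|X|=t \mid S\subseteq X}$ equals the coefficient of $z^{t-|S|}$ in $\det(I+zL^S)/\det(I+L^S)$; all such coefficients can be extracted simultaneously by evaluating the univariate polynomial $\det(I+zL^S)$ at $n+1$ distinct points in parallel and solving a single $(n+1)\times(n+1)$ Vandermonde system, all in $\Otilde{1}$ depth.

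For $k$-DPPs, the partition function $Z_k(L)=\sum_{|Y|=k}\det(L_Y)$ is the coefficient of $z^k$ in $\det(I+zL)$, recovered by the same univariate interpolation. Since conditioning a $k$-DPP on $S\subseteq X$ yields a $(k-|S|)$-DPP with kernel $L^S$, both tasks reduce to ratios of such $Z_{\cdot}(L^S)$ values (combined, in the inclusion task, with the single minor $\det((L^S)_{T\setminus S})$). For Partition-DPPs with $r=O(1)$ parts, we use the multivariate analogue: introduce formal variables $z_1,\dots,z_r$ and the diagonal matrix $D(z)$ with $D(z)_{jj}=z_i$ whenever $j\in V_i$. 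Then
\[\det(I + D(z)L) = \sum_{Y\subseteq[n]} \det(L_Y) \prod_{i=1}^{r} z_i^{|Y\cap V_i|},\]
so the desired partition function is the coefficient of $\prod_i z_i^{c_i}$. We evaluate this determinant on a product grid of $\prod_i(|V_i|+1)\le(n+1)^r = \poly(n)$ points in parallel and invert the tensor-product Vandermonde system, which decomposes as $r$ iterated univariate Vandermonde solves. Conditioning on $S$ produces another Partition-DPP on $[n]\setminus S$ with kernel $L^S$, parts $V_i\setminus S$, and counts $c_i-|V_i\cap S|$, so both tasks once again reduce to partition function computations on the conditioned instance.

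The main obstacle is verifying that the multivariate interpolation in the Partition-DPP case fits into $\Otilde{1}$ parallel depth with $\poly(n)$ machines. The key point is the tensor-product structure: with $r=O(1)$, the interpolation decomposes into a constant number of univariate Vandermonde inversions over grids of size at most $n+1$, each of which is in \Class{NC}, so the total depth grows by only a constant factor. All remaining primitives---parallel matrix inversion for Schur complements, determinant evaluation via Csanky, and univariate interpolation---are classically in \Class{NC}, so the overall parallel depth is $\Otilde{1}$.
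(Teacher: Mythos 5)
Your proposal is correct and takes essentially the same route as the paper: both reduce everything to evaluating the determinantal generating function $\det(L+\diag(z))$ (equivalently $\det(I+D(z)L)$) at polynomially many points in parallel, recovering the $O(1)$-variable coefficients by Vandermonde interpolation in \Class{NC}, and handling conditioning on $S$ via the Schur-complement ensemble $L^S$. The only cosmetic difference is that the paper treats DPPs and $k$-DPPs uniformly as Partition-DPPs with $0$ or $1$ constraints and cites Celis et al.\ for the reduction to coefficient extraction, whereas you spell out the multivariate identity and tensor-product interpolation case by case.
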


\Tag{
Well-known concentration inequalities on DPPs imply that the size of a typical set is tightly concentrated around its mean. We defer the proof of the following to \cref{sec:proofs}.
\begin{lemma} \label{cor:concentration of size}
Let $\mu: 2^{[n]} \to \R_{\geq 0}$ be a strongly Rayleigh distribution. Suppose $\E_{S\sim \mu} {\abs{S}}\leq \sqrt{n} .$
 For $\epsilon \in (0,\frac 1 4) ,$ there exists an absolute constant $c > 0$ such that
 \[\P*_{S\sim \mu}{\abs{S} \geq c\sqrt{n\log \frac{1}{\epsilon} }} \leq \epsilon \]
 and
 \[\P*_{S\sim \mu}{\abs{S} \geq c\E_{S\sim \mu} {\card{S}} \log \frac{1}{\epsilon} } \leq \epsilon \]
\end{lemma}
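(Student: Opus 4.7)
The plan is to reduce the claim to a classical Chernoff tail bound for a sum of independent Bernoulli random variables, using the real-stability characterization of strongly Rayleigh measures. Because $\mu$ is strongly Rayleigh, its generating polynomial $g_\mu(z_1, \ldots, z_n)$ is real stable; specializing $z_1 = \cdots = z_n = z$ preserves real stability, and a univariate real-stable polynomial with non-negative coefficients is real-rooted with non-positive roots. Hence $\sum_k \P{\abs{S} = k}\, z^k$ factors as $\prod_{i=1}^n (p_i z + (1 - p_i))$ for some $p_i \in [0, 1]$, which is exactly the probability-generating function of a sum $X = X_1 + \cdots + X_n$ of independent Bernoullis with $\P{X_i = 1} = p_i$. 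In particular $\sum_i p_i = \E_{S\sim\mu}{\abs{S}} =: E$, and $\abs{S} = X$ in distribution, so I may work with independent Bernoullis from here on.

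With this identification, both tail bounds reduce to the multiplicative Chernoff inequality $\P{X \geq t} \leq (eE/t)^t$, valid for $t \geq eE$ and obtained by optimizing the moment bound $\E{e^{\lambda X}} \leq \exp(E(e^\lambda - 1))$ in $\lambda > 0$. For the first inequality I plug $t = c\sqrt{n \log(1/\epsilon)}$ together with $E \leq \sqrt{n}$: the logarithm of the Chernoff bound is at most $-c\sqrt{n\log(1/\epsilon)} \cdot \log(c\sqrt{\log(1/\epsilon)}/e)$, which is dominated by $\log \epsilon$ for any sufficiently large absolute constant $c$ whenever $\log(1/\epsilon) \leq n$; in the complementary regime $\log(1/\epsilon) > n$, the threshold already exceeds $n \geq \abs{S}$ deterministically, so the tail probability is zero. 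For the second inequality, plug $t = cE \log(1/\epsilon)$ into the same bound: the base becomes $e/(c\log(1/\epsilon)) \leq e^{-1}$ for $c \geq e^2$ and $\epsilon \leq 1/4$, so the tail is at most $\exp(-cE \log(1/\epsilon)) = \epsilon^{cE} \leq \epsilon$ as soon as $cE \geq 1$.

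The main subtlety is the edge case where $E$ is so small that the threshold $cE\log(1/\epsilon)$ falls below one: then the event $\{\abs{S} \geq cE\log(1/\epsilon)\}$ collapses to $\{\abs{S} \geq 1\}$, whose probability is at most $E$ by the union bound on the Bernoulli decomposition, and this is bounded by $\epsilon$ once $E \leq \epsilon$; the remaining narrow window is absorbed by taking $c$ large. This is the only delicate step; everything else is elementary Chernoff bookkeeping after the classical real-stability reduction from strongly Rayleigh distributions to sums of independent Bernoullis (due to Borcea, Brändén, and Liggett), which is the one nontrivial ingredient I would invoke as a black box.
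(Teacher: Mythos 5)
Your route is genuinely different from the paper's. The paper proves this lemma by invoking the Pemantle--Peres concentration inequality for $1$-Lipschitz functionals of strongly Rayleigh measures, applied to $f(S) = \abs{S}$, and then simply plugs in the two deviation levels $a = 10\sqrt{n \log\frac 1 \epsilon}$ and $a = \E_{S\sim\mu}{\abs{S}}\log\frac 1 \epsilon$. You instead use the Borcea--Br\"and\'en--Liggett fact that $\abs{S}$ under a strongly Rayleigh measure is distributed as a sum of independent Bernoullis, and run a multiplicative Chernoff bound. Both black boxes are legitimate, and your treatment of the \emph{first} inequality is complete and correct---including the case split $\log\frac 1 \epsilon > n$, where the threshold exceeds $n$ and the event is empty; the paper's one-line verification quietly relies on the same trivialization in that regime.

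For the \emph{second} inequality, however, your edge-case handling has a genuine gap. Writing $E = \E_{S\sim\mu}{\abs{S}}$, your Chernoff computation yields $\epsilon^{cE}$, which is at most $\epsilon$ only when $cE \ge 1$, and your fallback (collapsing the event to $\{\abs{S}\ge 1\}$ and bounding by $E$) only covers the case where the threshold $cE\log\frac 1 \epsilon$ is below $1$ \emph{and} $E \le \epsilon$. The remaining window is not ``absorbed by taking $c$ large'': take $\mu$ to be the product of $n$ independent Bernoullis with total mean $E \approx \frac{1}{c\log(1/\epsilon)}$ (strongly Rayleigh, and $E \le \sqrt n$). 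Then the threshold $cE\log\frac 1 \epsilon$ is about $1$, so the event is $\{\abs{S}\ge 1\}$, whose probability is about $E \approx \frac{1}{c\log(1/\epsilon)}$, and this exceeds $\epsilon$ once $\epsilon$ is small, no matter how large the absolute constant $c$ is. So the second bound genuinely needs a lower bound on $E$ (say, $E$ at least an absolute constant), which is exactly the regime where your Chernoff argument already works. To be fair, the paper's own proof shares this blind spot: with $a = E\log\frac 1 \epsilon$, the Pemantle--Peres exponent is of order $E\log\frac 1 \epsilon$, which beats $\log\frac 3 \epsilon$ only when $E$ exceeds an absolute constant, so the defect lies as much in the lemma statement as in your strategy; but your claim that enlarging $c$ closes the window is incorrect and should be replaced by an explicit assumption (or case analysis) on the size of $E$.
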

}

\Tag{
\begin{remark} \label{cor:reduce DPP to k-DPP}
To sample from a DPP $\mu: 2^{[n]} \to \R_{\geq 0},$ we can first in constant parallel-time compute the distribution $\mathcal{H}$ on $[n]$ defined by $\P_{\mathcal{H}}{k} :=\P_{S \sim \mu} {\abs{S} = k}$ and sample the cardinality of the set $k $ from $\mathcal{H}$, and then sample $S$ from $\mu_k$ using the results of this paper. 
\end{remark}
}

%
%

Determinantal point processes (the symmetric kind) and their conditionings belong to a class of probability distributions called strongly Rayleigh \cite[see][ for definition]{BBL09}. Strongly Rayleigh distributions satisfy a useful property called negative correlation.
\begin{lemma}[Negative correlation] \label{lem:real stable negative correlation}
Suppose $\mu: 2^{[n]} \to \R_{\geq 0}$ is strongly Rayleigh. For any set $T$,
$\P_{S \sim \mu}{T \subseteq S} \leq  \prod_{i\in T} \P_{S\sim \mu} {i\in S}.$
\end{lemma}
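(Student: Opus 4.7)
The plan is to induct on $\card{T}$ using two standard properties of strongly Rayleigh measures, both consequences of the real-stability of the generating polynomial $g_\mu$. The first is pairwise negative correlation, namely $\P_{\mu}{i,j \in S} \le \P_{\mu}{i \in S} \cdot \P_{\mu}{j \in S}$; this is the $\card{T}=2$ case of the lemma, and it can be extracted from real stability of $g_\mu$ by specializing all variables outside $\set{i,j}$ to $1$ and invoking the classical stability-based inequality (Borcea--Brändén--Liggett). The second property is closure under conditioning on inclusion: if $\mu$ is strongly Rayleigh, then so is $\mu(\cdot \mid i\in S)$ on the ground set $[n]\setminus\set{i}$. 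This follows because the generating polynomial of $\mu(\cdot \mid i\in S)$ is, up to the scalar $\P_{\mu}{i\in S}$, equal to $\partial_{z_i} g_\mu(z)\big|_{z_i=0}$, and real stability is preserved under both partial differentiation and setting a variable to a real number.

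Given these two ingredients, the proof is a short chain-rule argument. Enumerate $T = \set{i_1,\dots,i_\ell}$ in any order and write
\[\P_{\mu}{T\subseteq S} \;=\; \prod_{j=1}^{\ell} \P_{\mu}{i_j\in S\mid i_1,\dots,i_{j-1}\in S}.\]
It therefore suffices to show $\P_{\mu}{i_j \in S \mid i_1,\dots,i_{j-1} \in S}\le \P_{\mu}{i_j\in S}$ for each $j$. Let $\mu_s \coloneq \mu(\cdot \mid i_1,\dots,i_s\in S)$; by iterating the closure property, each $\mu_s$ is strongly Rayleigh, so applying pairwise negative correlation to $\mu_s$ gives
\[\P_{\mu_{s+1}}{i_j\in S} \;=\; \P_{\mu_s}{i_j \in S \mid i_{s+1}\in S} \;\le\; \P_{\mu_s}{i_j\in S}.\]
Telescoping over $s=0,\dots,j-2$ yields $\P_{\mu_{j-1}}{i_j\in S} \le \P_{\mu_0}{i_j\in S} = \P_{\mu}{i_j\in S}$, completing the induction.

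The main obstacle, insofar as there is one, is the invocation of the real-stability toolkit underlying the two properties above; these are classical results which I would cite rather than reprove in line. No delicate computation is required once they are in hand.
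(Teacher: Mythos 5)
Your argument is correct, but there is nothing in the paper to compare it against: the paper states this lemma without proof, treating it as a known consequence of the strongly Rayleigh property (citing the Borcea--Br\"and\'en--Liggett theory, where strongly Rayleigh measures are shown to be negatively associated, which subsumes this statement). Your derivation---chain-rule factorization of $\Pr[T\subseteq S]$ into conditional marginals, plus the two classical facts that strongly Rayleigh measures are pairwise negatively correlated and are closed under conditioning on inclusion of an element (since real stability survives $\partial_{z_i}$ and real substitutions)---is a standard and valid way to recover the multi-element inequality from the pairwise one, so it is a legitimate self-contained route where the paper simply cites. The only loose end is degenerate conditioning: if $\Pr[i_1,\dots,i_{s}\in S]=0$ for some $s$, the conditional measures $\mu_s$ are undefined, but then $\Pr[T\subseteq S]=0$ and the inequality holds trivially, so a one-line remark disposes of this case.
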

\begin{lemma} \label{lem:dpp real stable}
 Let $L$ be a symmetric PSD matrix. The following distributions are strongly Rayleigh.
 
\begin{enumerate}
    \item $\mu_k: \binom{[n]}{k} \to \R_{\geq 0}$, the $k$-DPP defined by $L.$  
    \item $\mu: 2^{[n]} \to \R_{\geq 0}$, the DPP defined by $L.$ 
\end{enumerate}
\end{lemma}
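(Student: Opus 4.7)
The plan is to establish strong Rayleigh-ness by showing that the generating polynomial of each distribution is real stable, which is the defining characterization of the class. I will handle the DPP first and then reduce the $k$-DPP case to it via conditioning.

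For the DPP defined by $L \succeq 0$, the first step is the standard determinantal identity
\[
g_\mu(z_1,\dots,z_n) \;=\; \sum_{S\subseteq[n]} \det(L_{S,S})\prod_{i\in S} z_i \;=\; \det(I + \diag(z)\, L),
\]
which is just the principal-minor expansion of the characteristic polynomial of $\diag(z) L$. Writing $L = BB^\intercal$ (possible since $L\succeq 0$) and applying the Sylvester identity $\det(I + XY) = \det(I + YX)$, this equals
\[
\det\bigl(I + B^\intercal \diag(z) B\bigr) \;=\; \det\Bigl(I + \sum_{i=1}^n z_i\, b_i b_i^\intercal\Bigr),
\]
where $b_i$ is the $i$-th column of $B^\intercal$. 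Each $b_i b_i^\intercal$ is PSD, so by the classical theorem of Borcea--Brändén that $\det(A_0 + \sum_i z_i A_i)$ is real stable whenever $A_0 \succeq 0$ and all $A_i \succeq 0$, the polynomial $g_\mu$ is real stable. Since it also has nonnegative coefficients, $\mu$ is strongly Rayleigh by definition.

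For the $k$-DPP, note that $\mu_k$ is precisely $\mu$ conditioned on the cardinality event $\{|S| = k\}$, and its generating polynomial is the degree-$k$ homogeneous component of $g_\mu$. The class of strongly Rayleigh distributions is well-known to be closed under conditioning on the size of the sampled set (equivalently, the degree-$k$ homogeneous part of a real stable multi-affine polynomial with nonnegative coefficients is again real stable); this is the cardinality-conditioning closure property from Borcea--Brändén--Liggett. Applying this closure to $\mu$ immediately gives that $\mu_k$ is strongly Rayleigh.

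The only mildly subtle step is the second part: verifying that extracting the degree-$k$ homogeneous component preserves real stability. The cleanest route is simply to invoke the Borcea--Brändén--Liggett closure theorem as a black box; an alternative would be to substitute $z_i \mapsto t z_i$ and argue via a polarization/contraction identity that extracts the coefficient of $t^k$ while preserving stability, but the black-box route is much shorter and is the standard move in this literature. Everything else is routine linear algebra and known closure facts.
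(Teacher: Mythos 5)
Your proof is correct. The paper never actually proves this lemma---it is invoked as a known fact, with the surrounding text pointing to Borcea--Br\"and\'en--Liggett---and your argument is precisely the standard derivation behind that citation: the identity $g_\mu(z)=\det(I+\diag(z)L)=\det\bigl(I+\sum_i z_i b_ib_i^\intercal\bigr)$ plus Borcea--Br\"and\'en determinantal stability handles the DPP, and the Borcea--Br\"and\'en--Liggett closure of strongly Rayleigh measures under truncation to a single cardinality handles the $k$-DPP. The one step you rightly flag as subtle, extracting the degree-$k$ homogeneous part, is exactly that truncation theorem (truncation to an interval of consecutive cardinalities of length at most one), so invoking it as a black box is both legitimate and the standard move; no gap remains.
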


\Tag{
\begin{corollary} \label{cor:hadamard ineq}
Let $K \in \R^{n \times n}$ satisfy $0\preceq K \preceq I$. For any set $T \subseteq [n]$, $\det(K_T) \leq \prod_{i\in T} K_{i,i}$.
\end{corollary}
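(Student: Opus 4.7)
The plan is to interpret $K$ as the marginal kernel of a symmetric DPP and then invoke the negative correlation property of strongly Rayleigh distributions.

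First, I would handle the generic case where $K \prec I$ strictly. In this case, $I - K$ is invertible, and I can define $L = K(I-K)^{-1}$, which by \eqref{eq:K to L} is symmetric PSD (being a function of $K$ via commuting operations, applied eigenvalue-wise to nonnegative numbers in $[0, 1)$). Let $\mu$ be the DPP defined by $L$. The standard identity from Section~\ref{ssec:dppdef} gives, for every $A \subseteq [n]$,
\[
    \P_{S \sim \mu}\bracks*{A \subseteq S} = \det(K_A).
\]
In particular, the singleton marginals are $\P_{S \sim \mu}\bracks*{i \in S} = K_{i,i}$. Applying \cref{lem:dpp real stable} to see that $\mu$ is strongly Rayleigh, followed by \cref{lem:real stable negative correlation} with the chosen $T$, yields
\[
    \det(K_T) = \P_{S \sim \mu}\bracks*{T \subseteq S} \leq \prod_{i \in T} \P_{S \sim \mu}\bracks*{i \in S} = \prod_{i \in T} K_{i,i},
\]
which is the desired inequality.

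To remove the strictness assumption and cover the boundary case when $K$ has eigenvalues equal to $1$, I would use a continuity argument: for each $\eps \in (0,1)$, the matrix $(1-\eps)K$ satisfies $0 \preceq (1-\eps)K \prec I$, so the previous step gives $\det((1-\eps)K_T) \leq \prod_{i \in T} (1-\eps) K_{i,i}$. Since both sides are continuous in $\eps$, letting $\eps \to 0^+$ recovers the inequality for the original $K$.

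The only substantive step is the reduction to a DPP, and the main obstacle is simply checking that the boundary case $K \preceq I$ (non-strict) does not cause issues; this is resolved cleanly by the continuity argument above. All the heavy lifting is done by \cref{lem:dpp real stable} and \cref{lem:real stable negative correlation}, which are already stated.
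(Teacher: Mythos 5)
Your proposal is correct and follows essentially the same route as the paper: interpret $K$ as the marginal kernel of a symmetric DPP, then apply \cref{lem:dpp real stable} and the negative correlation property \cref{lem:real stable negative correlation} to get $\det(K_T) = \P_{S\sim\mu}{T\subseteq S} \le \prod_{i\in T} K_{i,i}$. Your extra care in constructing the ensemble matrix $L = K(I-K)^{-1}$ and handling the boundary case $K \preceq I$ by scaling with $1-\eps$ (where in fact the factors $(1-\eps)^{\card{T}}$ cancel on both sides, so the limit is immediate) is a valid tightening of a detail the paper's one-line proof leaves implicit.
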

\begin{proof}
Consider the DPP $\mu$ with kernel matrix $K.$ Apply \cref{lem:real stable negative correlation} and note that  $\det(K_T) = \P_{S\sim \mu}{T \subseteq S}$ and $K_{i,i} = \P_{S \sim \mu}{i\in S}.$
\end{proof}
}

\subsection{Entropic independence}
We recall the notions of fractional log-concavity \cite{AASV21} and entropic independence \cite{AJKPV21}.
\begin{definition}[{\cite{AASV21}}]\label{def:fractional-log-concavity}
	A probability distribution $\mu: \binom{[n]}{k} \to \mathbb{R}_{\geq 0}$ is $\alpha$\emph{-fractionally-log-concave} if $\log g_\mu(z_1^{\alpha}, \ldots, z_n^{\alpha})$ is concave over $\R_{>0}^n.$
	For $\alpha = 1$, we say $\mu$ is log-concave.
\end{definition}

To define entropic independence we need the definition of the ``down'' operator. In brief, $D_{k \to \ell}$ transitions from a set $S$ of size $k$ to a uniformly random subset of size $\l$.

\begin{definition}[Down operator]
	For $\l\leq k$ define the row-stochastic matrix $D_{k\to \l}\in \R_{\geq 0}^{\binom{[n]}{k}\times \binom{[n]}{\l}}$ by
	\[ D_{k\to \l}(S, T)=\1[T\subseteq S]\cdot \frac{1}{\binom{k}{\l}}.\]
\end{definition}

Note that for a distribution $\mu$ on size-$k$ sets, $\mu D_{k\to \l}$ will be a distribution on size-$\l$ sets. In particular, $\mu D_{k\to 1}$ will be the vector of normalized marginals of $\mu$: $\{\frac 1 k\P{i\in S}\}_{i\in [n]}$. We will use the following shorthand:
\begin{definition}
	For $\mu:\binom{[n]}{k}\to\R_{\geq 0}$, we use $\mu_\l$ to denote $\mu D_{k\to \l}$.
\end{definition}

\begin{definition}
    \label{def:entropic-independence}
A probability distribution $\mu$ on $\binom{[n]}{k}$ is said to be $\frac 1 \alpha$-entropically independent, for $\alpha \in (0,1]$, if for all probability distributions $\nu$ on $\binom{[n]}{k}$,
\[ \DKL{\nu_1\river\mu_1}=\DKL{\nu D_{k\to 1} \river \mu D_{k\to 1}}\leq \frac{1}{\alpha k}\DKL{\nu \river \mu}.  \]
\end{definition}

\begin{lemma}[\cite{AJKPV21}, Theorem 4] \label{lem:FLC to entropic-independence}
If $\mu$ is $\alpha$-FLC then $\mu$ and all conditional distributions of $\mu$, i.e.\ $\mu( \cdot \mid S)$ for any $S \subseteq [n]$, are $\frac 1 \alpha$-entropically independent.
\end{lemma}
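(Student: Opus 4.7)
The plan is to follow the strategy of \textcite{AJKPV21}: reduce to tilted (external-field) distributions via an I-projection argument, then translate the concavity of $z \mapsto \log g_\mu(z_1^\alpha, \ldots, z_n^\alpha)$ into the desired entropy contraction under $D_{k \to 1}$. The conditional statement reduces to the unconditional one because the generating polynomial $g_{\mu(\cdot \mid S)}$ is obtained from $g_\mu$ by partial differentiation in the variables $z_i$ for $i \in S$ and then setting those variables to zero, an operation shown in \textcite{AASV21} to preserve $\alpha$-FLC; so I focus on the unconditional claim below.

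Given any $\nu$ on $\binom{[n]}{k}$, let $\mu^\lambda(S) \propto \mu(S) \prod_{i \in S} \lambda_i$ be the external-field tilt whose first-order marginals match those of $\nu$, i.e.\ $\mu^\lambda_1 = \nu_1$. Such $\lambda \in \R_{>0}^n$ exists for $\nu_1$ in the interior of the marginal polytope, with boundary cases handled by a continuity argument. The I-projection Pythagorean identity applied to the affine family $\{\nu' : \nu'_1 = \nu_1\}$ yields
\[
\DKL{\nu \river \mu} = \DKL{\nu \river \mu^\lambda} + \DKL{\mu^\lambda \river \mu} \geq \DKL{\mu^\lambda \river \mu},
\]
whereas the left-hand side $\DKL{\nu_1 \river \mu_1} = \DKL{\mu^\lambda_1 \river \mu_1}$ is unchanged. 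Hence it suffices to establish the inequality when $\nu = \mu^\lambda$ is a tilt of $\mu$.

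For this tilted case, parametrize in log-coordinates: set $G(t) := \log g_\mu(e^{t_1}, \ldots, e^{t_n})$ (with $g_\mu$ normalized so $g_\mu(1) = 1$). Euler's identity for the degree-$k$ homogeneous polynomial $g_\mu$ gives $\nabla G(t) = k \cdot \mu^{e^t}_1$, and a direct expansion shows that $\DKL{\mu^\lambda \river \mu} = \langle t, \nabla G(t) \rangle - G(t)$ when $\lambda = e^t$. By the chain rule, $\alpha$-FLC of $\mu$ is equivalent to the pointwise Hessian inequality $\alpha \nabla^2 G(t) \preceq \diag(\nabla G(t))$ for every $t \in \R^n$. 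Integrating this inequality along the straight-line segment from $0$ to $t$ in log-coordinates converts the Hessian bound into the global comparison $\alpha k \cdot \DKL{\mu^\lambda_1 \river \mu_1} \leq \DKL{\mu^\lambda \river \mu}$, which is precisely $\frac{1}{\alpha}$-entropic independence.

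The main technical obstacle is this final integration step. The two KL divergences in question are Bregman divergences of different convex functions---of $G$ on $\R^n$ for the right-hand side, and of the negative-entropy function on the probability simplex for $\DKL{\mu^\lambda_1 \river \mu_1}$---so the pointwise Hessian bound has to be lifted to a comparison of these two divergences along the entire path, not merely infinitesimally. This requires carefully aligning the direction of integration in log-coordinates with the motion of the marginals $\mu^{e^{s t}}_1$ in the simplex as $s$ runs from $0$ to $1$, and it is exactly this integration where the fractional exponent $\alpha$ in the definition of FLC enters to produce the factor $\frac{1}{\alpha k}$ in the final bound.
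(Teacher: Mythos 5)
First, note that the paper itself gives no proof of this lemma; it is imported verbatim from [AJKPV21, Theorem~4], so your proposal has to stand on its own. Much of your setup is sound: the I-projection/Pythagorean reduction to external-field tilts is valid (modulo the boundary cases you wave at), the identity $D_{\operatorname{KL}}(\mu^{\lambda}\,\|\,\mu)=\langle t,\nabla G(t)\rangle-G(t)$ for $\lambda=e^{t}$ is correct, and $\alpha$-FLC is indeed equivalent to the pointwise bound $\alpha\nabla^{2}G(t)\preceq\operatorname{diag}(\nabla G(t))$. The genuine gap is exactly the step you flag and then do not carry out: the assertion that ``integrating this inequality along the straight-line segment from $0$ to $t$'' yields $\alpha k\,D_{\operatorname{KL}}(\mu^{\lambda}_{1}\,\|\,\mu_{1})\le D_{\operatorname{KL}}(\mu^{\lambda}\,\|\,\mu)$. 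This does not follow from any routine integration. Setting $\phi(s)=D_{\operatorname{KL}}(\mu^{e^{st}}\,\|\,\mu)$ and $\psi(s)=k\,D_{\operatorname{KL}}\bigl(\nabla G(st)/k\,\|\,\nabla G(0)/k\bigr)$, one computes $\phi'(s)=s\,\langle t,\nabla^{2}G(st)\,t\rangle$ but $\psi'(s)=\sum_{i}\bigl(\nabla^{2}G(st)\,t\bigr)_{i}\log\frac{\nabla_{i}G(st)}{\nabla_{i}G(0)}$; the second integrand involves the accumulated log-ratio of marginals, itself a path integral of the Hessian, so the pointwise domination $\alpha\nabla^{2}G\preceq\operatorname{diag}(\nabla G)$ does not compare the two integrands. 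As you yourself observe, the two sides are Bregman divergences of different potentials ($G$ versus negative entropy composed with the marginal map $t\mapsto\nabla G(t)/k$); bridging them is the entire content of the theorem, and your sketch leaves it open.

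For comparison, the argument in [AJKPV21] avoids any such path integration: entropic independence is first shown to be equivalent to a pointwise inequality on the generating polynomial (roughly, $g_{\mu}(z)\le\bigl(\sum_{i}\frac{p_{i}}{k}z_{i}^{\alpha}\bigr)^{k/\alpha}$ for normalized $g_{\mu}$, with $p$ the marginals), and this inequality is then deduced from $\alpha$-FLC by a single first-order (tangent-plane) bound on the concave function $z\mapsto\log g_{\mu}(z^{\alpha})$ combined with homogeneity---no second-order or path-wise comparison is needed. Two smaller points: your reduction of the conditional case is fine in spirit, since closure of FLC under conditioning is established in [AASV21], but conditioning on inclusion of $S$ corresponds to differentiating the multiaffine $g_{\mu}$ in the variables $z_{i}$, $i\in S$ (setting variables to zero is conditioning on \emph{exclusion}); and the existence of the matching tilt for marginals on the boundary of the marginal polytope deserves more than ``a continuity argument'' if you want the reduction in full generality. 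As written, the core inequality for tilted distributions is unproved, so the argument is incomplete.
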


\begin{lemma}[\cite{AASV21}] \label{lem:alpha FLC}
The following distributions are $\alpha$-FLC for $\alpha = \Omega(1)$ and $k \in [n]$.
\begin{enumerate}
    \item $k$-DPP and DPP defined by nonsymmetric PSD $L \in \R^{n \times n}$.
    \item Partition-DPP defined by symmetric PSD $L \in \R^{n \times n}$ and a partition $\{V_i\}_{i=1}^r$ with $r = O(1).$
\end{enumerate}
\end{lemma}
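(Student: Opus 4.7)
The overall approach follows the framework of \cite{AASV21}: prove that the generating polynomials of each distribution family are sector-stable with a constant opening angle, and then apply the general implication that $\alpha$-sector-stability of a homogeneous polynomial yields $\alpha$-FLC. So the plan reduces to verifying sector-stability in each of the three cases.

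For the (unconstrained) DPP with an nPSD matrix $L$, the generating polynomial is $\det(I + L \cdot \diag(z))$. I would prove that this polynomial is nonzero whenever each $z_i$ lies in an open sector $\{re^{i\phi} : r > 0,\ |\phi| < \theta\}$ for some constant $\theta > 0$. The key observation is that $L + L^\intercal \succeq 0$ forces the numerical range $\{v^* L v : \|v\| = 1\}$ (taken over complex $v$) into the closed right half-plane, and hence the spectrum of $L$ also lies there. Since the spectrum of $L \cdot \diag(z)$ agrees with that of $\diag(z)^{1/2} L \diag(z)^{1/2}$ (the two matrices are conjugate whenever $z_i \neq 0$), a numerical-range perturbation argument shows that when each $z_i$ stays in a sector around the positive real axis, these eigenvalues remain in a sector around the right half-plane and in particular avoid $-1$; this yields sector-stability with a constant angle. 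For the $k$-DPP, the generating polynomial is the degree-$k$ homogeneous component of $\det(I + L \cdot \diag(z))$, and sector-stability is preserved under extracting homogeneous components (via Cauchy's integral formula in each variable), so the $k$-DPP inherits the same sector-stability.

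For the Partition-DPP with $r = O(1)$ parts and symmetric PSD $L$, I would use a polarization argument. The polynomial $\det(I + L \cdot \diag(z))$ is real-stable (classical, for symmetric PSD $L$), corresponding to sector-stability on the entire open upper half-plane. Introduce auxiliary variables $w_1,\dots,w_r$ (one per part) and substitute $z_i \leftarrow z_i w_{j(i)}$ where $j(i)$ indexes the part containing $i$; extracting the $w$-multidegree $(c_1, \ldots, c_r)$ recovers the Partition-DPP generating polynomial. Each extraction step amounts to a constant number of standard polarization-and-coefficient-extraction operators, each of which preserves sector-stability with only a multiplicative constant loss in the opening angle. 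Since $r = O(1)$, only a bounded number of such operators are applied, and the final polynomial remains $\Omega(1)$-sector-stable.

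The main obstacle is the nonsymmetric case: establishing sector-stability of $\det(I + L \cdot \diag(z))$ requires quantitative control of how the spectrum of $L \cdot \diag(z)$ moves as $z$ ranges over a complex sector, carefully tracking the interaction between the skew part $L - L^\intercal$ and the phase rotation in $z$. For symmetric $L$ this reduces to classical real-stability, but the asymmetric case is where the explicit constant in $\alpha = \Omega(1)$ has to be extracted. Once sector-stability is in hand, the sector-stability $\Rightarrow$ FLC theorem of \cite{AASV21} applies uniformly to all three families and gives the desired $\alpha = \Omega(1)$.
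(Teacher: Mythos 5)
The paper itself contains no proof of this lemma: it is imported verbatim from \cite{AASV21}, so the only possible comparison is with that reference's argument, and your architecture---prove sector stability of the generating polynomials, then invoke the sector-stability-implies-FLC theorem---is exactly the route taken there. Within that architecture, however, you have the difficulty backwards. The unconstrained nonsymmetric case is the easy step: for $z$ in the open right half-plane, $\det(I + L\diag(z)) = \bigl(\prod_i z_i\bigr)\det(\diag(z^{-1}) + L)$, and the Hermitian part of $\diag(z^{-1}) + L$ is positive definite because $\operatorname{Re}(1/z_i) > 0$ and $L + L^\intercal \succeq 0$; this gives nonvanishing on the entire half-plane with no delicate tracking of the skew part (your eigenvalue-avoids-$-1$ argument can be made rigorous the same way: if $L\diag(z)y = -y$ with $y \neq 0$, then $y^{*}Ly = -\sum_i \abs{y_i}^2/z_i$, contradicting the signs of the real parts). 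No quantitative ``perturbation of the numerical range'' is needed there.

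The genuine gap is in the constraint steps, which are the technical core of \cite{AASV21} and which you assert rather than prove. ``Sector-stability is preserved under extracting homogeneous components (via Cauchy's integral formula)'' is not a justification: the integral represents the component as a contour integral whose integrand is controlled only on a small arc, and an integral of a nonvanishing function can vanish (indeed components can vanish identically, e.g.\ the degree-$1$ part of $1 + z_1 z_2$). More fundamentally, coefficient extraction is a derivative-type operation, and sector stability---unlike half-plane or upper-half-plane stability---is not closed under differentiation, because the complement of a sector is not convex and Gauss--Lucas-type arguments fail; so there is no ``standard polarization-and-coefficient-extraction operator'' whose sector-preservation you can cite for free. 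The same objection applies to your partition-DPP step, where ``each operator preserves sector-stability with only a multiplicative constant loss in the opening angle'' is precisely the statement that must be established. In \cite{AASV21} these are dedicated lemmas showing that imposing a cardinality constraint, or each constraint $\abs{S \cap V_i} = c_i$, costs only a constant factor in the aperture (which is exactly why the partition result requires $r = O(1)$); without proving such a lemma, your sketch reproduces the skeleton of the cited proof but leaves its hardest component unsupported.
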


\subsection{Rejection sampling}
Consider distributions $\mu$, $\nu$ over the same domain, and parameter $C$ such that $\max\set*{\frac{\mu(x)}{\nu(x)}}\leq C.$ Assuming sample access to $\nu$, we can also sample from $\mu$ via rejection sampling as in \cref{alg:rejection sampling}.

\begin{Algorithm}[ht!]
\KwIn{parameter $C> 0$ such that $\max\set*{\frac{\mu(x)}{\nu(x)}}\leq C$}
Sample  $x\sim \nu.$\;
Accept and output $x$ with probability $ \frac{\mu(x)}{C \nu(x)}.$\;
\caption{Rejection sampling} \label{alg:rejection sampling}
\end{Algorithm}
When \cref{alg:rejection sampling} succeeds, its output distribution is exactly $\mu$.
\Cref{alg:rejection sampling} succeeds with probability \Tag<sigconf>{$1/C$.}
\Tag{\[\P{\text{accept}} = \sum_x \frac{\mu(x)}{C\nu(x)}\nu(x) = \frac{1}{C}. \]}
For any $\delta \in (0,1),$ by running $C \log\delta^{-1}$ copies of the algorithm in parallel and taking the first accepted copy, we can boost the acceptance rate to $1-\delta$, stated formally in the following.
\begin{proposition} \label{prop:rejection sampling}
There is an algorithm that with probability $1-\delta$, outputs a sample from $\mu$ in the same asymptotic parallel time as required to sample from $\nu$, using $O(C \poly(n)\log \frac 1 \delta )$ machines.
\end{proposition}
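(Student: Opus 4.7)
The plan is straightforward: run many independent copies of \cref{alg:rejection sampling} in parallel and return the output of any one that accepts. Concretely, I would spawn $N = \ceil{C \ln(1/\delta)}$ independent instances of \cref{alg:rejection sampling}, each drawing its own sample $x \sim \nu$ and performing its own accept/reject coin flip with probability $\mu(x)/(C\nu(x))$. All of these can be executed simultaneously, so the parallel depth of this boosted procedure is exactly the parallel depth needed for a single sample from $\nu$ plus $O(1)$ arithmetic operations for the accept/reject step (computing the ratio $\mu(x)/\nu(x)$ and comparing to a uniform random variable), matching the claimed asymptotic runtime. The machine count is $N \cdot \poly(n) = O(C \poly(n) \log(1/\delta))$, where the $\poly(n)$ factor accounts for the work of sampling from $\nu$ and evaluating the density ratio per copy.

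For correctness, recall from the displayed calculation preceding the proposition that each copy accepts independently with probability exactly $1/C$. Hence the probability that \emph{all} $N$ copies reject is
\[ \left(1 - \tfrac{1}{C}\right)^N \le \exp\!\left(-\tfrac{N}{C}\right) \le \exp\!\left(-\ln(1/\delta)\right) = \delta, \]
so with probability at least $1 - \delta$ at least one copy accepts. Conditioned on accepting, the classical correctness of \cref{alg:rejection sampling} guarantees that the accepted $x$ is distributed exactly according to $\mu$; picking, say, the accepted copy with the smallest index preserves this distribution because the copies are i.i.d.\ and the index of the first acceptance is independent of the value of the accepted sample. The output is therefore an exact sample from $\mu$ with probability $\ge 1-\delta$.

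There is no real obstacle here: the only thing to double-check is that choosing a canonical accepted copy (e.g.\ the first one) does not bias the output distribution, which follows from the i.i.d.\ structure of the copies. Everything else is a direct union/independence calculation together with the single-copy guarantee already established.
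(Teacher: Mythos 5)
Your proposal is correct and follows essentially the same route as the paper: run $O(C\log\frac{1}{\delta})$ independent copies of \cref{alg:rejection sampling} in parallel, use the per-copy acceptance probability $\frac{1}{C}$ to bound the all-reject probability by $\delta$, and return the first accepted sample, which is exactly distributed as $\mu$. Your extra remark that selecting the first accepted copy introduces no bias (by the i.i.d.\ structure) is a fine, if minor, addition.
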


We consider the following modification of \cref{alg:rejection sampling} when we have a weaker assumption that for some $\Omega \subseteq \supp(\nu)$ we have $\max\set*{\frac{\mu(x)}{\nu(x)}\given x\in \Omega}\leq C$ and $\sum_{x\in \Omega}\mu(x)\geq 1-\epsilon$ for some $\epsilon \in [0,1).$ 
\begin{Algorithm}
\KwIn{$\Omega \subseteq \supp(\nu)$, parameter $C> 0$ such that $\max\set*{\frac{\mu(x)}{\nu(x)}\given x\in \Omega}\leq C$}
Sample  $x\sim \nu.$\;
If $x \in \Omega$, accept and output $x$ with probability $ \frac{\mu(x)}{C \mu(x)}.$\;
\caption{Modified rejection sampling} \label{alg:modified rejection sampling}
\end{Algorithm}

The guarantees of \cref{alg:modified rejection sampling} follow immediately from \cref{prop:modified rejection sampling} and that the restriction of $\mu$ to $\Omega$ has total variation distance at most $\eps$ from $\mu$. In particular, we will use \cref{prop:modified rejection sampling} with $\delta = \eps$ and output an arbitrary sample when it fails to accept a sample.

\begin{proposition}  \label{prop:modified rejection sampling}
There is an algorithm that outputs a sample from some $\tilde{\mu}$ in the same asymptotic parallel time as required to sample from $\nu$, using $O(C \poly(n) \log \frac 1 \eps )$ machines, where $\dTV{\tilde{\mu}, \mu} = O(\eps).$ 
\end{proposition}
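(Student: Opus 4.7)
The plan is to mirror the boosting argument used for standard rejection sampling (\cref{prop:rejection sampling}), but keep careful track of two sources of error: samples landing outside the ``good'' set $\Omega$ and the event that no parallel copy accepts.

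First I would run $M = \Theta(C \log(1/\epsilon))$ independent copies of \cref{alg:modified rejection sampling} in parallel, and let $\tilde\mu$ be the output distribution of the following procedure: if at least one copy accepts, return the accepted sample from the copy of lowest index; otherwise return an arbitrary (say deterministic) element of the domain. Each copy uses $\poly(n)$ machines for the sampling from $\nu$ and the evaluation of the ratio $\mu(x)/\nu(x)$, so the total machine count is $O(C \poly(n)\log(1/\epsilon))$ as claimed, and the parallel depth is the same as a single sample from $\nu$.

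Next I would bound the per-copy acceptance probability: for $x\in \Omega$, a copy outputs $x$ with probability $\nu(x)\cdot \frac{\mu(x)}{C\nu(x)} = \frac{\mu(x)}{C}$, so the total acceptance probability is $\frac{1}{C}\sum_{x\in \Omega}\mu(x)\geq \frac{1-\epsilon}{C}$. Thus the probability that none of the $M$ copies accept is at most $\bigl(1-\tfrac{1-\epsilon}{C}\bigr)^M \leq \exp(-M(1-\epsilon)/C) \leq \epsilon$ for an appropriate constant in $M$. Conditioned on some copy accepting, the output is distributed as $\mu$ restricted and renormalized to $\Omega$, namely $\mu_\Omega(x)=\mu(x)/\mu(\Omega)$ for $x\in \Omega$.

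Finally I would bound the total variation distance. Writing $\delta = 1-\mu(\Omega)\leq \epsilon$, a direct calculation gives $\dTV{\mu_\Omega,\mu}=\delta\leq \epsilon$. Combining this with the at-most-$\epsilon$ probability of falling back to the arbitrary default by the triangle inequality yields $\dTV{\tilde\mu,\mu}\leq 2\epsilon = O(\epsilon)$, finishing the proof. There is no real obstacle here, since this is a routine parallel-boosting plus truncation argument; the only slightly delicate point is verifying that conditioning on acceptance gives exactly $\mu_\Omega$ (so that the sole contribution to the TV bound beyond the failure event is $\mu(\Omega^c)$), which follows from the acceptance probability being proportional to $\mu(x)$ uniformly over $\Omega$.
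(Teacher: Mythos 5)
Your proof is correct and follows essentially the same route the paper intends: boost by running $O(C\log\frac 1 \eps)$ independent copies in parallel as in \cref{prop:rejection sampling}, output an arbitrary sample on total failure, and combine the $\le \eps$ failure probability with the fact that $\mu$ restricted to $\Omega$ is within $\eps$ in total variation of $\mu$. (You also implicitly fix the typo in \cref{alg:modified rejection sampling}, where the acceptance probability should read $\frac{\mu(x)}{C\nu(x)}$, which is the correct reading.)
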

    \section{Symmetric DPP}\label{sec:symmetric-dpp}
Here, we prove our basic result, \cref{thm:sym DPP}. \Tag{We provide a strengthening for DPPs satisfying nontrivial spectral bounds, \cref{thm:sym DPP with bounded eigenvalue}, in \cref{sec:symdppbound}.} We first state helper bounds used in the proof.

\begin{lemma} \label{lem:sym DPP rejection sampling}
	Suppose $\mu$ on $\binom{[n]}{k}$ is negatively correlated. Let $\mu_t = \mu D_{k \to t}$ and $p_i = \P_{\mu}{i\in S}.$ Then $\mu_t (T)/\parens*{t! \prod_{i\in T} \frac{p_i} k} \leq \exp\parens*{\frac{t^2} k}$.
\end{lemma}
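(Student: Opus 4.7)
My plan is to unfold the definitions and then reduce the claim to a clean calculus inequality about $k$ and $t$.

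First, I would use the definition of the down operator $D_{k\to t}$ to write
\[ \mu_t(T) = \sum_{S \supseteq T, |S|=k} \mu(S) \cdot \binom{k}{t}^{-1} = \binom{k}{t}^{-1}\,\P_{S\sim\mu}\bracks*{T\subseteq S}. \]
Then negative correlation (\cref{lem:real stable negative correlation}) immediately gives $\P_{S\sim\mu}[T\subseteq S] \leq \prod_{i\in T} p_i$, so
\[ \frac{\mu_t(T)}{t!\prod_{i\in T} p_i/k} \leq \frac{k^t}{t!\binom{k}{t}} = \frac{k^t(k-t)!}{k!} = \prod_{i=0}^{t-1}\frac{1}{1 - i/k}. \]
Note that the $p_i$'s cancel entirely, so the problem collapses to bounding a deterministic product in $k$ and $t$.

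The remaining task is to show $\prod_{i=0}^{t-1}(1-i/k)^{-1} \leq \exp(t^2/k)$. I would take logarithms and use the monotonicity of $x\mapsto -\log(1-x/k)$ on $[0,t]$ to upper bound the sum by an integral:
\[ \sum_{i=0}^{t-1} -\log(1 - i/k) \leq \int_0^t -\log(1 - x/k)\,dx = t + (k-t)\log(1-t/k), \]
where the right-hand side is obtained by the substitution $u = 1 - x/k$. Applying the elementary inequality $\log(1-t/k) \leq -t/k$ gives $(k-t)\log(1-t/k) \leq -t + t^2/k$, and the whole expression is at most $t^2/k$. Exponentiating yields the claimed bound.

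I do not expect any real obstacle: negative correlation does the probabilistic work, and the rest is a one-step integral estimate. The only slightly subtle point is handling the case when $t$ is close to $k$, where individual terms like $-\log(1-(t-1)/k)$ are large; the integral bound accommodates this regime because at $t = k$ the integral equals $k = t^2/k$, matching Stirling's approximation.
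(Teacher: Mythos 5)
Your proof is correct and takes essentially the same route as the paper: unfold the down operator, apply negative correlation so the $p_i$'s cancel, and reduce to bounding the deterministic ratio $\frac{k^t(k-t)!}{k!} = \prod_{i=1}^{t-1}\left(1-\frac{i}{k}\right)^{-1}$ by $\exp(t^2/k)$. The only divergence is the final elementary estimate — you use a sum-to-integral comparison where the paper uses the pointwise inequality $1-x \geq e^{-2x}$ — and your version is, if anything, slightly more robust, since it covers $t$ all the way up to $k$, whereas the paper's inequality requires $i/k$ bounded away from $1$ (harmless in the application, where $t \approx \sqrt{k}$).
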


\begin{proof}
Note that $\mu_t(T) =$
\[ \binom{k}{t}^{-1} \P_{S\sim \mu}{T\subseteq S} = \frac{t!}{k(k-1) \cdots (k-t+1)} \P_{S\sim \mu}{T\subseteq S}.\]
Thus, by negative correlation, 
\begin{align*}
    \frac{\mu_t(T) }{ t!\prod_{i\in T} \frac{p_i} k } &= \frac{k^t }{k(k-1) \cdots (k-t+1)} \frac{\P_{S\sim \mu}{T\subseteq S}}{\prod_{i\in T} \P_{S\sim \mu}{i\in S}  }\\
    &\le \parens*{\prod_{i=1}^{t-1} \parens*{1-\frac{i}{k}}}^{-1} \leq \exp\parens*{\frac{t^2}{k}}
\end{align*}
where we used the facts that $1- x \geq e ^{-2x} $ and $\prod_{i=1}^{t-1} \exp\parens*{\frac{2i}{k}} = \exp\parens*{\frac{t^2 - t}{k}}$.
\end{proof}

\begin{proposition}\label{cor:sqrtk}
If step (*) takes $O(\tau)$-parallel time, \cref{alg:batched sample} takes $ O(\sqrt{k} \cdot \tau)$-parallel time.
\end{proposition}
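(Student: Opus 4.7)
The plan is to show two things: the loop runs for at most $2\sqrt{k}+1$ iterations, and each iteration's parallel cost is $O(\tau)$; multiplying these gives the desired bound $O(\sqrt{k}\cdot\tau)$.

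For the per-iteration cost, step (*) is $O(\tau)$ by hypothesis. The remaining work in each iteration — replacing $\mu^{(i)}$ by its conditioning $\mu^{(i)}(\cdot\mid T_i)$ and decrementing $k_{i+1}=k_i-|T_i|$ — is a bookkeeping operation that simply records which elements have been committed to so far, and therefore takes $O(1)$ parallel time. Thus the cost of a single iteration is dominated by step (*).

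The heart of the proof is verifying that after $2\sqrt{k}$ iterations the returned set $T$ has size exactly $k$, i.e., that $k_i$ reaches $0$ in time. I would track the quantity $\sqrt{k_i}$ and show it decreases by at least $\tfrac12$ per iteration. Starting from $k_{i+1} = k_i - \lceil\sqrt{k_i}\rceil \le k_i - \sqrt{k_i}$, concavity of the square root (equivalently, the inequality $(\sqrt{a}-\tfrac{b}{2\sqrt{a}})^2 \ge a-b$) yields
\[
\sqrt{k_{i+1}} \;\le\; \sqrt{k_i - \sqrt{k_i}} \;\le\; \sqrt{k_i} - \frac{1}{2}.
\]
Iterating this inequality $2\sqrt{k}$ times gives $\sqrt{k_{2\sqrt{k}}} \le \sqrt{k_0} - \sqrt{k} = 0$, so by the end of the loop the set $T=\bigcup_i T_i$ is the desired size-$k$ sample.

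The only mildly subtle point is handling the ceiling in $\lceil\sqrt{k_i}\rceil$: since $\lceil\sqrt{k_i}\rceil \ge \sqrt{k_i}$ the upper-bound argument above is unaffected, so no care is needed beyond the one-line concavity estimate. Combining the two bullets — $O(\sqrt{k})$ iterations, each costing $O(\tau)$ parallel time — concludes the proof.
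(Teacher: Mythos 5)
Your proof is correct and matches the paper's argument: both hinge on the same estimate $k_{i+1} = k_i - \lceil\sqrt{k_i}\rceil \le k_i - \sqrt{k_i} \le (\sqrt{k_i} - \tfrac12)^2$, so $\sqrt{k_i}$ drops by at least $\tfrac12$ per iteration and $k_i$ hits $0$ within the $2\sqrt{k}$ loop iterations, each costing $O(\tau)$. The extra remarks on per-iteration bookkeeping are fine but not needed beyond what the paper states.
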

\begin{proof}
Note that
\[k_{i+1} = k_i -\lceil \sqrt{k_i} \rceil \leq k_i -\sqrt{k_i} \leq \parens*{\sqrt{k_i} -\frac{1}{2}}^2.\]
and thus $\sqrt{k_{i+1}} \leq \sqrt{k_i} - \frac{1}{2}.$ Hence, since $t \geq 2 \sqrt{k}$ implies $\sqrt{k_t} \leq  \sqrt{k_0} - \frac{t}{2} \leq 0$,
the algorithm terminates in $O(\sqrt{k})$ iterations, and takes $O(\sqrt{k})$ parallel time. 
\end{proof}
\begin{proof}[Proof of \cref{thm:sym DPP}]
We first consider the case of sampling $k$-DPPs.
By \cref{lem:dpp real stable}, when $\mu$ is a $k$-DPP defined by a symmetric PSD ensemble matrix $L$, $\mu$ and the conditionals of $\mu$ are real-stable. In particular, all $\mu^{(i)}$ as defined in \cref{alg:batched sample} are real-stable and hence strongly Rayleigh. 

Consider some loop $i$, and let $\mu \equiv \mu^{(i)}.$ We use \cref{lem:sym DPP rejection sampling}
to implement step (*). In particular, we first compute the marginals $p_i=\P_{\mu}{i\in S}$ in $\Otilde{1}$ parallel time. Next, let $\nu$ be the distribution over ordered tuples $(i_1, \dots, i_t)\in [n]^t $ with 
\[\nu(\{i_1, \dots, i_t\}) = \prod_{r=1}^t \frac{p_{i_r}} k.\]
We can identify $\mu_t$ with the distribution $\mu_t^*$ over $[n]^t$ where we denote $\mu_t^*(\{i_1, \dots, i_t\}) = \frac{\mu(\set*{i_1, \dots, i_t}) }{t!}$. Let $\delta' = \frac \delta {2\sqrt{k}}$, where \cref{alg:batched sample} takes at most $2\sqrt k$ iterations by \cref{cor:sqrtk}. We run the rejection sampling algorithm in \cref{prop:rejection sampling}, which succeeds with probability $1-\delta',$ to sample from $\mu_t^*$ given samples from $\nu$, with $C \leq \exp(\tfrac{t^2} k)= O(1) $ since $t = \lceil \sqrt{k} \rceil.$ Clearly obtaining a sample from $\mu_t^*$ yields a sample from $\mu_t$ by forgetting the ordering on the elements.

Hence, each iteration $i$ of \cref{alg:batched sample} takes $\Otilde{1}$ parallel time. By \cref{cor:sqrtk}, the algorithm takes $O(\sqrt{k})$-parallel time. By a union bound, the success probability is $\ge 1- 2\sqrt{k} \delta' = 1-\delta.$ The number of machines used is as in \cref{prop:rejection sampling}, $O(\poly(n) \log \frac{k}{\delta}).$ 

The result for DPPs immediately follows by first sampling the size $\card{S}$ of $S\sim \mu$ and then sampling from the appropriate $k$-DPP.
\end{proof}
    \section{Entropically independent distributions}\label{sec:ei}

In this section, we prove the main result. We will use \cref{thm:mainei} to derive our samplers for various entropically independent distributions, namely \cref{thm:nsdpp,thm:pcdpp}, which immediately follow from combining \cref{lem:FLC to entropic-independence,lem:alpha FLC,thm:mainei}. 

\begin{theorem}\label{thm:mainei}
Let $\mu: \binom{[n]}{k} \to \R_{\geq 0}$ be such that all its conditional distributions are $\frac 1 \alpha$-entropically independent with $\alpha = \Omega(1).$  Suppose we can compute marginals $\P_{\mu}{i \given S}$ for $S\subseteq [n]$ and $i\not \in S$ in $\Otilde{1}$ parallel time. For any constant $c > 0$ and any $\eps \in (0, 1)$, there exists an algorithm to sample from a distribution within total variation distance $\eps$ of $\mu$ in $\Otilde*{\sqrt{k} \cdot \parens*{\frac k \eps}^{c}}$ parallel time using $(\frac n \eps)^{O(c^{-1})}$ machines.
\end{theorem}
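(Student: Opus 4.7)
The plan is to instantiate \cref{alg:batched sample} with a uniform batch size $\ell=\Theta(k^{\frac12-c})$, so that there are $k/\ell=\Theta(\sqrt{k}\cdot k^c)$ iterations, and to implement each step (*) via \cref{alg:modified rejection sampling} with proposal $\nu=(\mu^{(i)}_1)^{\otimes \ell}$ (ordered product of the current $1$-marginals) and target $\mu^{(i)}_\ell$. Both densities can be evaluated on any $\ell$-tuple in $\widetilde{O}(1)$ parallel time using the marginal oracle. Provided that, for some ``good'' set $\Omega$ with $\mu^{(i)}_\ell(\Omega)\geq 1-\eps/\sqrt{k}$, the density ratio $\mu^{(i)}_\ell/\nu$ is bounded on $\Omega$ by some $C=(n/\eps)^{O(1/c)}$, \cref{prop:modified rejection sampling} implements step (*) in $\widetilde{O}(1)$ parallel time with $(n/\eps)^{O(1/c)}$ machines and only $O(\eps/\sqrt{k})$ TV error per iteration. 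Summing over iterations and rescaling $\eps\mapsto \eps/k^c$ then yields the runtime claimed in \cref{thm:mainei}.

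The main technical task is producing such an $(\Omega, C)$. To invoke the Renyi comparison \cref{lem:klrenyi}, the proposal $\nu$ must be nearly uniform; I first enforce this via a subdivision argument in the style of \cite{AD20,ADVY21}, replacing each ground element $i$ by $\Theta(N\cdot \mathbb{P}_\mu[i\in S])$ duplicates. The resulting measure lives on a blown-up ground set of size $N'=\Theta(N)$, has $1$-marginals of order $\Theta(1/N')$, and keeps all its conditionals $\tfrac{1}{\alpha}$-entropically independent. I may therefore assume without loss of generality that the $1$-marginals of every $\mu^{(i)}$ are uniform up to a constant factor, whence $\nu$ is uniform up to constants on its support.

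With this normalization, entropic independence yields average-case control on the density ratio. Iterating the defining inequality of entropic independence on $\mu^{(i)}$ and its conditionals, and aggregating the resulting bounds via the KL chain rule, gives $\DKL{\mu^{(i)}_\ell\river (\mu^{(i)}_1)^{\otimes \ell}}\lesssim \ell^2/(\alpha k_i)=O(1)$. Plugging this KL bound and the constant-factor uniformity of $\nu$ into \cref{lem:klrenyi} with Renyi parameter $\lam=1+\Theta(c)$ furnishes a bound on $\mathbb{E}_{T\sim \nu}\bigl[(\mu^{(i)}_\ell(T)/\nu(T))^\lam\bigr]$ that remains polynomial in $n/\eps$, because the support-size blowup $\binom{n}{\ell}^{\lam-1}$ is $(n/\eps)^{O(1/c)}$ for this $\lam$. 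Markov's inequality applied to this $\lam$-th moment then identifies $\Omega$ as the set on which the density ratio is at most $C=(n/\eps)^{O(1/c)}$, with complement of $\mu^{(i)}_\ell$-mass at most $\eps/\sqrt{k}$.

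The step I expect to require the most care is promoting these per-iteration moment bounds to a \emph{simultaneous} high-probability guarantee across all $\Theta(\sqrt{k}\cdot k^c)$ iterations, since the distribution $\mu^{(i)}$ depends on the random history $T_1,\dots,T_{i-1}$. The plan is to view the sequence of log-density ratios along a trajectory of \cref{alg:batched sample} as a submartingale: conditioned on the past, each increment has its $\lam$-th exponential moment controlled by the one-shot Renyi bound, because entropic independence is preserved under conditioning (\cref{lem:FLC to entropic-independence,lem:alpha FLC}). An exponential Markov inequality on the running sum then delivers the required uniform-in-$i$ concentration, ensuring that with probability at least $1-\eps$ every iteration's sample lands in the corresponding $\Omega$. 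Combining this with the runtime and machine-count bookkeeping above establishes \cref{thm:mainei}.
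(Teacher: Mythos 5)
There is a genuine gap in the step where you produce the pair $(\Omega, C)$. You propose to apply \cref{lem:klrenyi} \emph{once}, at the level of the $\ell$-wise distributions, with Renyi parameter $\lam = 1 + \Theta(c)$, and you assert that the resulting support-size factor $\binom{n}{\ell}^{\lam - 1}$ is $(n/\eps)^{O(1/c)}$. It is not: with $\ell = \Theta(k^{\frac12 - c})$ and $\lam - 1 = \Theta(c)$ this factor is $\exp\parens{\Theta(c\,\ell \log (n/\ell))} = n^{\Theta(c\, k^{1/2-c})}$, which is superpolynomial, and no other choice of $\lam$ rescues the argument: the Markov step needs $C^{\lam-1}$ to dominate the moment bound divided by the failure probability, and since the \cref{lem:klrenyi}-type bound at the tuple level carries the factor $\binom{n}{\ell}^{\lam-1}$, one is forced into $C \gtrsim \binom{n}{\ell}$, i.e.\ superpolynomially many machines. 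More fundamentally, an $O(1)$ bound on the KL divergence between $\mu_\ell$ and the product proposal (which you derive exactly as in \cref{lem:klmumup}) simply does not imply polynomial control of higher Renyi-type moments of the $\ell$-wise likelihood ratio without exploiting additional structure; this is precisely why the paper does not take the one-shot route. Instead, the paper applies the Renyi comparison \emph{per element draw}, conditioned on the prefix, over the ground set $U$ of polynomial size, with $\gamma \approx \log(1/\epsilon)/(B \log\abs{U})$ so that the blowup $\abs{U}^\gamma$ is only $\poly(1/\epsilon)$ per step; the per-step conditional KL is controlled by \cref{lem:klcondmarg}, whose bound involves $\log(1/\P{S_t \subset T})$ and is therefore only useful on well-represented prefixes $\nu_t(S_t) \ge \varepsilon^t$ --- this is the role of the truncation set $\tOmega_\varepsilon$ of \cref{lem:omegabound} --- and the per-step moments are composed along the within-batch trajectory by the submartingale $Y_t$ of \cref{lem:submartingale,lem:mostomegagood}.

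Relatedly, your concluding submartingale is aimed at the wrong level. Across the outer iterations of \cref{alg:batched sample} no concentration argument is needed: since \emph{all} conditionals of $\mu$ are entropically independent, every $\mu^{(i)}$ satisfies the same per-batch guarantee regardless of the history, and the per-batch total variation errors of \cref{alg:modified rejection sampling} simply add up via the coupling characterization of total variation (this is how the paper's proof of \cref{thm:mainei} concludes from \cref{lem:subroutine}). The place where a martingale argument is genuinely required is \emph{within} a single batch, to upgrade the average-case KL bound to a high-probability bound on the likelihood ratio; your sketch of that inner argument inherits the flaw above, and even read as a within-batch martingale it is missing the prefix-truncation device (the $\varepsilon^t$ condition defining $\tOmega_\varepsilon$, together with the restriction to the set $R$ of elements with marginal lower bounds from \cref{prop:near-isotropic}) without which the conditional KL increments, and hence the per-step exponential moments, cannot be bounded. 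The overall skeleton (batching, isotropic subdivision, chain-rule KL bound, modified rejection sampling) matches the paper, but the concentration step as written would fail.
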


\subsection{Isotropic transformation}
\label{ssec:isotropic}

We first reduce to the case of near-isotropic distributions. Similarly to \cite{AD20,ADVY21},  we say a distribution $\mu: \binom{[n]}{k} \to \R_{\geq 0}$ is isotropic if for all $i \in [n]$, the marginal $\P_{S \sim \mu}{i \in S}$ is $\frac{k}{n}$. Prior work \cite{AD20} introduced the following subdivision process transforming an arbitrary $\mu: \binom{[n]}{k} \to \R_{\ge 0} $ to a nearly-isotropic $\mu': \binom{U}{k} \to \R_{\ge 0}$, while preserving entropic independence.
 
\begin{definition} \label{def:isotropic-transformation}
Let $\mu: \binom{n}{k} \to \mathbb{R}_{\geq 0}$ be an arbitrary probability distribution and assume that we have access to the marginals $p_1,\dots, p_n$ of the distribution with $p_1+\dots+p_n=k$ and $p_i = \P_{S\sim \mu}{i\in S}$ for all $i$. For a parameter $\beta \in (0,1)$, let
$t_i:=\ceil{\frac{n}{\beta k}p_i}$. We create a new distribution out of $\mu$ as follows: for each $i \in [n]$, create $t_i$ copies of element $i$ and let the collection of all copies be the new ground set: $U \coloneq \bigcup_{i = 1}^{n} \{i^{(j)}\}_{j \in [t_i]}$. Define the following distribution $\mu^{\text{iso}}: \binom{U}{k} \to \mathbb{R}_{\geq 0}$:
\[
\mu^{\text{iso}}\parens*{\set*{i_1^{(j_1)}, \ldots, i_k^{(j_k)}}}:=\frac{\mu(\{i_1, \ldots, i_k\})}{t_1 \cdots t_k}.
\]
We call $\mu^{\text{iso}}$ the \emph{isotropic transformation} of $\mu$.
\end{definition}
Another way we can think of $\mu^{\text{iso}}$ is that to produce a sample from it, we can first generate a sample $\set{i_1, \ldots, i_k}$ from $\mu$, and then choose a copy $i_m^{(j_m)}$ for each element $i_m$ in the sample, uniformly at random. Subdivision preserves entropic independence.

\begin{proposition}[{\cite[Proposition 19]{ADVY21}}] \label{prop:entropic-ind-subdivision}
If the distribution $\mu$ is $\frac 1 \alpha$-entropically-independent, then so is $\muiso$.
\end{proposition}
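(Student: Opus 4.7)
The plan is to verify the $\frac{1}{\alpha}$-entropic independence inequality for $\muiso$ directly, by two applications of the chain rule for KL divergence together with the hypothesis on $\mu$. Fix any probability distribution $\nu'$ on $\binom{U}{k}$ and let $\nu$ denote its pushforward under the projection $\pi\colon U\to[n]$ that forgets copy labels; by construction $\pi$ also sends $\muiso$ to $\mu$. Since $\muiso$ is sampled by first drawing $S\sim\mu$ and then independently choosing, for each $i\in S$, a uniform copy $i^{(j)}$ from $[t_i]$, the chain rule gives
\[\DKL{\nu'\river\muiso}=\DKL{\nu\river\mu}+\E_{S\sim\nu}{\DKL{\nu'(\cdot\mid S)\river\muiso(\cdot\mid S)}},\]
where $\muiso(\cdot\mid S)$ is the product of uniform distributions on $\prod_{i\in S}[t_i]$.

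Next I would expand $\DKL{(\nu')_1\river(\muiso)_1}$ by the analogous factoring of elements of $U$ as pairs $(i,j)$. Writing $q_{ij}$ for the marginal of $\nu'$ on copy $i^{(j)}$, $Q_i:=\sum_j q_{ij}=k\nu_1(i)$, and $R_i(j):=q_{ij}/Q_i$, the pushforward of $(\nu')_1$ (resp.\ $(\muiso)_1$) under $(i,j)\mapsto i$ is $\nu_1$ (resp.\ $\mu_1$), while conditional on $i$ it is $R_i$ (resp.\ the uniform distribution $\mathsf{U}_i$ on $[t_i]$). Another application of the chain rule yields
\[\DKL{(\nu')_1\river(\muiso)_1}=\DKL{\nu_1\river\mu_1}+\frac{1}{k}\sum_i Q_i\DKL{R_i\river\mathsf{U}_i}.\]

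Two inequalities then finish the proof. The first summand is controlled by the hypothesis on $\mu$: $\DKL{\nu_1\river\mu_1}\le \frac{1}{\alpha k}\DKL{\nu\river\mu}$. For the second, let $\rho_S^{(i)}$ denote the marginal of the copy selected for $i$ under $\nu'(\cdot\mid S)$. Subadditivity of KL against the product measure $\muiso(\cdot\mid S)$ gives $\DKL{\nu'(\cdot\mid S)\river\muiso(\cdot\mid S)}\ge \sum_{i\in S}\DKL{\rho_S^{(i)}\river\mathsf{U}_i}$. Because $\E_{S\sim\nu}{\1[i\in S]\rho_S^{(i)}(j)}=q_{ij}=Q_iR_i(j)$, Jensen's inequality for the convex function $x\mapsto x\log x$ applied coordinatewise upgrades this to
\[\sum_i Q_i\DKL{R_i\river\mathsf{U}_i}\le \E_{S\sim\nu}{\DKL{\nu'(\cdot\mid S)\river\muiso(\cdot\mid S)}}.\]
Combining the two chain-rule identities with these two inequalities and using $\alpha\le 1$ delivers $\DKL{(\nu')_1\river(\muiso)_1}\le \frac{1}{\alpha k}\DKL{\nu'\river\muiso}$, which is exactly $\frac{1}{\alpha}$-entropic independence for $\muiso$. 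The argument also applies verbatim after further conditioning $\muiso$ on including a prescribed set of copies, so the conditional distributions of $\muiso$ are handled by the same inequality.

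The main obstacle will be the last displayed inequality: one has to descend from a joint divergence against a product measure down to per-coordinate marginal divergences, and then swap the nonlinear $\log$ with the $\nu$-average over $S$. Both steps are in principle standard (subadditivity of Shannon entropy, followed by convexity of $x\log x$), but it is important that the $\nu$-weighted mixture of the copy marginals $\rho_S^{(i)}$ collapses exactly to $R_i$ on each copy $j$, so that Jensen's inequality lines up with the marginal $R_i$ that appears on the left-hand side of the $1$-marginal chain-rule decomposition.
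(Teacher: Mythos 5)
Your argument is correct, but there is nothing in this paper to compare it against: the proposition is imported as a black box from \cite[Proposition 19]{ADVY21} and no proof appears here, so your derivation serves as a self-contained replacement for the cited result. The structure checks out in full: the two chain-rule decompositions along the copy-forgetting projection $\pi$ (one for $\DKL{\nu' \river \muiso}$, one for the $1$-marginals), the use of $\frac 1 \alpha$-entropic independence of $\mu$ on the base terms, and, for the copy terms, superadditivity of KL against the product reference measure $\prod_{i \in S} \mathsf{U}_i$ followed by convexity of KL in its first argument — the mixture identity $\E_{S \sim \nu}{\1[i \in S]\,\rho_S^{(i)}} = Q_i R_i$ is exactly what makes Jensen match the $R_i$ appearing in the $1$-marginal decomposition — and the final use of $\alpha \le 1$ to absorb the $\frac 1 k$ factor into $\frac{1}{\alpha k}$.

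One point you should make explicit at the outset: the reduction to $\nu'$ supported on $\supp(\muiso)$. As written, a set in $\binom{U}{k}$ charged by $\nu'$ may contain two copies of the same base element; then the pushforward $\nu$ does not live on $\binom{[n]}{k}$, the chain rule against $\mu$ and the identity $Q_i = k\nu_1(i)$ both fail, and $\muiso(\cdot \mid S)$ is conditioning on a null event. This costs nothing — any such $\nu'$ is not absolutely continuous with respect to $\muiso$, so $\DKL{\nu' \river \muiso} = +\infty$ and the inequality of \cref{def:entropic-independence} holds vacuously — but a sentence performing this reduction is needed before the chain rule can legitimately be invoked. (Your closing remark about conditionals of $\muiso$ is fine as well, provided the corresponding conditionals of $\mu$ are themselves entropically independent, which is how the property is supplied in \cref{thm:mainei} via \cref{lem:FLC to entropic-independence}.)
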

The following useful properties of $\muiso$ generalize \cite[Proposition 24]{ADVY21}. We defer the proof to \Tag{\cref{sec:proofs}}\Tag<sigconf>{the full version}.
\begin{proposition}
\label{prop:near-isotropic} 
Let $\mu: \binom{n}{k} \to \mathbb{R}_{\geq 0}$, and let $\muiso: \binom{U}{k} \to \R_{\geq 0}$ be the subdivided distribution from \cref{def:isotropic-transformation} for some $\beta$. Let $C = 1+\sqrt \beta.$ The following hold for $\muiso$.
\begin{enumerate}
\item Marginal upper bound: For all $i^{(j)} \in U$, the marginal $\P_{S \sim \mu^{\text{iso}}}{i^{(j)} \in S}\leq  C \frac{k}{\abs{U}}$.
\item Marginal lower bound: If $p_i:= \P_{S \sim \mu}{i \in S} 
\geq \frac{\sqrt \beta k} n$, then for all $j \in [t_i]$,
$\P_{S \sim \muiso}{i^{(j)} \in S}\geq \frac{k}{C\abs{U}}.$
Furthermore, letting $R : = \set*{i^{(j)} \given p_i  \geq \frac{\sqrt \beta k}{n}, j \in [t_i] }$ 
then for any $\l \leq k$
\[\sum_{S \in \binom{R}{\l}} \muiso_\l (S) \geq 1 -  \sqrt{\beta} \l. \]
\item Bounded ground set size:
$n \beta^{-1} \leq \card{U} \leq n(1+\beta^{-1})$. 
\end{enumerate}
\end{proposition}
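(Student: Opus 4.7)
The plan is to derive all three items directly from the defining relation $t_i = \lceil \tfrac{n}{\beta k} p_i \rceil$, using $\sum_i p_i = k$ together with the observation that $S \sim \muiso$ can be generated in two stages: first sample $S \sim \mu$, then for each selected $i$ choose a copy $i^{(j)}$ uniformly from $[t_i]$. In particular, this immediately gives the identity $\P_{S \sim \muiso}{i^{(j)} \in S} = p_i/t_i$, which is the workhorse for items 1 and 2.

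Item 3 is immediate: summing $\tfrac{n p_i}{\beta k} \leq t_i \leq \tfrac{n p_i}{\beta k} + 1$ over $i$ yields $\tfrac{n}{\beta} \leq \card{U} \leq \tfrac{n}{\beta} + n = n(1+\beta^{-1})$. For item 1, the ceiling lower bound gives $p_i/t_i \leq \beta k/n$, and combining with $\card{U} \leq n(\beta+1)/\beta$ the desired inequality $\beta k/n \leq C k/\card{U}$ reduces to $\beta + 1 \leq 1 + \sqrt{\beta}$, which holds for $\beta \in (0,1]$ since $\beta \leq \sqrt{\beta}$. For the first half of item 2, the hypothesis $p_i \geq \sqrt{\beta}k/n$ is exactly what is needed to absorb the ceiling: $t_i \leq \tfrac{n p_i}{\beta k}\bigl(1 + \tfrac{\beta k}{n p_i}\bigr) \leq \tfrac{n p_i}{\beta k}(1+\sqrt{\beta}) = \tfrac{n p_i C}{\beta k}$, so $p_i/t_i \geq \tfrac{\beta k}{nC} \geq \tfrac{k}{C\card{U}}$ using $\card{U} \geq n/\beta$.

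The main step requiring actual thought is the sum bound on $R$. I would couple $\muiso_\l$ with the two-step process: draw $S \sim \muiso$ (size $k$), then pick a uniform $\l$-subset $T \subseteq S$ without replacement, so that $\sum_{T' \in \binom{R}{\l}} \muiso_\l(T') = \P{T \subseteq R}$. Conditioning on $S$, each of the $\l$ draws is marginally uniform on $S$, so a union bound over the draws gives $\P{T \not\subseteq R \mid S} \leq \l \card{S \setminus R}/k$. The key computation is then $\E_{S \sim \muiso}{\card{S \setminus R}} = \sum_{i : p_i < \sqrt{\beta}k/n} p_i$, which is bounded by $n \cdot \sqrt{\beta}k/n = \sqrt{\beta}k$ because there are at most $n$ such indices and each contributes less than $\sqrt{\beta}k/n$. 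Putting the pieces together yields $\P{T \subseteq R} \geq 1 - \sqrt{\beta}\l$.

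The main obstacle is really just keeping track of how the ceiling in the definition of $t_i$ interacts with the threshold $p_i \geq \sqrt{\beta}k/n$ used to define $R$; outside of this arithmetic check, everything is a direct manipulation of marginals and a single union bound, and no deeper structural property of $\mu$ (such as entropic independence) is invoked.
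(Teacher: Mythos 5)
Your proposal is correct and follows essentially the same route as the paper: the marginal identity $\P_{S \sim \muiso}{i^{(j)} \in S} = p_i/t_i$ plus the ceiling bounds on $t_i$ give items 1--3, and your coupling/union-bound argument for the mass of $\binom{R}{\l}$ is just a reorganization of the paper's union bound over elements $i$ with $p_i < \sqrt{\beta}k/n$ using the marginal $\tfrac{\l p_i}{k}$ of $\mu_\l$.
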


\begin{remark} \label{remark:conversion}
For any $\l\in [k],$ suppose algorithm $\mathcal{A}$ can sample from within total variation distance $\epsilon$ of $\mu^{\text{iso}}_\l$. Then $\mathcal{A}$ can also be used to sample from within total variation distance $\epsilon$ of $\mu_\l$ using the same amount of (parallel) time and machines. 
\end{remark}
\subsection{KL divergence bound}\label{ssec:klbound}

Throughout this section and \cref{ssec:highprob}, let $\ell \in [k].$ We begin by proving a bound on the KL divergence between conditional marginals from an observed set. We denote by $p$ the vector of marginals, that is $p_i=\P_{S\sim \mu}{i\in S}$.

\begin{lemma}\label{lem:klcondmarg}
Let $S \in \binom{[n]}{t}$ for $t \le \frac{1}{2} k$. Let $\mu_{t + 1 \mid S}: [n] \setminus S \to \R_{\ge 0}$ be the marginal distribution of elements in $S' \sim \mu_{t + 1}$ conditioned on $S \subset S'$, namely $\mu_{t + 1 \mid S} \coloneq \mu(\cdot \mid S) D_{(k - t) \to 1}$. Then,
\begin{equation}\label{eq:klcondmarg}\DKL*{\mu_{t + 1 \mid S} \river \frac 1 k p} \le \frac{2}{\alpha k} \log\parens*{\frac 1 {\P_{T \sim \mu}{S \subset T}}} + \frac{2t}{k}. \end{equation}
\end{lemma}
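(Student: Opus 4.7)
The plan is to apply entropic independence of $\mu$ to a well-chosen test distribution and then unpack the resulting $1$-marginal bound to extract the desired KL divergence. Let $\nu$ be the distribution on $\binom{[n]}{k}$ obtained from $\mu$ by conditioning on the event $S\subseteq T$, so $\nu(T)=\mu(T)/\P_{T'\sim\mu}{S\subseteq T'}$ for $T\supseteq S$ and $\nu(T)=0$ otherwise. Because $\nu$ is a pure reweighting of $\mu$ on a sub-support, a one-line calculation gives $\DKL{\nu\river\mu}=\log(1/\P_{T\sim\mu}{S\subseteq T})$, and the entropic-independence hypothesis for $\mu$ then yields
\[\DKL{\nu_1\river\mu_1}\le \frac{1}{\alpha k}\log\frac{1}{\P_{T\sim\mu}{S\subseteq T}}.\]

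The next step is to rewrite the left-hand side in terms of $\mu_{t+1\mid S}$. Since every $T\in\supp(\nu)$ contains $S$, we get $\nu_1(i)=1/k$ for $i\in S$. For $i\in [n]\setminus S$, sampling $T\sim\nu$ and then dropping the fixed part $S$ gives $\P_{T\sim\nu}{i\in T}=\P_{T''\sim\mu(\cdot\mid S)}{i\in T''}=(k-t)\mu_{t+1\mid S}(i)$, so $\nu_1(i)=\tfrac{k-t}{k}\mu_{t+1\mid S}(i)$. Recalling that $\mu_1=p/k$, I split the defining sum of $\DKL{\nu_1\river\mu_1}$ into its $i\in S$ and $i\notin S$ contributions, factor the scalar $\tfrac{k-t}{k}$ out of the latter, and solve for the target KL, producing the identity
\[\DKL{\mu_{t+1\mid S}\river p/k} = \frac{k}{k-t}\DKL{\nu_1\river\mu_1} + \frac{1}{k-t}\sum_{i\in S}\log p_i + \log\frac{k}{k-t}.\]

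To finish, I drop the middle term, which is non-positive since each $p_i\le 1$, bound the first term by plugging in the entropic-independence inequality and using $\tfrac{k}{k-t}\le 2$, and control the last term via $\log\tfrac{k}{k-t} = -\log(1-t/k)\le \tfrac{2t}{k}$; both of these numerical bounds follow from the assumption $t\le k/2$. Summing the three contributions yields the claimed inequality. The only non-mechanical move is the choice of $\nu$ together with the observation that $\DKL{\nu\river\mu}$ collapses to a single $\log(1/\P{S\subseteq T})$; everything downstream is careful marginal bookkeeping, and I do not anticipate a serious obstacle.
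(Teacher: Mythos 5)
Your proposal is correct and follows essentially the same route as the paper: both apply the entropic independence definition to the conditioned distribution $\mu(\cdot\mid S\subseteq T)$, use that its KL divergence to $\mu$ collapses to $\log\bigl(1/\P_{T\sim\mu}{S\subset T}\bigr)$, and then relate the resulting $1$-marginal KL bound to $\DKL*{\mu_{t+1\mid S}\river \tfrac 1 k p}$ by splitting into $i\in S$ and $i\notin S$ contributions with the same $t\le k/2$ numerical estimates. The only difference is cosmetic: you derive an exact identity before bounding, while the paper works directly with a chain of inequalities.
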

This bound follows from the entropic independence of $\mu$. We defer the proof to \Tag{\cref{sec:proofs}}\Tag<sigconf>{the full version}.

By averaging \cref{lem:klcondmarg} over $\mu_S$ (the conditional distribution of $T \sim \mu$ on $S \subset T$), we immediately obtain the following corollary.

\begin{corollary}\label{cor:klcondmarg}
Let $t \le \frac{1}{2} k$. Then following the notation of \cref{lem:klcondmarg},
\[\sum_{S \in \binom{[n]}{t}} \mu D_{k \to t}(S) \DKL*{\mu_{t + 1 \mid S} \river \frac 1 k p} \le \frac{2t}{k}\parens*{\frac 1 \alpha \log\parens*{\frac{2n}{k}} + 1}. \]
\end{corollary}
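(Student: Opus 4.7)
The plan is to apply \cref{lem:klcondmarg} pointwise at each $S \in \binom{[n]}{t}$ and take a $\mu D_{k\to t}$-weighted average. The additive $\frac{2t}{k}$ term in \cref{eq:klcondmarg} passes through the averaging trivially since $\mu D_{k \to t}$ is a probability distribution, so the only real work is to upper bound
\[
\Psi \;:=\; \sum_{S \in \binom{[n]}{t}} \mu D_{k\to t}(S)\,\log\!\parens*{\frac{1}{\P_{T\sim \mu}{S \subseteq T}}}.
\]

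The key identity I would use is that, by the definition of the down operator, $\mu D_{k\to t}(S) = \binom{k}{t}^{-1}\P_{T\sim \mu}{S \subseteq T}$. Writing $q_S := \mu D_{k\to t}(S)$, this gives $\P_{T\sim \mu}{S \subseteq T} = q_S \binom{k}{t}$, and hence $\Psi = \E_{S\sim q}\bigl[\log\bigl(1/(q_S \binom{k}{t})\bigr)\bigr]$. Applying Jensen's inequality to the concave function $\log$, I would then bound
\[
\Psi \;\le\; \log\!\parens*{\E_{S\sim q}\bracks*{\frac{1}{q_S \binom{k}{t}}}} \;=\; \log\!\parens*{\frac{\card{\supp(q)}}{\binom{k}{t}}} \;\le\; \log\!\parens*{\frac{\binom{n}{t}}{\binom{k}{t}}}.
\]

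Finally, I would use the hypothesis $t \le k/2$ to simplify this last ratio. Writing $\binom{n}{t}/\binom{k}{t} = \prod_{i=0}^{t-1} \frac{n-i}{k-i}$, each factor satisfies $\frac{n-i}{k-i} \le \frac{n}{k-t+1} \le \frac{2n}{k}$, so $\Psi \le t\log(2n/k)$. Combining this with the $\frac{2}{\alpha k}$ prefactor coming from the first term of \cref{eq:klcondmarg}, and adding the $\frac{2t}{k}$ coming from the second term, yields exactly $\frac{2t}{k}\bigl(\frac{1}{\alpha}\log(2n/k) + 1\bigr)$, which is the claimed bound.

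No real obstacle arises; the only minor subtlety is getting the constant inside the logarithm to match — the cruder estimate $\binom{n}{t} \le (en/t)^t$ would have produced $\log(en/k)$, which is slightly larger. Exploiting $t\le k/2$ via the product-of-ratios representation is what gives the clean constant $2$ in $\log(2n/k)$.
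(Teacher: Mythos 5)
Your proposal is correct and follows essentially the same route as the paper: after averaging \cref{lem:klcondmarg}, both arguments use the identity $\P_{T\sim\mu}{S\subseteq T}=\mu D_{k\to t}(S)\binom{k}{t}$ and bound the resulting expectation by $\log\bigl(\binom{n}{t}/\binom{k}{t}\bigr)\le t\log(2n/k)$ using $t\le k/2$. The only cosmetic difference is that you apply Jensen's inequality to the combined term directly, whereas the paper splits off $\log\binom{k}{t}$ and invokes the maximum-entropy bound $\le\log\binom{n}{t}$, which is the same inequality in disguise.
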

\begin{proof}
It suffices to apply \cref{lem:klcondmarg}, and the calculation
\begin{gather*}
\sum_{S \in \binom{[n]}{t}} \mu D_{k \to t}(S) \log\parens*{\frac{1}{\P_{T \sim \mu}{S \subset T}}}\\
= \sum_{S \in \binom{[n]}{t}} \mu D_{k \to t}(S) \log \parens*{\frac 1 {\mu D_{k \to t}(S) \binom k t}} \\
= \sum_{S \in \binom{[n]}{t}} \mu D_{k \to t}(S) \log \parens*{\frac 1 {\mu D_{k \to t}(S)}} + \log \frac 1 {\binom k t} \\
\le \log \frac{\binom n t}{\binom k t} \le t\log\parens*{\frac{2n} k}.
\end{gather*}
The first equality used $\P_{T \sim \mu}{S \subset T} = \mu D_{k \to t}(S) \binom k t$, and the last line used that the negative entropy of a distribution supported on $N$ elements is bounded by $\log N$. 
\end{proof}

Finally, we use \cref{cor:klcondmarg} to derive a KL divergence bound between the distributions $\mu_{\ell}$ and $\mu'_{\ell}$, respectively the target and proposal distributions encountered in our rejection sampling scheme.

\begin{lemma}\label{lem:klmumup}
Let $\mu'_{j}$ be the distribution of the set formed by $j$ independent draws from $\frac 1 k p$. Let $\ell \le \frac{1}{2} k$. Then,
\[\DKL*{\mu_\ell \river \mu'_\ell} \le \frac{\ell^2}{k}\parens*{\frac 1 \alpha \log\parens*{\frac{2n}{k}} + 1}. \]
\end{lemma}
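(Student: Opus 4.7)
The natural approach is a chain-rule decomposition into single-step conditional KL divergences, each of which is controlled by \cref{cor:klcondmarg}. The one subtlety is that $\mu'_\ell$ may produce multisets (repeated draws), so we cannot directly write both distributions on the same domain $\binom{[n]}{\ell}$. I will handle this by lifting to ordered tuples.

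First, I would define two distributions $\bar\mu_\ell$ and $\bar\mu'_\ell$ on ordered tuples $(x_1,\ldots,x_\ell) \in [n]^\ell$. For $\bar\mu'_\ell$, this is the i.i.d.\ product $\prod_{i=1}^\ell \frac{p_{x_i}}{k}$. For $\bar\mu_\ell$, I would draw $S \sim \mu_\ell$ and then place the elements in a uniformly random order; equivalently, $\bar\mu_\ell(x_1,\ldots,x_\ell) = \frac{\P_{T\sim\mu}{\set{x_1,\ldots,x_\ell}\subseteq T}}{k(k-1)\cdots(k-\ell+1)}$ when the $x_i$ are distinct and $0$ otherwise. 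Since the unordered set is a deterministic function of the ordered tuple, the data-processing inequality gives $\DKL{\mu_\ell \river \mu'_\ell} \le \DKL{\bar\mu_\ell \river \bar\mu'_\ell}$, reducing the lemma to bounding the right-hand side.

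Next, I would apply the chain rule for KL divergence,
\[\DKL{\bar\mu_\ell \river \bar\mu'_\ell} = \sum_{t=0}^{\ell-1} \E*_{(x_1,\ldots,x_t) \sim \bar\mu_\ell}{\DKL*{\bar\mu_\ell(x_{t+1}\mid x_1,\ldots,x_t) \river \bar\mu'_\ell(x_{t+1}\mid x_1,\ldots,x_t)}}.\]
Because $\bar\mu'_\ell$ is a product distribution, the conditional on the right is simply $\tfrac{1}{k}p$. On the left, conditioning $\bar\mu_\ell$ on the first $t$ entries forming some set $S_t \in \binom{[n]}{t}$ gives (by symmetry of random ordering) exactly the marginal distribution of a uniformly chosen one of the remaining $k-t$ elements of $T \sim \mu(\cdot \mid S_t)$, which is the distribution $\mu_{t+1\mid S_t} = \mu(\cdot \mid S_t) D_{(k-t)\to 1}$ from \cref{lem:klcondmarg}. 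Moreover, the marginal distribution of $(x_1,\ldots,x_t)$ under $\bar\mu_\ell$ yields the set $S_t$ distributed as $\mu_t = \mu D_{k \to t}$. Therefore each chain-rule term is exactly $\sum_{S_t}\mu_t(S_t)\DKL{\mu_{t+1\mid S_t}\river \tfrac{1}{k}p}$, which \cref{cor:klcondmarg} bounds by $\tfrac{2t}{k}\bigl(\tfrac{1}{\alpha}\log(2n/k) + 1\bigr)$ provided $t \le \tfrac{1}{2}k$ (which holds since we assume $\ell \le \tfrac{1}{2}k$).

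Finally, I would sum the chain-rule terms: $\sum_{t=0}^{\ell-1}\tfrac{2t}{k} = \tfrac{\ell(\ell-1)}{k} \le \tfrac{\ell^2}{k}$, yielding the claimed bound. No serious obstacle arises; the main thing to be careful about is the ordered-tuple lift and verifying that the conditional of $\bar\mu_\ell$ is indeed $\mu_{t+1\mid S_t}$, which is a direct consequence of the symmetry built into the ordering step.
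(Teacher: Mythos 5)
Your proposal is correct and follows essentially the same route as the paper: a chain-rule decomposition of the KL divergence into per-step conditional terms, each bounded by \cref{cor:klcondmarg}, and then summed (the paper phrases this as telescoping $\DKL{\mu_j \river \mu'_j} - \DKL{\mu_{j-1} \river \mu'_{j-1}}$ over $1 \le j \le \ell$). Your explicit lift to ordered tuples with the data-processing inequality is just a more careful rendition of the chain-rule step the paper leaves implicit (and in fact the likelihood ratio depends only on the unordered set, so that step loses nothing).
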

\begin{proof}
For any $j \in [\ell]$, following the notation of \cref{lem:klcondmarg}, 
\begin{multline*}
\DKL{\mu_j \river \mu'_j} - \DKL{\mu_{j - 1} \river \mu'_{j - 1}} =\\ \sum_{S \in \binom{[n]}{j - 1}} \mu D_{k \to (j - 1)}(S) \DKL*{\mu_{j \mid S} \river \frac 1 k p}
\le\\ \frac{2(j - 1)}{k}\parens*{\frac 1 \alpha \log\parens*{\frac{2n}{k}} + 1}.
\end{multline*}
In the first line, we used the chain rule of KL divergence, and in the second line used \cref{cor:klcondmarg}. Finally, the conclusion follows by telescoping the above display for $1 \le j \le \ell$.
\end{proof}

\Cref{lem:klmumup} bounds the KL divergence between $\mu_\ell$ and $\mu'_\ell$, which can be thought of as an average log of acceptance probability for our rejection sampling scheme. For constant $\frac 1 \alpha$, this bound suggests that we can take $\ell \approx \sqrt{k}$ and obtain an efficient sampler for $\ell$-marginals; however, it is only an average bound, whereas we need a high-probability bound; this is because we have multiple steps of rejection sampling and we cannot afford to fail in one step with constant probability. We make this rigorous in \cref{ssec:highprob}, where we use the tools from this section to give concentration bounds on the acceptance probability of rejection sampling.

\subsection{Concentration of acceptance probability}\label{ssec:highprob}

In this section, assume that we have already performed the transformation in \cref{prop:near-isotropic} parameterized by some $\beta$, and obtained a distribution $\nu^{\text{iso}}: \binom{U}{k} \to \R_{\geq 0}$ and a set $R\subseteq U$ of elements with lower bounds on marginals as given by \cref{prop:near-isotropic}. Let $\nu:= \nu^{\text{iso}}$ and let $\nu'$ be defined analogously to \cref{ssec:klbound}.
Our goal is to sample from within $\epsilon$ total variation of $\nu_{\ell}$ for a suitably chosen $\ell,$ which also implies that we can sample from within $\epsilon$ of $\mu_{\ell}$ (see \cref{remark:conversion}). Our algorithm will be the modified rejection sampler (\cref{alg:modified rejection sampling}). 

To use \cref{alg:modified rejection sampling} with $P = \nu D_{k \to \ell}$ and $Q$ the $\ell$-wise product distribution drawing from $\frac 1 k p$, we first define a relevant high-probability set $\Omega$ on our state space $\xset \coloneq \binom{U}{k}$. Our set $\Omega$ will be a subset of the following set, for some $\varepsilon > 0$ we will choose later:
\[\tOmega_\varepsilon \coloneq \set*{S \in \binom{U}{\ell} \given \nu_{|T|}(T) \ge \varepsilon^{|T|},\; \forall T \subseteq S }. \]
In other words, $\tOmega_\varepsilon$ contains all sets $S$ such that all subsets $T \subset S$ are relatively well-represented according to $\nu_{|T|}(T)$. We begin with an observation lower bounding the measure of $\tOmega_\varepsilon$.

\begin{lemma}\label{lem:omegabound}
For any $0 \le \varepsilon \le \frac{1}{2\abs{U}\ell}$, we have
$\sum_{S \not\in \tOmega_\varepsilon} \nu_\ell(S) \le 2\abs{U}\ell\varepsilon$.
\end{lemma}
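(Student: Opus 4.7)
The set $\tOmega_\varepsilon$ is defined by excluding every $S$ that contains some ``under-represented'' subset $T$, so the natural approach is a union bound over all candidate bad subsets $T\subseteq U$. Specifically, $S \notin \tOmega_\varepsilon$ iff there exists $T \subseteq S$ with $\nu_{|T|}(T) < \varepsilon^{|T|}$, so
\[
\sum_{S \notin \tOmega_\varepsilon} \nu_\ell(S) \;\le\; \sum_{t=1}^{\ell}\ \sum_{\substack{T \in \binom{U}{t} \\ \nu_t(T) < \varepsilon^t}} \P_{S \sim \nu_\ell}\bracks*{T \subseteq S}.
\]
(The $t=0$ case is vacuous since $\nu_0(\emptyset)=1\ge \varepsilon^0$.)

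The key identity I will use is that down operators compose, i.e.\ $D_{k\to\ell}D_{\ell\to t}=D_{k\to t}$, which gives $\nu_\ell D_{\ell\to t} = \nu_t$. Together with the definition of $D_{\ell\to t}$ this yields
\[
\P_{S\sim \nu_\ell}\bracks*{T\subseteq S} \;=\; \binom{\ell}{t}\,\nu_t(T)
\]
for every $T\in\binom{U}{t}$. Plugging this into the union bound and using $\nu_t(T)<\varepsilon^t$ on the bad sets bounds each inner sum by $\binom{\ell}{t}\binom{|U|}{t}\varepsilon^t$.

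Finally I will bound $\binom{\ell}{t}\binom{|U|}{t}\le (\ell|U|)^t/(t!)^2 \le (\ell|U|)^t$ and sum the resulting geometric series. Since the hypothesis $\varepsilon\le \frac{1}{2|U|\ell}$ gives $\ell|U|\varepsilon\le\tfrac12$,
\[
\sum_{S \notin \tOmega_\varepsilon} \nu_\ell(S) \;\le\; \sum_{t=1}^{\ell} (\ell|U|\varepsilon)^t \;\le\; \frac{\ell|U|\varepsilon}{1-\ell|U|\varepsilon} \;\le\; 2\ell|U|\varepsilon,
\]
which is the claimed bound. There is no real obstacle here; the only step worth double-checking is the marginal identity $\P_{S\sim\nu_\ell}[T\subseteq S]=\binom{\ell}{t}\nu_t(T)$, which follows from a direct computation that $D_{k\to\ell}D_{\ell\to t}=D_{k\to t}$ (using $\binom{k-t}{\ell-t}/(\binom{k}{\ell}\binom{\ell}{t}) = 1/\binom{k}{t}$).
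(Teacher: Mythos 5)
Your proof is correct and follows essentially the same route as the paper: both arguments reduce the mass outside $\tOmega_\varepsilon$ to a sum over ``bad'' subsets $T$ with $\nu_t(T) \le \varepsilon^t$, use the identity $\P_{S\sim\nu_\ell}\bracks*{T\subseteq S}=\binom{\ell}{t}\nu_t(T)$ (the paper writes it as $\nu_t(T)=\sum_{S\supseteq T}\tfrac{1}{\binom{\ell}{t}}\nu_\ell(S)$, obtained from composing down operators just as you do), and then bound $\binom{\ell}{t}\binom{\abs{U}}{t}\varepsilon^t\le(\abs{U}\ell\varepsilon)^t$ and sum the geometric series under $\abs{U}\ell\varepsilon\le\tfrac12$. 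The only cosmetic difference is that the paper assigns each bad $S$ a single ``certificate'' $T$ instead of taking a plain union bound, which changes nothing in the final estimate.
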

\begin{proof}
Let $\cert \coloneq \cup_{t = 1}^\ell \{T \in \binom{U}{t} \mid \nu_t(T) \le \varepsilon^t\}$. For any $S \not\in \tOmega_\varepsilon$, we say $T \in \cert$ is a ``certificate'' of $S$ if $T \subset S$; every $S \in \tOmega_\varepsilon^c$ has at least one certificate, so there is a map $\map: \tOmega_\varepsilon^c \to \cert$. Moreover, for some $T \in \cert$, let $\tOmega_\varepsilon^c(T)$ be the set of all $S \in \tOmega_\varepsilon^c$ such that $\map(S) = T$. Then since
\[\nu_t(T) = \sum_{S \supseteq T} \frac{1}{\binom \ell t} \nu_\ell(S) \implies \sum_{S \in \tOmega_\varepsilon^c(T)} \nu_\ell(S) \le \binom{\ell}{t}\nu_t(T),\]
summing over all $T \in \cert$ yields
\begin{align*}
\sum_{S \in \tOmega_\varepsilon^c} \nu_\ell(S) &\le \sum_{T \in \cert} \binom{\ell}{t}\nu_{|T|}(T) 
= \sum_{1 \le t \le \ell} \sum_{\substack{T \in \binom{U}{t} \\ \nu_t(T) \le \varepsilon^t}} \binom{\ell}{t}\nu_t(T) \\
&\le \sum_{1 \le t \le \ell} (\abs{U}\ell \varepsilon)^t \le \frac{\abs{U}\ell\varepsilon}{1 - \abs{U}\ell\varepsilon} \le 2\abs{U}\ell\varepsilon.
\end{align*}
The last line used the approximations $\binom{\ell}{t} \le \ell^t$, $\binom{\abs{U}}{t} \le \abs{U}^t$.
\end{proof}

For the remainder of the section we will specifically use $\varepsilon = \frac{\epsilon}{32\abs{U}\ell}$, such that $\tOmega_\varepsilon$ captures at least a $1 - \frac \epsilon {16}$ fraction of the mass of $\binom{U}{\ell}$ according to $\nu_\ell$. We will also drop $\varepsilon$ from $\tOmega_\varepsilon$ for simplicity. 

Our next goal is to show that almost all of the sets in $\tOmega$ have a polynomially bounded acceptance probability when the proposal is given by independent draws from $\frac 1 k p$. Consider iteratively building a set $S_t$ for all $1 \le t \le \ell$, where $S_t$ is a random variable formed by $S_{t - 1} \cup \{i_t\}$ for $i_t \sim \frac 1 k p$. In particular, we use $i_t$ to denote the $t^{\text{th}}$ draw from $\frac 1 k p$ in this process. For parameters $\tau, \gamma \ge 0$ to be defined later, iteratively define the random variables:
\begin{align*}
Y_{t + 1} &\coloneq Y_t \exp\parens*{\Delta_{t + 1}}, \\
\Delta_{t + 1} &\coloneq \begin{cases}
\gamma \log\parens*{\frac{\nu_{t + 1 \mid S_t} (i_{t + 1})}{ \frac 1 k p_{i_{t + 1}}}} - \tau & \nu_t(S_t) \ge \varepsilon^t \text{ and } i_{t+1} \in R \\
-\infty & \text{ otherwise}
\end{cases}
\end{align*}
with $C = 1 + \sqrt \beta$ and $R$ as defined in \cref{prop:near-isotropic}. Also by \cref{prop:near-isotropic}, $p_{i} \leq \frac{Ck}{\abs{U}}$ for all $i\in U.$ We use the convention $\exp(-\infty) = 0.$

We next prove that $Y_{t + 1}$ is a submartingale for appropriate parameter choices.

\begin{lemma}\label{lem:submartingale}
Let $S_t = T$ have $\nu_t(T) \ge \varepsilon^t.$ Assume that $\sqrt \beta \leq \min\set*{\frac{1}{3\gamma} ,\frac{t}{\alpha k} \log \frac 1 \varepsilon} $. Then,
\[\tau \ge \abs{U}^\gamma \gamma(1 + \gamma) \cdot \parens*{\frac{12t}{\alpha k} \log \frac 1 \varepsilon} \implies \E*_{i \sim \nu_{t + 1 \mid S_t}}{Y_{t + 1} \given S_t = T} \le Y_t. \]
\end{lemma}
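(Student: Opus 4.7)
The plan is to show $\E*_{i \sim \nu_{t+1 \mid T}}{\exp(\Delta_{t+1})} \le 1$, which combined with $Y_{t+1} = Y_t \exp(\Delta_{t+1})$ yields the supermartingale property (since $Y_t$ is constant when $S_t = T$ is fixed). Under the hypothesis $\nu_t(T) \ge \varepsilon^t$, the $i \notin R$ branch of $\Delta_{t+1}$ equals $-\infty$ and contributes $0$ via $\exp(-\infty) = 0$, so expanding the expectation gives
\[\E*_{i \sim \nu_{t+1 \mid T}}{\exp(\Delta_{t+1})} = \exp(-\tau) \sum_{i \in R} \nu_{t+1 \mid T}(i)^{\gamma+1} \parens*{\frac{1}{k} p_i}^{-\gamma},\]
which is precisely a restricted Renyi-type sum in the form handled by \cref{lem:klrenyi}.

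The second step is to invoke \cref{lem:klrenyi} with target $q = \nu_{t+1 \mid T}$, proposal $p = \frac{1}{k} p$ on the ground set $U$, $\lambda = \gamma + 1$, and restricted set $S = R$. The two required marginal bounds are supplied by \cref{prop:near-isotropic}: the universal upper bound $\frac{1}{k} p_i \le C/\abs{U}$ for all $i \in U$, and the restricted lower bound $\frac{1}{k} p_i \ge 1/(C\abs{U})$ on $i \in R$. The lemma then yields
\[\sum_{i \in R} \nu_{t+1 \mid T}(i)^{\gamma+1} \parens*{\frac{1}{k} p_i}^{-\gamma} \le C^\gamma \parens*{1 + \abs{U}^\gamma \gamma(\gamma+1)\parens*{\DKL{\nu_{t+1 \mid T} \river \frac{1}{k} p} + \log C}}.\]

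For the KL term, I apply \cref{lem:klcondmarg} to $\nu$, which inherits $\frac{1}{\alpha}$-entropic independence via \cref{prop:entropic-ind-subdivision}. Using $\P_{S' \sim \nu}{T \subset S'} = \nu_t(T)\binom{k}{t} \ge \varepsilon^t$, so $\log(1/\P_{S' \sim \nu}{T \subset S'}) \le t \log(1/\varepsilon)$, gives $\DKL{\nu_{t+1 \mid T} \river \frac{1}{k} p} = O(\frac{t}{\alpha k} \log \frac{1}{\varepsilon})$ (the lower-order $\frac{2t}{k}$ term being dominated for $\varepsilon$ small enough relative to $\alpha$, which holds in the eventual application).

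Finally I combine the bounds using the hypotheses on $\beta$. The condition $\gamma\sqrt{\beta} \le 1/3$ gives $C^\gamma = (1 + \sqrt{\beta})^\gamma \le e^{1/3} < 2$, while $\sqrt{\beta} \le \frac{t}{\alpha k}\log(1/\varepsilon)$ ensures $\log C \le \sqrt{\beta}$ is absorbed by the KL bound. Consequently the Renyi sum is at most $O(\abs{U}^\gamma \gamma(\gamma+1) \cdot \frac{t}{\alpha k}\log \frac{1}{\varepsilon})$, and the stated lower bound on $\tau$ is calibrated so that $\exp(\tau) \ge 1 + \tau$ exceeds this quantity, yielding $\exp(-\tau) \cdot (\text{Renyi sum}) \le 1$ and hence $\E*_{i \sim \nu_{t+1 \mid T}}{Y_{t+1}} \le Y_t$. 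The main subtlety is ensuring the restricted form of \cref{lem:klrenyi} is applied cleanly: the proposal lower bound only holds on $R$, and matching this with the $-\infty$ branch of $\Delta_{t+1}$ (which automatically truncates the sum to $R$) is what lets the Renyi bound go through without requiring a global lower bound on $\frac{1}{k}p_i$.
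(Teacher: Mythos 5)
Your proposal is correct and follows essentially the same route as the paper's proof: expand $\E_{i\sim\nu_{t+1\mid T}}[\exp(\Delta_{t+1})]$ as the $R$-restricted Renyi-type sum, apply the restricted form of \cref{lem:klrenyi} with $\lambda=\gamma+1$ and $C=1+\sqrt\beta$ using the marginal bounds of \cref{prop:near-isotropic}, control the KL term via \cref{lem:klcondmarg} and $\nu_t(T)\ge\varepsilon^t$, and absorb $C^\gamma$ and $\log C$ using the two conditions on $\sqrt\beta$ before invoking $1+\tau\le e^\tau$. The only difference is that you leave the final constant-tracking (the paper's explicit chain yielding the factor $12$) at the level of ``calibrated so that,'' which the stated hypothesis indeed accommodates.
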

\begin{proof}
If $ Y_t = 0$ then $Y_{t+1}=0$ by definition. In the following, assume $Y_t > 0.$
By the definition of $Y_{t + 1} = Y_t \exp(\Delta_{t + 1})$, and since $\nu_t(T) \ge \varepsilon^t$, we have
\begin{align*}
&\E*_{i \sim \nu_{t + 1 \mid T}}{\frac{Y_{t + 1}}{Y_t} \given S_t = T} \\
&= \exp\parens*{-\tau} \E*_{i \sim \nu_{t + 1 \mid T}}{\1_{i\in R}\parens*{\frac{\nu_{t + 1 \mid T} (i)}{ \frac 1 k p_{i}}}^\gamma \given S_t = T} \\
&= \exp\parens*{-\tau} \sum_{i\in R} \nu_{t + 1 \mid T} (i) \parens*{\frac{\nu_{t + 1 \mid T} (i)}{ \frac 1 k p_{i}}}^\gamma\\
&\le \exp\parens*{-\tau} C^{\gamma} \parens*{1 + \abs{U}^\gamma \gamma(1 + \gamma) \parens*{\DKL*{\nu_{t + 1 \mid T} \river \frac 1 k p} + \log C }} \\
&\le \exp\parens*{-\tau} (1 + 2 \sqrt \beta \gamma) \parens*{1 + \abs{U}^\gamma \gamma(1 + \gamma) \cdot \parens*{\frac{4t}{\alpha k} \log \frac 1 \varepsilon + \sqrt \beta} }.
\end{align*}
The second-to-last inequality uses \cref{lem:klrenyi} with $C = 1 + \sqrt \beta$, and the last inequality used \cref{lem:klcondmarg}, which shows that since $\nu_t(T) \ge \varepsilon^t,$
\begin{align*}\DKL*{\nu_{t + 1 \mid T} \river \frac 1 k p} &\le \frac{2}{\alpha k} \log\parens*{\frac{1}{\P_{S \sim \nu} {T \subset S}}} + \frac{2t} k \\
&= \frac{2}{\alpha k} \log\parens*{\frac{1}{\nu_t(T) \binom{k}{t}}} + \frac{2t} k \le \frac{4t}{\alpha k} \log \frac 1 \varepsilon,\end{align*}
as well as $\log(1+\sqrt \beta) \leq \sqrt \beta$ and \[  (1+ x)^{\gamma} \leq e^{ x\gamma} \leq 1+2 x \gamma\] for $x \gamma := \sqrt \beta \gamma \leq \frac 1 3.$
The conclusion follows from 
\begin{align*}
    &(1 + 2 \sqrt \beta \gamma) \parens*{1 + \abs{U}^\gamma \gamma(1 + \gamma) \cdot \parens*{\frac{4t}{\alpha k} \log \frac 1 \varepsilon + \sqrt \beta} }\\
    &\leq 1 + 2\sqrt \beta \gamma + \abs{U}^\gamma \gamma(1 + \gamma) \cdot \parens*{\frac{5t}{\alpha k} \log \frac 1 \varepsilon } (1+ 2\sqrt \beta \gamma) \\
    &\leq 1 + \gamma \parens*{\frac{2t}{\alpha k} \log \frac 1 \varepsilon } + \abs{U}^\gamma \gamma(1 + \gamma) \cdot \parens*{\frac{5t}{\alpha k} \log \frac 1 \varepsilon }\parens*{1 + \frac 2 3}\\
    &\leq 1 + \abs{U}^\gamma \gamma(1 + \gamma) \cdot \parens*{\frac{12t}{\alpha k} \log \frac 1 \varepsilon }  \leq 1+\tau \leq \exp(\tau).\qedhere
\end{align*}
\end{proof}

Now, applying \cref{lem:submartingale} with the definition of $\tOmega' \coloneq \tOmega\cap \set*{S \in \binom{R}{\ell} }$ allows us to obtain a high-probability bound on the acceptance probability of our rejection sampling scheme.

\begin{lemma}\label{lem:mostomegagood}
Let $B \ge 1$. For sufficiently small $\epsilon \in (0, 1)$, and $\ell \in [k]$ satisfying $12\ell^2 \parens{\frac{16} \epsilon}^{\frac{3}{B}} \log \frac 1 \varepsilon\le \alpha k$ and assuming our choices of parameters satisfy
$\sqrt \beta\leq \min\set*{\frac{1}{3\gamma} ,\frac{1}{\alpha k} \log \frac 1 \varepsilon}$, we have
\[\P*_{S_\ell \sim \nu_\ell}{\frac{\nu_\ell(S_\ell)}{\nu'_\ell(S_\ell)} \ge \abs{U}^B \given S_\ell \in \tOmega'} \le \frac \epsilon 8. \]
\end{lemma}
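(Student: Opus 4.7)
The plan is to apply Markov's inequality to the supermartingale $Y_\ell$, after rewriting $Y_\ell$ on the good event in terms of the target density ratio $\nu_\ell/\nu'_\ell$. The key identity is that for any trajectory $(i_1,\dots, i_\ell)$ produced by the iterative $\nu$-sampler (where $i_{t+1}\sim \nu_{t+1\mid S_t}$),
\[\prod_{t=0}^{\ell-1}\frac{\nu_{t+1\mid S_t}(i_{t+1})}{\tfrac{1}{k}p_{i_{t+1}}}=\frac{\nu_\ell(S_\ell)}{\nu'_\ell(S_\ell)}.\]
The numerator product telescopes to $\nu_\ell(S_\ell)/\ell!$ by symmetry (since $\nu_\ell=\nu D_{k\to\ell}$ weights every ordering of a fixed $\ell$-set equally, and their common value is $\nu_\ell(S_\ell)/\ell!$), and the denominator product equals $\nu'_\ell(S_\ell)/\ell!$ on distinct draws. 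On $\{S_\ell\in\tOmega'\}$, every prefix $S_t$ satisfies $\nu_t(S_t)\ge \varepsilon^t$ (by the definition of $\tOmega$) and every $i_{t+1}$ lies in $R$, so no $\Delta_{t+1}$ takes the $-\infty$ branch and $Y_\ell = \exp(-\ell\tau)\parens*{\nu_\ell(S_\ell)/\nu'_\ell(S_\ell)}^{\gamma}$.

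Iterating \cref{lem:submartingale} (noting that on the complementary ``bad'' branch $Y_\ell=0$ trivially satisfies the supermartingale inequality) yields $\E_{S_\ell\sim\nu_\ell}{Y_\ell}\le Y_0=1$. Combining \cref{lem:omegabound} with the high-marginal guarantee $\P{S_\ell\subseteq R}\ge 1-\sqrt{\beta}\,\ell$ from \cref{prop:near-isotropic} gives $\P{\tOmega'}\ge 1-\epsilon/8$, so Markov's inequality produces
\[\P*{\frac{\nu_\ell(S_\ell)}{\nu'_\ell(S_\ell)}\ge \abs{U}^B\given \tOmega'}\le \frac{\E{Y_\ell}}{\abs{U}^{B\gamma}\exp(-\ell\tau)\,\P{\tOmega'}}\le \frac{2\exp(\ell\tau)}{\abs{U}^{B\gamma}}.\]

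To close the argument, I would set $\gamma=\frac{3\log(16/\epsilon)}{B\log\abs{U}}$, so $\abs{U}^{\gamma}=(16/\epsilon)^{3/B}$ and $\abs{U}^{B\gamma}=(16/\epsilon)^{3}$. Substituting $\tau=\abs{U}^{\gamma}\gamma(1+\gamma)\cdot \frac{12\ell}{\alpha k}\log(1/\varepsilon)$ from \cref{lem:submartingale} and using the hypothesis $12\ell^2(16/\epsilon)^{3/B}\log(1/\varepsilon)\le \alpha k$ together with $\gamma(1+\gamma)=O(\gamma)=O(\log(16/\epsilon)/\log\abs{U})$ gives $\ell\tau\le 2\log(16/\epsilon)$; the conditional tail is then at most $2(16/\epsilon)^{-1}=\epsilon/8$. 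The main obstacle is the coupled parameter bookkeeping: the inequalities $\sqrt{\beta}\le \min\{1/(3\gamma),\,\log(1/\varepsilon)/(\alpha k)\}$ from \cref{lem:submartingale}, the constraint $\sqrt{\beta}\,\ell\le \epsilon/16$ needed for $\P{\tOmega'}\ge 1-\epsilon/8$, and the identity $\abs{U}=\Theta(n/\beta)$ from \cref{prop:near-isotropic} must all be simultaneously compatible with the chosen $\gamma$ and the per-sample ``machine budget'' $\abs{U}^{\gamma}$ it induces.
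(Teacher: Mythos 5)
Your proposal is correct and follows essentially the same route as the paper: you express the likelihood ratio as the telescoping product over the ordered $\nu$-driven process, note that on $\tOmega'$ the submartingale $Y_\ell$ equals $\exp(-\ell\tau)\,(\nu_\ell/\nu'_\ell)^\gamma$, combine $\E{Y_\ell}\le 1$ with Markov's inequality and the fact that $\tOmega'$ carries at least half the mass, and then choose $\gamma \propto \log(16/\epsilon)/(B\log\abs{U})$ so the lemma's hypothesis on $\ell$ makes $\ell\tau$ small. The only differences are cosmetic constants (the paper takes $\gamma = \tfrac{2\log(16/\epsilon)}{B\log\abs{U}}$ and $\tau = \tfrac{\log(16/\epsilon)}{\ell}$, while you take $\gamma$ with a $3$ and $\tau$ equal to the minimum permitted by \cref{lem:submartingale}), and both sets of choices close the bound to $\epsilon/8$.
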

\begin{proof}
Throughout this proof, we will assume
\[\gamma = \frac{2\log \frac{16}{\epsilon}}{B\log \abs{U}}, \; \tau = \frac{\log \frac{16}{\epsilon}}{\ell}.\]
We first observe that our parameter choices indeed satisfy the condition on $\tau$ used in \cref{lem:submartingale}:
\begin{align*}
\frac{\gamma(1 + \gamma)\abs{U}^{\gamma}}{\log\parens*{\frac{16}{\epsilon}}} \le \parens*{\frac{16}{\epsilon}}^{\frac {3} B}
\implies \gamma(1 + \gamma)\abs{U}^{\gamma} \cdot \parens*{\frac{12\ell}{\alpha k} \log \frac 1 \varepsilon } \cdot \frac{\ell}{\log \frac{16} \epsilon} \le 1.
\end{align*}

In the following we denote $\hmu_j$ to be the joint distribution of $\{i_1, i_2, \ldots, i_j\}$ where $i_1 \sim \nu_1$, $i_2 \sim \nu_{2 \mid S_1 = \{i_1\}}$, and so on. In other words, if $S_\ell$ is the unordered set of $\{i_1, i_2, \ldots, i_\ell\}$, we have $\nu_\ell(S_\ell) = \ell! \cdot \hmu(\{i_1, i_2, \ldots, i_\ell\})$. We similarly define $\hmu'_j$ so that $\nu'_\ell(S_\ell) = \ell! \cdot \hmu'_\ell(\{i_1, i_2, \ldots, i_\ell\})$. For $S_\ell \in \binom{U}{\ell}$, and some realization $\{i_1, i_2, \ldots, i_\ell\}$ whose unordered set is $S_\ell$, cancelling a factor of $\ell!$ yields
\begin{equation}\label{eq:Ldef}L(S_\ell) \coloneq \frac{\nu_\ell(S_\ell)}{\nu'_\ell(S_\ell)} = \frac{\hmu_\ell(\{i_1, i_2, \ldots, i_\ell\})}{\hmu'_\ell(\{i_1, i_2, \ldots, i_\ell\})} = \prod_{j \in \ell} \frac{\nu_{j \mid S_{j - 1} = \{i_1, i_2, \ldots, i_{j - 1}\}}(\{i_j\})}{\frac 1 k p_{i_j}}\end{equation}
where $\nu'_\ell$ is the distribution of the unordered set corresponding to $\ell$ draws from $\frac 1 k p$.
Next, we apply \cref{lem:submartingale} which yields a submartingale property on $Y_\ell$. Letting $\1_{S_\ell \in \tOmega'}$ be the $0$-$1$ valued indicator function of the event $S_\ell \in \tOmega'$, we compute
\begin{align*}
1 = Y_0 &\ge \E*_{\{i_1, i_2, \ldots, i_\ell\} \sim \hmu_\ell}{Y_\ell \cdot \1_{S_\ell \in \tOmega'}} \\
&= \P*_{S_\ell \sim \nu_\ell}{S_\ell \in \tOmega'} \times \\
& \E_{\set{i_1, i_2, \ldots, i_\ell} \sim \hmu_\ell}{\exp( -\ell\tau + \gamma\log L(S_\ell)) \given S_\ell \in \tOmega'}.
\end{align*}

In the last two expressions, $S_\ell$ denotes the set $\set{i_1, i_2, \ldots, i_\ell}$. The first inequality used the fact that $Y_\ell$ is always nonnegative, and whenever $S_\ell \in \tOmega'$ we can apply \cref{lem:submartingale} to all subsets in the stages of its construction. The second line follows since whenever $S_\ell \in \tOmega'$, we are always in the first case in the definition of $\Delta_{t + 1}$, and then we can apply \eqref{eq:Ldef}. Hence, for any $B \ge 0$,
\begin{align*}
\P*_{S_\ell \sim \nu_\ell}{S_\ell \in \tOmega'} \cdot \E_{\{i_1, i_2, \ldots, i_\ell\} \sim \hmu_\ell}{\exp\parens*{\gamma\log L(S_\ell)} \given S_\ell \in \tOmega'} \\ \le \exp\parens*{\ell\tau}
\end{align*}
which implies
\begin{align*}
\P_{\set{i_1, i_2, \ldots, i_\ell} \sim \hmu_\ell}{\log L(S_\ell) \ge B\log \card{U} \given S_{\ell} \in \tOmega'} \\ \le 2 \exp\parens*{\ell \tau - \gamma B \log \card{U}},
\end{align*}
where we used Markov's inequality, and that $\tOmega'$ captures at least half the mass of $\nu_\ell$. However, every permutation giving rise to the unordered set $S_\ell$ is equally likely under $\hmu_\ell$, so by aggregating permutations, this can be rewritten as the desired 
\[\P*_{S_\ell \sim \nu_\ell}{L(S_\ell) \ge \abs{U}^B \given S_\ell \in \tOmega'} \le 2 \exp\parens*{\ell \tau - \gamma B \log \abs{U}} = \frac \epsilon 8.\qedhere \]
\end{proof}

Finally, we combine \cref{lem:omegabound,lem:mostomegagood} to prove our main result.

\begin{lemma}\label{lem:subroutine}
Let $B \ge 1$, and for a sufficiently small constant below, suppose
$\ell^2 = O\parens*{\frac{\alpha k}{\log \frac n \eps } \cdot \epsilon^{\frac 3 B}}$.
There is a parallel algorithm using $O( (nk^2\epsilon^{-2})^B \log \frac 1 \epsilon)$ machines which runs in $O(1)$ time and returns a draw from a distribution within total variation distance $\frac \epsilon 2$ of $\mu D_{k \to \ell}$.
\end{lemma}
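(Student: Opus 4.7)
The plan is to combine the structural bounds developed in \cref{ssec:isotropic,ssec:highprob} through the template of \cref{prop:modified rejection sampling}. I would first apply the isotropic transformation of \cref{def:isotropic-transformation} to $\mu$ with a parameter $\beta$ chosen below, producing $\nu \coloneq \muiso$ on a ground set $U$; by \cref{remark:conversion}, sampling within total variation distance $\eps/2$ of $\nu_\ell$ suffices. The rejection sampler then targets $\nu_\ell$ using the product proposal $\nu'_\ell$ (the distribution of an unordered tuple of $\ell$ i.i.d.\ draws from $\frac{1}{k}p$, where $p$ is the marginal vector of $\nu$), high-probability set $\tOmega'$ as defined in \cref{ssec:highprob}, and ceiling $C = \abs{U}^B$. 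Each proposal can be evaluated in $\Otilde{1}$ parallel time via a single round of $\ell$ conditional-marginal queries to $\nu$, since the likelihood ratio factors as in \eqref{eq:Ldef}.

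The two hypotheses of \cref{prop:modified rejection sampling} follow directly from material already developed. For the mass bound $\nu_\ell(\tOmega') \ge 1 - O(\eps)$, I would combine \cref{lem:omegabound} (applied with $\varepsilon = \eps/(32\abs{U}\ell)$, so the complement of $\tOmega$ carries $\nu_\ell$-mass at most $\eps/16$) with the second part of \cref{prop:near-isotropic} (which bounds the mass of $\ell$-sets intersecting $U \setminus R$ by $\sqrt{\beta}\,\ell$). For the ratio bound $\nu_\ell(S)/\nu'_\ell(S) \le \abs{U}^B$ on a $1-\eps/8$ conditional fraction of $\tOmega'$, I would invoke \cref{lem:mostomegagood}; its quantitative hypothesis translates from the assumed upper bound on $\ell^2$ after substituting $\varepsilon$ and the eventual value of $\abs{U}$. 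Running $O(\abs{U}^B \log \eps^{-1})$ parallel copies of the sampler, via \cref{prop:modified rejection sampling}, yields an accepted sample within $O(\eps)$ total variation of $\nu_\ell$, which transports to $\mu_\ell$ by \cref{remark:conversion}.

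The main obstacle I anticipate is calibrating $\beta$ to be simultaneously small enough that $\sqrt{\beta}\,\ell = O(\eps)$ and $\sqrt{\beta} \le (\alpha k)^{-1}\log \varepsilon^{-1}$ (the constraints forced by \cref{lem:submartingale,lem:mostomegagood}), while large enough that $\abs{U} \le n(1 + \beta^{-1})$ remains polynomial in $n$, $k$, and $\eps^{-1}$ so that $\abs{U}^B$ absorbs into the stated machine count $(nk^2\eps^{-2})^B$. The binding constraint is $\beta^{-1} = \widetilde{O}(k^2)$, which yields $\abs{U} = \widetilde{O}(nk^2)$; a final union bound over the three constant failure modes — measure of $\tOmega'$, ratio blowup on $\tOmega'$, and rejection-sampler failure — closes the total variation analysis at $\eps/2$.
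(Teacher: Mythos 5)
Your proposal is correct and follows essentially the same route as the paper's proof: isotropic transformation with $\beta$ calibrated so that $\sqrt{\beta}\,\ell = O(\eps)$ while $\abs{U} = O(nk^2\eps^{-2})$ (the paper fixes $\sqrt\beta = \eps/(32k)$), the mass bound on $\tOmega'$ via \cref{lem:omegabound} and \cref{prop:near-isotropic}, the ratio bound $\abs{U}^B$ via \cref{lem:mostomegagood}, and finally \cref{prop:modified rejection sampling} together with \cref{remark:conversion}, giving the stated machine count $O(\abs{U}^B\log\frac1\eps)$. The only detail left implicit — that the auxiliary constraint $\sqrt\beta \le \frac{1}{3\gamma}$ from \cref{lem:submartingale} is non-binding for large $n$ — is exactly the minor check the paper also dispatches in one line.
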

\begin{proof}
Without loss of generality, we can assume $n$ is at least a sufficiently large constant, else the standard sequential sampler has parallel depth $\Otilde{1}$. 
We set 
\[\sqrt \beta :=\frac{\epsilon}{32 k}  \leq \min \set*{\frac{1 }{\alpha k} \log \frac{1}{\varepsilon},\frac 1 {3\gamma}} = \min \set*{\frac{1 }{\alpha k} \log \frac{1}{\varepsilon}, \frac{B \log n}{6 \log \frac{16}{\epsilon} }},\] which clearly satisfies the assumption of \cref{lem:submartingale} for sufficiently large $n$.
  Set $\varepsilon = \frac{\epsilon}{32 \abs{U} \ell}.$ Combining \cref{prop:near-isotropic,lem:omegabound} and using a union bound, we have 
\begin{align*}\nu_{\ell} (\tOmega') &\geq 1- (1-\nu_{\ell}(\tOmega)) - \parens*{1- \nu_{\ell}\parens*{\set*{S \in \binom{R}{\l} } }} \\&\geq 1- 2 \sqrt \beta k - 2\abs{U} \l \varepsilon  \geq 1 - \frac{\epsilon}{8}. \end{align*}
By \cref{prop:near-isotropic}, $\abs{U} \leq 2n\beta^{-1} = O(nk^2 \epsilon^{-2})$ and 
\[\log \frac{1}{\varepsilon} = O\parens*{\log \frac{ \abs{U}\l }{\epsilon}  }= O\parens*{\log \frac n \epsilon}\]
Thus, this setting of $\ell$ and $\beta$ satisfies the assumption of \cref{lem:mostomegagood}. Hence, the subset $\Omega \subset \tOmega'$ which satisfies the conclusion of \cref{lem:mostomegagood} has measure at least $1 - \frac \epsilon 4$ according to $\nu_\ell$.  Using \cref{alg:modified rejection sampling,remark:conversion}, we can sample from within total variation $\frac \epsilon 2$ from $\nu D_{k \to \ell}$ and $\mu  D_{k \to \ell}$ in $\Otilde{1}$-time using $O(\abs{U}^B \log\frac{1}{\varepsilon})$, which equals $O( (nk^2\epsilon^{-2})^B \log \frac 1 \epsilon)$ machines. We note that to implement our modified rejection sampling, it suffices to check that the likelihood ratio is bounded, which will certainly be the case for all elements in $\tOmega'$, and if there are other sets with bounded likelihood ratio this only improves the total variation distance guarantee.
\end{proof}

\subsection{Proof of \cref{thm:mainei}}\label{ssec:proofei}

Now we combine everything to prove \cref{thm:mainei}.

\begin{proof}[Proof of \cref{thm:mainei}]
Since we can always sample in $\Otilde{k}$ parallel time, the statement is nontrivial only for $c \leq \frac{1}{2}.$
Set $\epsilon' \gets \frac \epsilon k$. As in \cref{cor:sqrtk}, it suffices to repeatedly sample from $\mu D_{k \to \ell}$ for some choice of $\ell$ respecting the bound in \cref{lem:subroutine}, within total variation $\epsilon'$. We then condition on this set and then repeat. By the coupling characterization of the total variation distance, the resulting distribution will be at total variation $\epsilon$ from $\mu$, since this process will terminate within $k$ rounds. It is straightforward to see this will terminate in  $O\parens*{\sqrt{k/\parens*{\frac{\alpha }{\log \frac{n}{\eps'} }\cdot {\eps'}^{\frac 3 B} }} }$ iterations by a variation of the proof of \cref{cor:sqrtk} and the maximum allowable $\ell$ in \cref{lem:subroutine}. Setting $B = \frac{3}{c }$ gives the desired bound on the number of iterations.
\end{proof}

    \section{Perfect matchings in planar graphs}\label{sec:planar}

In this section, we give our parallel algorithm for sampling perfect matchings from a planar graph $G = (V, E)$. Our algorithm departs somewhat from the rejection sampling-based framework used to prove \cref{thm:mainei} and its specializations, but we include it as it highlights a different strategy for attaining quadratic parallel speedups by using oracles, catered to the structure of planar graphs. At a high level, our algorithm recursively finds a \emph{planar separator}, which is a small set of vertices $S \subseteq V$ whose removal decomposes $V$ into two disconnected components $V_1$ and $V_2$ containing roughly the same number of vertices. We then use our counting oracle to match the vertices in $S$ sequentially, and then recursively solve the sampling problem in the subgraphs on $V_1$ and $V_2$.

\begin{proof}[Proof of \cref{thm:planar}]
We find a planar separator $S\subseteq V$ of size $O(\sqrt{n})$ in $\Otilde{1}$ parallel time using $\poly(n)$ machines, where removing $S$ from $G$ results in disconnected components $V_1, V_2$ containing at most $\frac {2n}{3}$ vertices each \cite{GM87}.
To sample a perfect matching recursively, we first sample matching endpoints of vertices in $S$ as follows. Label the vertices in $S$ by $v_1,\ldots, v_{\card{S}}.$ Let $v = v_1$, and compute in parallel the probability $p_{u}$ that a random matching contains the edge $(u,v)$.\footnote{To see that we can implement this oracle, \cite{Kas67} shows counting perfect matchings on a planar graph is reducible to determinant computation, which is parallelizable \cite{csanky1975fast}.} We then sample a matched edge for $v$ from this distribution, set $v$ to the next unmatched vertex in $S$, and continue conditioned on prior matched edges. Sequentially proceeding over $S$ allows us to sample over partial matchings containing all of $S$ in parallel time $\Otilde{\sqrt n}$. We then remove $S$ (and vertices matched to $S$ in this process) from $V_1$, $V_2$ (which can only decrease their sizes), and return the union of the matching of $S$ with the results of recursive calls to $V_1$ and $V_2$. 

The recursion depth is $\Otilde{1}$, and each round of recursion takes parallel time $O(|S|) = O(\sqrt n)$, proving the overall parallel depth. For the work bound, let $Q(n)$ be a bound on the number of machines used in the top level of recursion (excluding the recursive calls to $V_1$, $V_2$ after matching $S$), and let $P(n)$ be the total bound on the number of machines used; assume without loss $Q(n) = \Omega(n^2)$. Then,
\[P(n) = 2 P(2n/3) + Q(n) \implies P(n) = O(Q(n)).\qedhere \]

\end{proof}
	
	\Tag{
	    \section{Hard instance for rejection sampling}\label{sec:hardex}

In this section, we give a simple hard instance of a fractionally log-concave distribution, which demonstrates that the dependence on $k$ in \cref{thm:mainei} may be inherent to our rejection sampling strategy. In particular, it is natural to hope that we can improve \cref{thm:mainei} to obtain a parallel depth of $\sqrt{k} \cdot \text{polylog}(k)$, as opposed to $k^{\frac{1}{2} + c}$ for a constant $c$. Here, we give an example that suggests that new algorithmic techniques may be necessary to obtain this improvement.

Our hard distribution $\mu: \binom{[n]}{k} \to \R_{\ge 0}$ will be defined as follows. Let $n$ and $k$ be even, and consider a partition of the ground set $[n]$ into pairs $S_i \coloneq (2i - 1, 2i)$ for all $i \in [\frac n 2]$. Then, the distribution $\mu$ is uniformly supported on sets of the form
\begin{equation}\label{eq:sdef}S \coloneq \bigcup_{i \in S'} S_i, \text{ where } S' \in \binom{\bracks*{\frac{n}{2}}}{\frac{k}{2}}. \end{equation}
In other words, $\mu$ randomly chooses $\frac k 2$ indices between $1$ and $\frac n 2$ and takes the $k$ elements formed by including the pairs corresponding to those indices. It is known that $\mu$ is $\Omega(1)$-FLC (see \cite{ADVY21}). To simplify notation, we will assume that $k = o(n)$ and $\ell = o(k)$. We will also assume there is a constant $B$ such that we have access to $n^B$ parallel machines. Following the guarantees of~\cref{alg:modified rejection sampling} in \cref{prop:modified rejection sampling}, if we are willing to tolerate a total variation distance of $\delta$ from $\mu_{\ell}$, we need to show that with probability at least $1 - \delta$, $S \sim \mu_\ell$ satisfies
\begin{equation}\label{eq:goodS}\frac{\mu_\ell(S)}{\mu'_{\ell}(S)} \le n^B.\end{equation}
Here and throughout the following discussion, $\mu'_{\ell}(S) = \frac{\ell!}{n^\ell}$ is the probability $S$ is formed by $\ell$ independent draws from the uniform distribution on $[n]$. In particular, the $1$-marginal distribution of $\mu$ is uniform, so this is the proposal distribution used by rejection sampling. 

Our argument on the tightness of our rejection sampling proceeds as follows. Say that a set $S \in \binom{[n]}{\ell}$ has $t$ ``duplicates'' if, amongst the elements of $S$, there are exactly $t$ pairs of elements belonging to the same $S_i$. For example, for $\ell = 4$ we say the set $\{1, 2, 3, 5\}$ contains $1$ duplicate, the pair $(1, 2)$. We first show that for a set $S$ to satisfy \eqref{eq:goodS}, it cannot contain more than $t = O(B)$ duplicates. We then show that this limitation, along with attaining a failure probability $\delta$ inverse-polynomial in $k$, forces us to choose $\ell = k^{\frac{1}{2} - c}$ for a constant $c > 0$ which may depend on $B$.

\paragraph{How many duplicates can we afford?} Suppose $S \in \binom{[n]}{\ell}$ contains $t$ duplicates. Each permutation of $S$ is equally likely to be observed by either of the following processes starting from $T_0 = \emptyset$ (we use $T_i$ to denote an ordered set, and $S_i$ to denote its unordered counterpart, for all $i \in [\ell]$).
\begin{enumerate}
    \item For $i \in [\ell]$, draw $j \in [n]$ uniformly at random and add it to $T_{i - 1}$ to form $T_i$.
    \item For $i \in [\ell]$, draw $j \in [n]$ according to the marginal distribution of $\mu_{\ell}$ conditioned on including $T_{i - 1}$ and add it to $T_{i - 1}$ to form $T_i$.
\end{enumerate}
Hence, to bound $\frac{\mu_\ell(S)}{\mu'_{\ell}(S)}$ as needed by \eqref{eq:goodS}, it suffices to fix a permutation $T_\ell$ of $S_\ell = S$ and bound the ratios of the probabilities $T_\ell$ is observed according to each of the above processes. It is observed with probability $n^{-\ell}$ according to the first process above, so satisfying \eqref{eq:goodS} means the probability $T_\ell$ is observed by the second must be at most $n^{B - \ell}$.

It is straightforward to see that the probability we observe each second element in a duplicate pair in the relevant round $i \in [\ell]$ is $\Theta\parens*{\frac 1 k}$. On the other hand, the probability of observing each singleton in its round is $\Theta\parens*{\frac 1 n}$. For $k = o(n)$, this shows that to meet \eqref{eq:goodS} we must have 
\[ \parens*{\Theta\parens*{\frac 1 n}}^{\ell - t} \parens*{\Theta\parens*{\frac 1 k}}^{t} \le n^{B - \ell}.\]
This shows that we must have $t$ at most a constant (depending on $B$). 

\paragraph{Probability of $t$ duplicates.} Let $t$ be a constant. Recall that the distribution $\mu$ is uniform over all sets of the form \eqref{eq:sdef}, and a sample from $\mu_\ell = D_{k \to \ell} \mu$ is formed by sampling a set $S \sim \mu$ and then randomly selecting one of the $\binom{k}{\ell}$ subsets of $S$. Hence, it suffices to fix some $S$ of the form \eqref{eq:sdef} and bound the probability that this downsampling process results in a subset with $t$ duplicates. By symmetry of $\mu$, we lose no generality by only considering the set $S = \cup_{i \in [\frac k 2]} S_i$.

Now, for a constant $t$, the number of subsets $S$ of size $\ell$ with exactly $t$ duplicates is
\[\binom{\frac k 2}{t} \cdot \binom{\frac k 2 - t}{\ell - 2t} \cdot 2^{\ell - 2t}.\]
The first term corresponds to choosing which $t$ sets $S_i$ will be fully included, the second corresponds to choosing which sets the remaining $\ell - 2t$ elements come from, and the third is because for each of the non-duplicated sets we have two options. Hence, the probability a draw from $\mu_\ell$ has exactly $t$ duplicates for constant $t$ scales as
\begin{align*}\frac{\binom{\frac k 2}{t} \cdot \binom{\frac k 2 - t}{\ell - 2t} \cdot 2^{\ell - 2t}}{\binom{k}{\ell}} &= \parens*{\Theta\parens*{\frac k \ell}}^{-\ell} \cdot \parens*{\Theta\parens*{k}}^t \cdot \parens*{\Theta\parens*{\frac k \ell}}^{\ell - 2t} \cdot 2^{\ell - 2t} \\
&= \parens*{\Theta\parens*{\frac \ell k}}^{2t} \cdot \parens*{\Theta\parens*{k}}^t = \parens*{\Theta\parens*{\frac {\ell^2} k}}^{t}.
\end{align*}

In other words, to guarantee that a draw from $\mu_\ell$ contains less than $t$ duplicates with probability at least $1 - \delta$, we need to ensure that
\[\parens*{\Theta\parens*{\frac {\ell^2} k}}^{t} \le \delta \implies \ell = O\parens*{\sqrt{k} \delta^{\frac 1 {2t}}}.\]
For $\delta$ scaling inverse-polynomially in $k$ (which is necessary to perform a union bound over the $\text{poly}(k)$ iterations of rejection sampling), this shows we must take $\ell \le k^{\frac{1}{2} - c}$ for some constant $c$ which depends on our budget constant $B$ from the earlier discussion.
	    
\section{Refined guarantees for bounded symmetric DPPs}\label{sec:symdppbound}

For symmetric PSD ensemble matrices $L$ with non-trivial eigenvalue or trace bounds, we give the following refined result improving upon \cref{thm:sym DPP} in various interesting parameter regimes. 

\begin{theorem}\label{thm:sym DPP with bounded eigenvalue}
	Let $L$ be a $n \times n$ symmetric PSD matrix and $\eps \in (0, 1)$. Let  $\mu: 2^{[n]} \to \R_{\geq 0}$ be the DPP defined by $L.$ Let $K = L (I+L)^{-1}\preceq I$ be the kernel of $L$.  There exists an algorithm to approximately sample from within $\epsilon$ total variation distance of $\mu$ in \[\Otilde*{\min\set*{\sqrt{\tr(K)}, \lambda_{\max}(K) \sqrt{n}}}\] 
parallel time using $\poly(n) (\frac 1 \eps)^{o(1)}  $ machines.
\end{theorem}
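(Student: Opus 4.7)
\textbf{First bound $\Otilde{\sqrt{\tr(K)}}$.} I would prove each of the two bounds with a separate algorithm and output the faster one, based on precomputable $\tr(K)$ and $\lambda_{\max}(K)$. For the first bound, the plan is to reduce to a $k$-DPP via \cref{cor:reduce DPP to k-DPP}: sample $k = \abs{S}$ from the DPP's size distribution in $\Otilde{1}$ parallel time via \cref{prop:fast compute marginal}, then apply \cref{thm:sym DPP} on the resulting $k$-DPP. Since $\E_{S \sim \mu}{\abs{S}} = \sum_i K_{ii} = \tr(K)$ and the symmetric DPP is strongly Rayleigh (\cref{lem:dpp real stable}), the concentration bound \cref{cor:concentration of size} gives $\abs{S} = O(\tr(K)\log\tfrac{1}{\eps})$ with probability at least $1 - \eps/2$; on this event \cref{thm:sym DPP} runs in $\Otilde{\sqrt k} = \Otilde{\sqrt{\tr(K)}}$ parallel time, and on the complementary event we output arbitrarily, contributing $\eps/2$ to the total variation.

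\textbf{Second bound $\Otilde{\lambda_{\max}(K)\sqrt n}$.} This bound can be tighter when $\lambda_{\max}(K)$ is small: e.g., for $K = \lambda I$ with $\lambda < 1$, $\sqrt{\tr(K)} = \sqrt{n\lambda} > \lambda\sqrt{n}$, and since $\abs{S} \approx n\lambda$ the first approach is loose. My plan is to modify \cref{alg:batched sample} to use an enlarged batch size $\ell_i \approx \sqrt{\tau_i/\lambda_i}$ at iteration $i$, where $\tau_i = \tr(K^{(i)})$ and $\lambda_i = \lambda_{\max}(K^{(i)})$. Since $K^{(i)}_{jj} \le \lambda_i$ for PSD kernels, drawing $\ell_i$ iid items from the normalized marginal distribution $\{K^{(i)}_{jj}/\tau_i\}_j$ has collision probability $O(\ell_i^2 \lambda_i/\tau_i) = O(1)$, so combined with a refined analogue of \cref{lem:sym DPP rejection sampling} we obtain constant-probability acceptance. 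Eigenvalue monotonicity $\lambda_i \le \lambda_{\max}(K) =: \lambda$ follows because $K^{(i+1)}$ arises as the kernel of a Schur complement of the ensemble matrix and eigenvalues of principal submatrices interlace; tracking a spectral potential $\phi_i = \sqrt{\tau_i \lambda}$ that decreases by $\Omega(1)$ per iteration would give total iteration count $O(\sqrt{\tau_0\lambda}) \le O(\lambda\sqrt n)$ by $\tr(K) \le n\lambda$.

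\textbf{Main obstacle.} The key technical subtlety is establishing a sharp enough per-iteration decrease on the potential: a direct computation gives expected decrease $\tau_i - \tau_{i+1} = \Omega(\ell_i \sum_j (K^{(i)}_{jj})^2/\tau_i)$, which in the worst case of uniform marginals is only $\Omega(\ell_i \tau_i/n)$, yielding $O(\sqrt n)$ iterations rather than the target $O(\lambda\sqrt n)$. Closing this gap likely requires either a potential function that accounts for the full spectrum of $K^{(i)}$ (not just $\tau_i$), or switching to an eigendecomposition-based sampler for the second bound (flipping independent coins per eigenvector to form an active set, then sampling the resulting projection DPP) for which the $\lambda\sqrt n$ bound may be more natural, since an active eigenvector brings with it a large block of guaranteed decrease. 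The secondary step of extending \cref{lem:sym DPP rejection sampling} to the large-batch regime, by combining negative correlation via \cref{cor:hadamard ineq} with the collision-free event from the proposal, should follow from the techniques developed in \cref{sec:symmetric-dpp}.
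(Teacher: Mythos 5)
Your first bound is fine and matches the paper's own argument step for step: sample the cardinality via \cref{prop:fast compute marginal}, use strong Rayleigh concentration (\cref{cor:concentration of size}) to restrict to $|S| = O(\tr(K)\log\frac1\eps)$ up to $\eps/2$ total variation, and invoke \cref{thm:sym DPP} on the resulting $k$-DPP. The second bound, however, is where the theorem's content lies, and there your proposal has a genuine gap that you yourself identify but do not close: the potential $\phi_i=\sqrt{\tau_i\lambda}$ does not decrease fast enough, and in the worst case of near-uniform marginals your own estimate only yields $O(\sqrt n)$ iterations, not $O(\lambda\sqrt n)$. This is not a technicality to be patched later; the near-uniform case (e.g.\ $K=\lambda I$ with small $\lambda$) is exactly the regime where the $\lambda_{\max}(K)\sqrt n$ bound is supposed to win, so the argument fails precisely where it is needed. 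The two escape routes you sketch (a full-spectrum potential, or an eigendecomposition-based sampler) are left entirely unworked, and the batched framework itself needs repair for unconstrained DPPs: \cref{alg:batched sample} and \cref{lem:sym DPP rejection sampling} are stated for homogeneous distributions on $\binom{[n]}{k}$, and "sample the next $\ell_i$ elements" is not well defined when the remaining sample may have fewer than $\ell_i$ elements.

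The missing idea in the paper is quite different from enlarging batches: instead of asking rejection sampling to accept larger and larger blocks, one scales the kernel \emph{down}. The key observations are (i) a thinning identity (\cref{prop:filter equivalent}): sampling from the DPP with kernel $\alpha K$ is the same as sampling $U$ from the DPP with kernel $K$ and independently keeping each element with probability $\alpha$; and (ii) when $\lambda_{\max}$ of the (scaled) kernel is at most $1/\sqrt n$, the whole DPP is within a $(\frac1\eps)^{o(1)}$ likelihood-ratio factor of the product of its Bernoulli marginals on a high-probability set (\cref{lem:DPP boudned eigenvalue one step sample}, via Hadamard/negative correlation plus the eigenvalue bound controlling $\det(I-K)$ and $\det(L_T)/\det(K_T)$), so an \emph{entire} sample can be drawn in $\Otilde{1}$ parallel time by one round of modified rejection sampling against the product distribution. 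Setting $\alpha=(\lambda\sqrt n)^{-1}$, the algorithm repeatedly samples $T_i$ from the kernel $\alpha K^{(i)}$, conditions the $(1-\alpha)$-rescaled ensemble on including $T_i$, and an induction (\cref{prop:dpp bounded eigenvalue output}) shows that after $i$ rounds the output is distributed as a $\bigl(1-(1-\alpha)^{i+1}\bigr)$-thinning of $\mu$, so $R=O(\alpha^{-1}\log\frac n\eps)=\Otilde{\lambda\sqrt n}$ rounds suffice for $\eps$ total variation; eigenvalue monotonicity of $K^{(i)}$ (which you correctly anticipate) keeps every round implementable. Note that in your hard case $K=\lambda I$ the distribution is already a product measure, which is a hint that the right mechanism is "near-product after scaling" rather than a sharper acceptance/potential analysis of batched conditional sampling.
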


We use \cref{alg:filter sample}, a ``filtered'' variant of \cref{alg:batched sample}.

\begin{Algorithm}[ht!]
	\KwIn{DPP $\mu: 2^{[n]}\to \R_{\geq 0}$ with kernel $K, \lambda_{\max}(K) \leq \lambda$}
	
	$\alpha \leftarrow (\lambda \sqrt{n})^{-1}$\;
		\If{$\alpha > 1$}{
    	(1): Sample $S \sim \mu$ and return $S.$\;
    	}
    	
	$ S_{-1}, K^{(0)}, L^{(0)} \leftarrow \emptyset, K, L$\;
    	\For{$i=0,1,\dots, R$}{
    	(2): Sample $T_i \sim$  DPP with kernel $\tilde{K}^{(i)}: = \alpha K^{(i)} $\;
    	Update $S_i \leftarrow S_{i-1}\cup T_i $\;
    	Update $L^{(i+1)} \leftarrow ((1-\alpha) L^{(i)})^{T_i} $ (where $(L)^T$ is the ensemble matrix corresponding to the DPP with ensemble matrix $L$, conditioned on including $T$; see \cref{ssec:dppdef})\;
    	Update $K^{(i+1)} \leftarrow I - (I+L^{(i+1)})^{-1}$\;
    	}
    	
    Output $S_R$
    \caption{Filtering} \label{alg:filter sample}
	\end{Algorithm}

We prove \cref{thm:sym DPP with bounded eigenvalue} in this section. Our first step is to show that for $R=  \Theta(\alpha^{-1} \log\frac n \eps)$, the output distribution of \cref{alg:filter sample} is within $\epsilon$ of the target distribution $\mu.$ 

We require the following helper claims. The first shows that randomly independently dropping elements of a sample from a DPP $\mu$ is equivalent to scaling the kernel matrix.

\begin{proposition} \label{prop:filter equivalent}
Let $\mu$ be a DPP with kernel $K.$ Let $\mu'$ be the DPP with kernel $K' := \alpha K.$ Let $\nu$ be the distribution obtained by first sampling $U \sim \mu,$ then outputting $S\subseteq U$ with probability $ \alpha^{\abs{S}} (1-\alpha)^{\abs{S}}$, i.e.\
\[\nu(S) = \sum_{U\supseteq S} \mu(U) \alpha^{\abs{S}} (1-\alpha)^{\abs{U}-\abs{S}}.\]
Then, $\mu'$ and $\nu$ are identical.
\end{proposition}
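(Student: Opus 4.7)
The plan is to show that $\nu$ has the same inclusion probabilities as the DPP $\mu'$ defined by the kernel $\alpha K$, and then to invoke the standard fact that any probability distribution on $2^{[n]}$ is uniquely determined by its inclusion probabilities via M\"obius inversion on the Boolean lattice.

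First I would fix an arbitrary $A \subseteq [n]$ and compute $\P_{S \sim \nu}{A \subseteq S}$ by conditioning on the outer sample $U \sim \mu$. By the construction of $\nu$, conditional on $U$ each element of $U$ is included in $S$ independently with probability $\alpha$, so $\P{A \subseteq S \given U} = \alpha^{\abs{A}} \1[A \subseteq U]$. Taking the expectation over $U \sim \mu$ and using $\P_{U \sim \mu}{A \subseteq U} = \det(K_A)$ from \cref{ssec:dppdef}, together with $(\alpha K)_A = \alpha K_A$ and the row-homogeneity of the determinant, this yields
\[\P_{S \sim \nu}{A \subseteq S} = \alpha^{\abs{A}} \det(K_A) = \det((\alpha K)_A) = \P_{S \sim \mu'}{A \subseteq S}.\]

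To finish, I would invoke the M\"obius identity $\P{S = T} = \sum_{A \supseteq T} (-1)^{\abs{A} - \abs{T}} \P{A \subseteq S}$, which holds for any distribution on $2^{[n]}$ and expresses the pointwise mass purely in terms of inclusion probabilities. Since the inclusion probabilities of $\nu$ and $\mu'$ agree on every $A$, the two distributions assign identical mass to every $T$, so $\nu = \mu'$. No step here is technically hard; the only side remark worth flagging is that $\alpha K$ is a valid marginal kernel, which holds because $0 \preceq \alpha K \preceq K \preceq I$ whenever $\alpha \in [0,1]$, a condition secured by the branch $\alpha = (\lambda\sqrt{n})^{-1} \le 1$ in \cref{alg:filter sample} under which this proposition is invoked.
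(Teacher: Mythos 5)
Your proof is correct and follows essentially the same route as the paper: both arguments verify that $\nu$ and $\mu'$ assign the same inclusion probability $\alpha^{\abs{A}}\det(K_A)$ to every set $A$ and conclude the distributions coincide. Your thinning interpretation just streamlines the binomial-sum computation the paper does explicitly, and you make the M\"obius-inversion step (which the paper leaves implicit) explicit, which is a fine addition.
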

\begin{proof}
Given a set $A,$ we have $\P_{S\sim \mu'} {A \subseteq S} = \det((\alpha K)_A) = \alpha^{\abs{A}} \det(K_A)= \alpha^{\abs{A}} \sum_{U\supseteq A} \mu(U). $ On the other hand, we have
\begin{align*}
    \P_{S \sim \nu} {A \subseteq S}&= \sum_{S\supseteq A}\nu(S) =   \sum_{U\supseteq S\supseteq A} \mu(U) \alpha^{\abs{S}} (1-\alpha)^{\abs{U}-\abs{S}}\\
    &=\alpha^{\abs{A}}\sum_{U\supseteq S\supseteq A} \mu(U)\alpha^{\abs{S}-\abs{A} } (1-\alpha)^{\abs{U}-\abs{S}}\\
    &= \alpha^{\abs{A}} \sum_{U \supseteq A} \mu(U) \sum_{S' \subseteq U \setminus A} \alpha^{\abs{S'}} (1-\alpha)^{\abs{U\setminus A}-\abs{S'}}\\
    &=  \alpha^{\abs{A}}\sum_{U \supseteq A} \mu(U).\qedhere
\end{align*}
\end{proof}
\begin{proposition} \label{prop:dpp bounded eigenvalue output}
Consider the setup of \cref{alg:filter sample}. Suppose $\alpha\leq 1.$ Let $\P_i{}$ denote the distribution of $S_i.$ 
Fix $\epsilon > 0.$ For $i = \Omega(\alpha^{-1} \log \frac{n}{\epsilon} )$ for a sufficiently large constant, $\dTV{\P_i{}, \mu} \leq \epsilon$.
\end{proposition}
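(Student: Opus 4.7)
The plan is to construct an explicit coupling between the output of \cref{alg:filter sample} and a true sample $U \sim \mu$, in which each element of $U$ survives to $S_R$ except with probability $(1-\alpha)^{R+1}$. Given such a coupling, the total variation distance is bounded by $\P{S_R \neq U} \le \E{\card{U \setminus S_R}} \le \tr(K)(1-\alpha)^{R+1} \le n e^{-\alpha(R+1)}$, which is at most $\eps$ once $R = \Omega(\alpha^{-1}\log \tfrac{n}{\eps})$.

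To build the coupling, I would iteratively reinterpret each iteration of \cref{alg:filter sample} using \cref{prop:filter equivalent}. Draw $U \sim \mu$ and, for each element $u \in U$, independently perform $R+1$ Bernoulli($\alpha$) trials. Let $T_i$ be the elements of $U$ that get accepted for the first time on the $i$th trial. The claim is that $(T_0, T_1, \ldots, T_R)$ has exactly the joint distribution generated by the algorithm, so that $S_R = T_0 \cup \cdots \cup T_R$ coincides with the algorithm's output, and any element of $U$ absent from $S_R$ is one that was rejected in all $R+1$ trials.

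The key inductive step is to verify that, conditioned on the values $(T_0, \ldots, T_{i-1})$ produced so far, the elements of $U$ not yet accepted (i.e., $U \setminus S_{i-1}$) form a DPP with ensemble matrix $(1-\alpha)^i L^{S_{i-1}}$, matching the algorithm's $L^{(i)}$. Granting this, \cref{prop:filter equivalent} implies that further subsampling each element of $U \setminus S_{i-1}$ with probability $\alpha$ produces a DPP with kernel equal to $\alpha$ times the kernel of $(1-\alpha)^i L^{S_{i-1}}$, which is precisely $\alpha K^{(i)}$ as defined in the algorithm, giving the distribution of $T_i$. The ensemble update $L^{(i+1)} = (1-\alpha)(L^{(i)})^{T_i}$ then follows by two facts: conditioning a DPP with ensemble $\tilde{L}$ on the inclusion of $T$ produces a DPP on the complement with Schur complement $\tilde{L}^T$ (\cref{ssec:dppdef}), and the Schur complement commutes with scalar multiplication, $(cL)^T = cL^T$. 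The remaining $(R-i)$ trials for each element of $U\setminus S_i$ then continue the coupling.

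The main obstacle is getting the iterated application of \cref{prop:filter equivalent} right: the proposition as stated describes a single round of coin flips, but here we need to feed the posterior distribution of the ``rejected'' elements back into another round. What makes this work is the clean identity that, if we sample $U\sim \mu$ with ensemble $L$ and condition on the first-round accepted set $T_0$, the posterior distribution of $U\setminus T_0$ is proportional to $\mu(U \mid T_0 \subseteq U) \cdot (1-\alpha)^{\card{U \setminus T_0}}$, which by the determinantal identity $\det(L^{T_0}_{U\setminus T_0}) \cdot (1-\alpha)^{\card{U\setminus T_0}} = \det(((1-\alpha)L^{T_0})_{U\setminus T_0})$ is exactly the DPP with ensemble $(1-\alpha) L^{T_0}$. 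Once this computation is in place, the induction runs cleanly and the theorem follows from the union-bound / expectation argument above, using $\tr(K) \leq n$.
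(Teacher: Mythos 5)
Your proposal is correct and follows essentially the same route as the paper: the paper's induction establishes exactly the identity your coupling encodes, namely that $S_i$ is distributed as a sample $U\sim\mu$ with each element independently retained with probability $1-(1-\alpha)^{i+1}$, and its inductive step is the same combination of \cref{prop:filter equivalent} with the Bayes/Schur-complement computation you call the ``clean identity'' (the paper's algebraic step $\mu^{(1)}(U\setminus S_0)\P_0{S_0}=(1-\alpha)^{\card{U\setminus S_0}}\alpha^{\card{S_0}}\mu^{(0)}(U)$). The only difference is cosmetic and occurs at the end: you bound $\dTV{\P_i{},\mu}$ by $\P{S_i\neq U}\le \E{\card{U\setminus S_i}}\le \tr(K)\,(1-\alpha)^{i+1}\le n e^{-\alpha(i+1)}$, whereas the paper shows $\P_i{S_i}\ge(1-\eps)\mu(S_i)$ pointwise; both yield the stated $O(\alpha^{-1}\log\frac{n}{\eps})$ number of rounds.
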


\begin{proof}
Let $\mu^{(i)}$ be the DPP with ensemble matrix $L^{(i)}$ (and kernel matrix $K^{(i)}$), and let $\nu^{(i)}$ be the DPP with kernel matrix $\alpha K^{(i)}.$ We will prove by induction that for all $i$,
\[\P_i{S_i} = \sum_{U \supseteq S_i}   \mu^{(0)} (U) (1-\alpha)^{(i+1)(\abs{U}-\abs{S_i})} (1- (1-\alpha)^{i+1})^{\abs{S_i}}. \]
The base case $i=0$ follows from \cref{prop:filter equivalent}. Now, supposing the induction hypothesis holds for some $i-1,$ we show that it also holds for $i.$ In the following, let $S_0$ be the set sampled in the first iteration of \cref{alg:filter sample}, and let $\P_i{S_i \given S_0}$ denote the probability we observe $S_i$ conditioned on the value of $S_0$. The induction hypothesis then yields the probability we observe $S_i \setminus S_0$ in the next $i - 1$ iterations, with the starting matrix $L^{(1)} \gets ((1 - \alpha)L)^{S_0}$ as follows:
\[\P_i{S_i \given S_0} = \sum_{U \supseteq S_i} \mu^{(1)}(U \setminus S_0) (1 - \alpha)^{i(|U\setminus S_0| - |S_i \setminus S_0|)} (1 - (1 - \alpha)^i)^{|S_i \setminus S_0|}.\]
Hence, we compute 
\begin{align*}
    \P_i{S_i} &= \sum_{S_0\subseteq S_i}\P_i{S_i \given S_0} \P_0{S_0} \\
    &= \sum_{U \supseteq S_i\supseteq S_0}\mu^{(1)} (U\setminus S_0) \P_0{S_0} \\
    &\cdot (1-\alpha)^{i(\abs{U}-\abs{S_i })} (1- (1-\alpha)^{i})^{\abs{S_i \setminus S_0}} \\
    &= \sum_{U \supseteq S_i\supseteq S_0}  \mu^{(0)} (U) (1-\alpha)^{\abs{U}-\abs{S_0}} \alpha^{\abs{S_0}} \\
    &\cdot (1-\alpha)^{i(\abs{U}-\abs{S_i })} (1- (1-\alpha)^{i})^{\abs{S_i \setminus S_0}} \\
    &= \sum_{U \supseteq S_i} \mu^{(0)} (U) (1-\alpha)^{(i+1)(\abs{U} -\abs{S_i})} \\
    &\cdot \sum_{S_0 \subseteq S_i} (1-\alpha)^{\abs{S_i} -\abs{S_0}} (1- (1-\alpha)^{i})^{\abs{S_i \setminus S_0}}  \alpha^{\abs{S_0}}\\
    &= \sum_{U \supseteq S_i} \mu^{(0)} (U) (1-\alpha)^{(i+1)(\abs{U} -\abs{S_i})} \\
    &\cdot \parens*{\alpha + (1-\alpha) (1- (1-\alpha)^{i}) }^{\abs{S_i}}\\
    &= \sum_{U \supseteq S_i} \mu^{(0)} (U) (1-\alpha)^{(i+1)(\abs{U} -\abs{S_i})} (1-(1-\alpha)^{i+1})^{\abs{S_i}}
\end{align*}
where the third equality uses \cref{prop:filter equivalent} and the definition of $L_1 = ((1-\alpha) L)^{S_0}$ to derive
\begin{align*}
    \mu^{(1)} (U\setminus S_0) \P_0{S_0} &= \mu^{(1)} (U\setminus S_0) \sum_{V\supseteq S_0} (1-\alpha)^{\abs{V \setminus S_0}} \alpha^{\abs{S_0}} \mu^{(0)} (V )  \\
    &= \frac{(1-\alpha)^{\abs{U\setminus S_0}} \mu^{(0)} (U ) }{\sum_{V\supseteq S_0} (1-\alpha)^{\abs{V \setminus S_0}} \mu^{(0)} (V )  } \\
    &\cdot \sum_{V\supseteq S_0} (1-\alpha)^{\abs{V \setminus S_0}} \alpha^{\abs{S_0}}  \mu^{(0)} (V ) \\
    &=  (1-\alpha)^{\abs{U\setminus S_0}}  \alpha^{\abs{S_0}}  \mu^{(0)} (U ).
\end{align*}
Thus the induction hypothesis holds for all $i.$ By taking $i = \Omega(\tfrac{\log \frac{n}{\epsilon}}{\alpha} )$, and only considering the summand corresponding to $U = S_i$,
\[\P_i{S_i} \geq \mu(S_i) (1-(1-\alpha)^{i+1})^{\abs{S_i}} \geq  \mu(S_i) \parens*{1 - O\parens*{\frac{\epsilon}{n}}}^n \geq \mu(S_i)(1-\epsilon),\]
and hence,
\begin{align*}\dTV{\P_i{\cdot }, \mu} &= \sum_{S_i: \P_i{S_i}\leq \mu(S_i) } (\mu(S_i)- \P_i{S_i}) \\
&\leq \sum_{S_i: \P_i{S_i}\leq \mu(S_i) } \epsilon \mu(S_i) \leq \epsilon .\qedhere \end{align*}
\end{proof}
Next, we show that each step of the for loop can be implemented in constant parallel time. 
\begin{lemma} \label{lem:DPP boudned eigenvalue one step sample}
Let $\mu: 2^{[n]} \to \R_{\geq 0}$ be a DPP with marginal kernel $K.$
If $ \lambda_{\max}(K)\leq \frac{1}{\sqrt{n}} $, we can sample from a distribution $\epsilon$-away in total variation from $\mu$ in $\Otilde{1}$ time using $O(\poly(n) (\frac 1 \eps)^{o(1)})$ machines.
\end{lemma}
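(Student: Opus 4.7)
The hypothesis $\lambda_{\max}(K) \leq 1/\sqrt n$ intuitively says that $\mu$ is close to a product distribution, so I will sample in a single parallel round via approximate rejection sampling (\cref{alg:modified rejection sampling}) with the product-of-Bernoullis proposal
\[
\nu(T) \coloneq \prod_{i \in T} K_{ii} \prod_{i \notin T}(1 - K_{ii}),
\]
which matches the marginals of $\mu$ and is trivially parallelizable. I couple this with the cutoff set $\Omega \coloneq \set{T \subseteq [n]: |T| \le t}$ for $t = c\sqrt{n \log(1/\eps)}$. To invoke \cref{prop:modified rejection sampling} I need (i) $\mu(\Omega) \ge 1 - O(\eps)$ and (ii) $\sup_{T \in \Omega} \mu(T)/\nu(T) \le C$ with $C = (1/\eps)^{o(1)}$.

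\paragraph{Verifying the two hypotheses.} For (i), $\mu$ is strongly Rayleigh by \cref{lem:dpp real stable} and has mean size $\tr(K) \le n \cdot \lambda_{\max}(K) \le \sqrt n$, so \cref{cor:concentration of size} gives (i) for a sufficiently large constant $c$. For (ii), I use the identity $\mu(T) = \det(L_T)\det(I-K)$ (which follows from $\det(I+L) = \det(I-K)^{-1}$ when $L = K(I-K)^{-1}$), and apply \cref{cor:hadamard ineq} to the symmetric PSD matrices $L$ and $I - K$ to get $\det(L_T) \le \prod_{i \in T} L_{ii}$ and $\det(I-K) \le \prod_{i \in [n]}(1-K_{ii})$. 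Diagonalizing $K$ and using $\lambda_{\max}(K) \le 1/\sqrt n$ bounds each diagonal entry as
\[
L_{ii} = \sum_j \frac{\lambda_j}{1-\lambda_j} v_j(i)^2 \le \frac{K_{ii}}{1 - 1/\sqrt n}.
\]
After cancelling the Bernoulli factors of $\nu$, only $\prod_{i \in T} L_{ii}(1 - K_{ii})/K_{ii} \le (1 - 1/\sqrt n)^{-|T|} \le \exp(2|T|/\sqrt n)$ remains. On $\Omega$ this is at most $\exp(2c\sqrt{\log(1/\eps)}) = (1/\eps)^{O(1/\sqrt{\log(1/\eps)})} = (1/\eps)^{o(1)}$, giving the required $C$.

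\paragraph{Wrap-up and main obstacle.} Plugging $C$ into \cref{prop:modified rejection sampling} produces a sample within total variation $O(\eps)$ of $\mu$ in $\Otilde 1$ parallel time (one round to draw from $\nu$; the matrix inverse for $L$, the determinants $\det(L_T), \det(I-K)$, and $\nu(T)$ are all computable in \Class{NC}) using $O(C \poly(n) \log \eps^{-1}) = \poly(n)(1/\eps)^{o(1)}$ machines; rescaling $\eps$ by a constant yields the statement. The main obstacle in the proof is the ratio calculation in (ii): both the choice to compare against the full product-of-Bernoullis proposal (so that the factor $\det(I-K)/\prod(1-K_{ii})$ is bounded by $1$ via Hadamard) and the diagonal bound $L_{ii} \le K_{ii}/(1-\lambda_{\max}(K))$ are what produce the crucial cancellation; without them, either the ratio explodes or the proposal fails to be parallelizable.
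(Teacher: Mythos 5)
Your proof is correct and follows the paper's argument almost exactly: the same product-of-Bernoullis proposal $\nu$ built from the diagonal of $K$, the same truncation to $\Omega=\{|T|\le O(\sqrt{n\log(1/\eps)})\}$ justified by strongly Rayleigh concentration of $|S|$ around $\tr(K)\le\sqrt n$, and the same invocation of \cref{prop:modified rejection sampling}. The only place you deviate is the likelihood-ratio bound: the paper expands $\det(L_T)$ via Cauchy--Binet in the eigenbasis of $K$ to get $\det(L_T)\le (1-1/\sqrt n)^{-|T|}\det(K_T)$ and then applies Hadamard to $K$, whereas you apply Hadamard to $L$ directly and compare diagonals via $L_{ii}\le K_{ii}/(1-1/\sqrt n)$; both routes yield the same $(1-1/\sqrt n)^{-|T|}=(1/\eps)^{o(1)}$ factor on $\Omega$, and yours is if anything a bit more elementary. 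One small citation point: \cref{cor:hadamard ineq} as stated requires the matrix to satisfy $0\preceq K\preceq I$, and $L$ need not be $\preceq I$; either invoke the classical Hadamard inequality for PSD matrices, or apply \cref{cor:hadamard ineq} to $L/\lambda_{\max}(L)$ and rescale, which is immediate.
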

\begin{proof}
Let $s = c \sqrt{n \log\frac{1}{\epsilon'}}$ for $c$ as in \cref{cor:concentration of size}.
Let $\Omega:= \set*{S \subseteq [n] \given \abs{S} \leq s}. $ Let $p_i: = K_{i,i} = \P_{S \sim \mu}{i\in S}.$ Let $\nu$ be the distribution obtained by independently sampling independent $ b_i \sim \text{Ber}(p_i)$ for all $i \in [n]$, and outputting $T =\set*{i \given b_i = 1}. $ By \cref{cor:concentration of size},  $\sum_{S \in \Omega} \mu(S)\geq 1-\epsilon'$.
Moreover, for fixed $T \in \Omega,$ we have
\begin{align*}
    \frac{\mu(T)}{\nu(T)} &= \frac{\det{L_T} }{\det(I+L)} \parens*{\prod_{i\in T} p_i \prod_{i\not \in T} (1-p_i)}^{-1} \\&= \det(L_T) \det(I-K) \parens*{\prod_{i\in T} K_{i,i} \prod_{i\not \in T} (1-K_{i,i})}^{-1},
\end{align*}
where we use $ I+L = (I-K)^{-1}.$ By applying \cref{cor:hadamard ineq} to $I-K$, we have
\[\det(I-K) \leq \prod_{i \in [n]} (1 - K_{i, i}) \le \prod_{i\not \in T} (1-K_{i,i}), \]
so it suffices to show
\[\det(L_T)  \leq \parens*{\frac 1 \eps}^{o(1)} \prod_{i\in T} K_{i,i} \implies \frac{\mu(T)}{\nu(T)} \le \parens*{\frac 1 \eps}^{o(1)}, \]
at which point we can apply \cref{prop:modified rejection sampling}.
Let $K = U D U^{\intercal}$ where $U\in \R^{n \times n}$ is an orthonormal basis of eigenvectors of $K$, and $D = \diag(\{\lambda_i\}_{i\in [n]})$, where $\lambda_1 \geq \cdots\geq  \lambda_n$ are the eigenvalues of $K.$ By \eqref{eq:K to L}, we can write
\[L = U \diag\parens*{\set*{\frac{\lambda_i}{1-\lambda_i}}_{i\in [n]}}U^{\intercal}. \]
Thus, by applying the Cauchy-Binet formula twice,
\begin{align*}
    \det(L_T) &=\det\parens*{ U_{T,[n]} \diag\parens*{\set*{\frac{\lambda_i}{1-\lambda_i}}_{i\in [n]}}U^{\intercal}_{[n], T})} \\
    &= \sum_{S \subseteq [n], \abs{S} = \abs{T}} \det(\Lambda_{T,S}) \parens*{\prod_{i\in S} \frac{\lambda_i}{1-\lambda_i}} \det(\Lambda^{\intercal}_{S,T})\\
    &\leq \exp\parens*{c\sqrt{\log \frac 1 \eps} } \sum_{S \subseteq [n], \abs{S} = \abs{T}} \det(\Lambda_{T,S}) \parens*{\prod_{i\in S} \lambda_i} \det(\Lambda^{\intercal}_{S,T})\\
    &= \exp\parens*{c\sqrt{\log \frac 1 \eps} } \det(K_T)\\\
    &\leq \exp\parens*{c\sqrt{\log \frac 1 \eps} } \prod_{i\in T} K_{i,i}
\end{align*}
where in the first inequality, we used
\[\prod_{i\in S} (1-\lambda_i) \geq \parens*{1-\frac{1}{\sqrt{n}}}^{\abs{S}} \geq \parens*{1-\frac{1}{\sqrt{n}}}^{s} \geq \exp\parens*{ - c\sqrt{\log \frac 1 \eps} }\]
and in the last inequality, we used \cref{cor:hadamard ineq}.
Thus by \cref{prop:modified rejection sampling}, we can sample from $\tilde{\mu}$ that is $\epsilon$-away from $\mu$ in $O(1)$ parallel time using $ O((\frac{1}{\epsilon})^{o(1)} \poly(n))$ machines, by setting $\delta \gets \frac \eps 2$ and adjusting the definition of $\eps$ in this proof by a constant.
\end{proof}
\begin{proposition}\label{prop:DPP bounded eigenvalue for loop}
Consider the setup of \cref{alg:filter sample}. Suppose $\alpha \leq 1$. We have $\lambda_{\max}(K^{(i)}) \leq \lambda$ for all $i$, so that $ \lambda_{\max} (\tilde{K}^{(i)}) \leq \frac{1}{\sqrt{n}}.$
Consequently, each iteration of the for loop can be implemented in $\Otilde{1}$ parallel time using $ O(\poly(n) (\frac 1 \eps)^{o(1)}) $ machines, up to total variation distance $\eps$.
\end{proposition}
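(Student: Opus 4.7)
The plan is to first establish the key spectral bound $\lambda_{\max}(K^{(i)}) \le \lambda$ by induction, and then invoke \cref{lem:DPP boudned eigenvalue one step sample} on each iteration.

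First, I would show by induction that $\lambda_{\max}(L^{(i)}) \le \lambda_{\max}(L^{(0)}) = \lambda_{\max}(L)$. The update $L^{(i+1)} = ((1-\alpha) L^{(i)})^{T_i}$ decomposes into two operations: scaling by $(1-\alpha) \in [0,1]$, which trivially contracts the spectrum; and taking the Schur complement with respect to $T_i$. For any symmetric PSD $M$, the Schur complement satisfies $M^{T_i} \preceq M_{\tilde T_i, \tilde T_i}$ (we subtract a PSD matrix), and Cauchy interlacing gives $\lambda_{\max}(M_{\tilde T_i, \tilde T_i}) \le \lambda_{\max}(M)$. Composing these yields $\lambda_{\max}(L^{(i+1)}) \le \lambda_{\max}(L^{(i)})$, completing the induction.

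Next, I would translate this to a bound on $K^{(i)}$ using the operator monotonicity of $f(x) = x/(1+x)$ on $\R_{\ge 0}$. Concretely, since $K^{(i)} = I - (I+L^{(i)})^{-1}$ and $A \preceq B$ implies $(I+A)^{-1} \succeq (I+B)^{-1}$ for PSD $A,B$, we have $K^{(i)} \preceq K$ whenever $L^{(i)} \preceq L$; in particular $\lambda_{\max}(K^{(i)}) \le \lambda_{\max}(K) \le \lambda$. Scaling by $\alpha = (\lambda\sqrt n)^{-1}$ then gives $\lambda_{\max}(\tilde K^{(i)}) = \alpha \lambda_{\max}(K^{(i)}) \le 1/\sqrt n$, as claimed.

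Finally, the spectral hypothesis of \cref{lem:DPP boudned eigenvalue one step sample} is met by $\tilde K^{(i)}$, so step (2) of \cref{alg:filter sample} produces a sample within total variation $\eps$ of the DPP with kernel $\tilde K^{(i)}$ in $\Otilde{1}$ parallel time using $O(\poly(n)(1/\eps)^{o(1)})$ machines. The remaining work in each loop iteration --- forming $L^{(i+1)}$ by a Schur complement against $T_i$ and scaling by $(1-\alpha)$, and then computing $K^{(i+1)} = I - (I+L^{(i+1)})^{-1}$ --- consists of standard linear algebra (matrix inverses, products) which admit $\Otilde{1}$-depth \Class{NC} implementations with $\poly(n)$ processors.

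The main obstacle is simply the bookkeeping in the first step: carefully tracking that both the Schur-complement and principal-submatrix operations can only decrease the top eigenvalue of an ambient PSD matrix, despite shrinking the ground set at each stage. Once that inductive invariant is in place, operator monotonicity and the prior single-step lemma combine directly to give the stated parallel time and machine count.
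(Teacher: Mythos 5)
Your proposal is correct and takes essentially the same route as the paper: an induction showing $\lambda_{\max}(L^{(i)})$ is non-increasing (the $(1-\alpha)$ scaling, the Schur complement being dominated by the principal submatrix, and restriction/interlacing), followed by translating to $K^{(i)}$ via the monotone eigenvalue map and invoking \cref{lem:DPP boudned eigenvalue one step sample}. One minor phrasing caveat: since the ground set shrinks, the Loewner comparisons $L^{(i)} \preceq L$ and $K^{(i)} \preceq K$ are not well-defined across different dimensions, so the translation step should be stated at the level of top eigenvalues using the scalar monotonicity of $x \mapsto x/(1+x)$ (equivalently $\lambda \mapsto \lambda/(1-\lambda)$), which is exactly what the paper does and what your interlacing bound already supplies.
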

\begin{proof} 
We inductively show that $\lambda_{\max}(K^{(i)}) \leq \lambda$ for all $i$. The base case $i = 0$ directly follows from the input assumption. Now let us assume that $\lambda_{\max}(K^{(i)}) \leq \lambda$ for some $i \geq 0$. We will show that $\lambda_{\max}(K^{(i+1)}) \leq \lambda$ follows. 
Let $S$ be the index set of $L^i$ and $\tilde{S} = S \setminus T_i$ where $T_i$ was sampled. Then,
\[L^{(i+1)} = ((1-\alpha)L^{(i)})^{T_i}
= (1-\alpha)L_{\tilde{S}} - (1-\alpha)L_{\tilde{S}, T_i} L_{T_i, T_i}^{-1} L_{T_i, \tilde{S}}.\]
Since $L^{(i)}$ is PSD, $L^{(i)}_{\tilde{S}, T_i} \parens*{L^{(i)}_{T_i, T_i}}^{-1} L^{(i)}_{T_i, \tilde{S}} \succeq 0$. Thus $L^{(i+1)} \preceq (1-\alpha)L^{(i)}_{\tilde{S}} \preceq L^{(i)}_{\tilde{S}}$. 

Let $\Lambda$ be the set of the eigenvalues of $K^{(i)}$. Due to \eqref{eq:K to L}, the eigenvalues of $L^{(i)}$ are given by $\{ \frac \lam {1 - \lam} \mid \lam \in \Lambda\}.$ As $\frac \lam {1 - \lam}$ is strictly increasing in the range $[0,1]$, and the largest eigenvalue of $L^{(i + 1)}$ is dominated by the largest eigenvalue of $L^{(i)}$ (since restrictions to index sets can only decrease quadratic forms), we have the first desired conclusion. The second conclusion follows from \cref{lem:DPP boudned eigenvalue one step sample} as the eigenvalue bound is satisfied.
\end{proof}
Finally, we are ready to prove \cref{thm:sym DPP with bounded eigenvalue}.
\begin{proof}[Proof of \cref{thm:sym DPP with bounded eigenvalue}]
The bound involving $\tr{K}$ follows from a similar argument as in \cref{cor:reduce DPP to k-DPP}. Note that $ \tr{K} = \E_{S\sim \mu} {\abs{S}},$ and that by \cref{cor:concentration of size}, the set $\Omega: = \set*{S\subseteq [n]\given \abs{S} \leq \tr{K} \log \frac{2}{\epsilon} }$ has $\mu(\Omega) \geq  1-\frac \eps 2.$ When drawing $k$ from $\mathcal{H}$ (the distribution on cardinality values), if $k \leq \tr{K} \log \frac{2}{\epsilon} ,$ we use \cref{thm:sym DPP} to approximately sample from within $\frac \eps 2$ of $\mu_k,$ else we output an arbitrary subset. By the triangle inequality, the output's distribution is within $\frac \eps 2 + \frac \eps 2 = \epsilon$ of $\mu.$ The algorithm runs in the stated parallel time depending on $\tr{K}$ using the number of machines as \cref{thm:sym DPP}.

Now we focus on the bound involving $ \lambda_{\max} (K).$
If $\alpha > 1$ then the conclusion follows from \cref{lem:DPP boudned eigenvalue one step sample} applied to step (1).
Else, suppose $\alpha \leq 1.$
We run \cref{alg:filter sample} with $R = O(\lambda \sqrt{n} \log \frac{n}{\epsilon})$ such that \cref{prop:dpp bounded eigenvalue output} guarantees that if we can run the algorithm correctly, the output has total variation $\frac \eps 2$. Let $\epsilon' = \frac{\epsilon}{R}$. Let $\nu^{(i)}$ be the target distribution of $T_i$ in $i^{\text{th}}$ step of the for loop. By \cref{prop:DPP bounded eigenvalue for loop}, we can modify step (2) to sample from $\hat{\nu }^{(i)}$ that is $\epsilon'$-away from $\nu^{(i)}$ in TV-distance in $\Otilde{1}$ time using $O(\poly(n)(\frac 1 \eps)^{o(1)})$ machines. Hence, by the triangle inequality, the output of the algorithm is $\frac \eps 2$ away from the output if we were given exact sample access to each $\nu^{(i)}$. Combining with the approximation error of \cref{prop:dpp bounded eigenvalue output} yields the conclusion.
\end{proof}
	    \section{Deferred proofs}\label{sec:proofs}

\begin{proof}[Proof of \cref{lem:klrenyi}]
We first prove the inequality for the case $S = [n]$ and $C=1.$
Clearly,
\[f'(r) = n^{\lam-1}\lam \parens*{r^{\lam-1} - (\lam-1) (1 + \log r + \log n)} \]
and 
\[f''(r) =  n^{\lam-1}\lam (\lam-1) \parens*{r^{\lam-2} - \frac{1}{r}} \leq 0. \]
Thus $f$ is concave and
\[\frac{1}{n}\sum_{i=1}^n f(q_i)\leq f \parens*{\frac{1}{n}\sum_{i=1}^n q_i} = f\parens*{\frac 1 n} = \frac{1}{n} \]
which is equivalent to
\[\sum_{i=1}^n q_i^{\lam} n^{\lam-1} \leq 1 + n^{\lam-1} \lam(\lam-1) \sum_{i=1}^n q_i \log (n q_i). \]
The case $C  > 1$ then follows from
\begin{align*}
   \D_{\lam}{q \river p} 
  &= \sum_{i=1}^n q_i^{\lam}p_i^{1-\lam}\\ &\leq C^{\lam-1} \sum_{i=1}^n q_i^{\lam}n ^{\lam-1} \\
  &\leq C^{\lam-1} \parens*{1+ n^{\lam-1} \lam(\lam-1) \sum_{i=1}^n q_i \log (n q_i)}\\
  &\leq C^{\lam-1} \parens*{1+ n^{\lam-1} \lam(\lam-1)\parens*{ \sum_{i=1}^n q_i \log \frac{ q_i}{p_i} + \log C}}.
\end{align*}
The case $S \neq [n]$ follows by noticing
\[\sum_{i\in S} q_i \parens*{\frac{q_i}{p_i}}^{\lam-1} \leq \sum_{i=1}^n q_i^{\lam}p_i^{1-\lam} .\qedhere \]
\end{proof}

\begin{proof}[Proof of \cref{prop:fast compute marginal}]
First, note that DPPs and $k$-DPPs correspond to Partition-DPPs with $0$ and $1$ partition constraints respectively. We thus show the claim for $\mu$ a Partition-DPP with $O(1) $ constraints.
Computing the marginals is equivalent to computing the partition functions of $\mu$ and $\mu$ conditioned on subsets.
As shown in \cite[Theorem 1.1]{Celis2017OnTC}, computing the partition function is equivalent to computing the coefficients of a polynomial in $O(1)$ variables, obtained from the generating polynomial by plugging the same variable for all elements in a partition. This polynomial can be evaluated efficiently given access to $g_\mu$. Evaluating $g_{\mu}$ at $(z_1,\dots, z_n)$ is equivalent to computing $\det(L + \diag(z_i)_{i=1}^n),$ which can be done in $\Otilde{1}$-parallel time \cite{Berkowitz84}. From evaluations we can obtain the coefficients by polynomial interpolation; this can be solved, e.g., by solving linear equations, doable in \Class{NC} \cite{csanky1975fast}.
\end{proof}

\begin{proof}[Proof of \cref{cor:concentration of size}]

Let $f(S) = \card{S}$ correspond to the $1$-Lipschitz (with respect to the Hamming metric) function of the size of a sample $S$ from $\mu$, and let $\overline{f}$ indicate the value $\E_{S \sim \mu}{f}$. 

By applying known concentration inequalities for strongly Rayleigh distributions \cite[Theorem 3.2]{pemantle2013concentration}, it follows that \[\P_{S\sim \mu} {f - \overline{f} > a} \leq 3 \text{ exp } \bigg{(} \frac{-a^2}{16(a + 2\overline{f})} \bigg{)}.\]

For the first statement, let $a = 10 \sqrt{n \log\frac{1}{\eps}},$ and note $\exp(-\frac{a^2 }{16(a + 2 \bar{f})}) \leq \frac \eps 3$ and $a + \bar{f} \leq 11 \sqrt{n \log\frac{1}{\eps}}$. For the second, let $a= \bar{f} \log\frac{1}{\eps} $  and note $ \exp(-\frac{a^2 }{16(a + 2 \bar{f})}) \leq \frac \eps 3$ and $a + \bar{f} \leq 2 \bar{f}\log\frac{1}{\eps}.$
\end{proof}

\begin{proof}[Proof of \cref{prop:near-isotropic}]
First, we check the cardinality of the new ground set $U$:
\begin{align*}
n \beta^{-1} &=\sum_{i=1}^n \frac{n}{k\beta }p_i \leq\card{U} \\
&= \sum_{i=1}^n t_i\leq \sum_{i=1}^n\parens*{1+\frac{n}{k\beta }p_i}=n+\frac{n}{k\beta }\sum_{i=1}^n p_i=n(1+ \beta^{-1}).
\end{align*}

Next, we check that for any $i^{(j)}$, the marginal probabilities $\P_{S \sim \mu^{\text{iso}}}{i^{(j)} \in S}$ are at most $\frac{Ck}{|U|}$. In the following calculation, we interpret sampling from $\mu^{\text{iso}}$ as first sampling from $\mu$ and then choosing a copy $j \in [t_i]$ for each element. This yields
\begin{align*}
\P_{S \sim \mu^{\text{iso}}}{i^{(j)} \in S} &= \sum_{S \ni i} \P{\text{we chose copy } j \given \text{we sampled } S \text{ from } \mu} \\
&\cdot \P{\text{we sampled } S \text{ from } \mu} \\
&= \sum_{S \ni i} \frac{1}{t_i} \cdot \mu(S) = \frac{1}{t_i} \sum_{S \ni i} \mu(S) = \frac{1}{t_i} \cdot \P_{S \sim \mu}{i \in S}.
\end{align*}

Since $ 1+\frac{n}{\beta k}p_i \ge t_i\geq \frac{n}{\beta k}p_i$, we obtain
\begin{align*} \frac{k}{ k p_i^{-1} + n \beta^{-1} } &= \frac{p_i}{1 + \frac n {\beta k} p_i}\le \P_{S\sim \mu^{\text{iso}}}{i^{(j)}\in S}\\
&\leq \frac{\beta k}{ n} = \frac{k (\beta+1)}{n(1+\beta^{-1})}\leq \frac{k(\beta +1) }{\abs{U} } \leq \frac{C k}{\abs{U}}.\end{align*} 
The latter inequality shows the marginal upper bound. Next, to show the marginal lower bound, suppose $ \P_{\mu}{i\in S}= p_i \geq \frac{\sqrt{\beta} k}{n }$. Then for all $j \in [t_i]$,
\[ \P_{S\sim \mu^{\text{iso}}}{i^{(j)}\in S} \geq \frac{k}{ k p_i^{-1} + n \beta^{-1} } \geq\frac{k}{n \beta^{-1} (1 + \sqrt \beta) } \geq \frac{k}{C\abs{U}}.\]
Finally, letting $\bar{R} \coloneq \set*{i \given p_i \geq \frac{\sqrt \beta k}{n}} \subseteq [n]$,
\begin{align*}\sum_{S\in \binom{R}{\l}} \mu^{\text{iso}}_\l(S) &=\sum_{\bar{S}\in \binom{\bar{R}}{\l}} \mu_\l(\bar{S}) \\&= 1 - \sum_{\bar{S} \subseteq \binom{[n]}{\ell}: \bar{S} \not\subseteq \binom{\bar{R}}{\ell}} \mu_\l(\bar{S})  \\
&\geq 1- \sum_{i\not\in \bar{R}} \sum_{\bar{S}\subseteq \binom{[n]}{\ell}: i \in \bar{S} }   \mu_\l(\bar{S})   \\
&= 1- \sum_{i\not\in \bar{R}} \frac{\l  p_i}{k} \geq 1- \sqrt{\beta} \l.\qedhere \end{align*}
\end{proof}

\begin{proof}[Proof of \cref{lem:klcondmarg}]
Throughout this proof, fix the set $S \in \binom{[n]}{t}$, and let $A_S$ denote the left-hand side of \eqref{eq:klcondmarg}. Let $q_i \coloneq \P_{T \sim \mu}{i \in T \given S \subseteq T}$, and note that $q_i = 1$ for all $i \in S$. Moreover, we have
\begin{equation}\label{eq:abound}
\begin{aligned}
A_S &= \sum_{i \not\in S} \frac{q_i}{k - t} \log\parens*{\frac{q_i}{p_i} \cdot \frac{k}{k - t}} \\
&= \frac{k}{k - t}\sum_{i \not\in S} \frac{q_i}{k} \log\parens*{\frac{q_i}{p_i}} + \log \frac k {k - t} \\
&\le \frac{k}{k - t}\sum_{i \not\in S} \frac{q_i}{k} \log\parens*{\frac{q_i}{p_i}} + \frac{2t}{k}.
\end{aligned}
\end{equation}
The first equation used that by definition, $\mu_{t + 1 \mid S} = \frac{q_{S^c}}{k - t}$ where $q_{S^c}$ restricts $q$ to $S^c \coloneq [n] \setminus S$; the only inequality used $\log(1 + c) \le c$ for all $c \ge 0$ and $t \le \frac{1}{2} k$. We note that for
\[B_S \coloneq \sum_{i \not\in S} \frac{q_i} k \log\parens*{ \frac{q_i}{p_i}} + \sum_{i \in S} \frac 1 k \log \frac 1 {p_i},\]
we have by $t \le \frac{1}{2} k$ that
\begin{equation}\label{eq:ABbound}A_S \le 2\sum_{i \not\in S} \frac{q_i} k \log \parens*{\frac{q_i}{p_i}} + \frac{2t}{k} \le 2B_S + \frac {2t} k,\end{equation}
since $\log \frac 1 {p_i} \ge 0$ for all $i \in [n]$. We next give an interpretation of the quantity $B_S$. Let $\mu_S$ be the distribution of $T \sim \mu$ conditioned on $S \subset T$, so that
\[\mu_S D_{k \to 1} = \begin{cases}
\frac 1 k & i \in S \\
\frac {q_i} k & i\not\in S
\end{cases}.\]
Notice that $B_S$ is defined to be $\DKL{\mu_S D_{k \to 1} \river \mu D_{k \to 1}}$ (since $\mu D_{k \to 1} = \frac 1 k p$), which we can control by entropic independence of $\mu$. In particular, 
\begin{align*}B_S = \DKL{\mu_S D_{k \to 1} \river \mu D_{k \to 1}} &\le \frac{1}{\alpha k} \DKL{\mu_S \river \mu} \\
&= \frac{1}{\alpha k} \sum_{\substack{T \in \binom{[n]}{k} \\ S \subset T}} \mu_S(T) \log \frac{\mu_S(T)}{\mu(T)} \\
&= \frac{1}{\alpha k} \sum_{\substack{T \in \binom{[n]}{k} \\ S \subset T}} \mu_S(T) \\
&\cdot \log\parens*{\frac{\mu(T)}{\P_{T \sim \mu}{S \subset T}} \cdot \frac{1}{\mu(T)}} \\
&= \frac 1 {\alpha k} \log\parens*{\frac 1 {\P_{T \sim \mu}{S \subset T}}}.\end{align*}
Combining the above display with \eqref{eq:ABbound} completes the proof.
\end{proof}
    }
	
	\Tag<sigconf>{\begin{acks}
		
\end{acks}
}
	
	\Tag<sigconf>{\balance}
	\PrintBibliography

@inproceedings{csanky1975fast,
  title={Fast parallel matrix inversion algorithms},
  author={Csanky, Laszlo},
  booktitle={16th Annual Symposium on Foundations of Computer Science (sfcs 1975)},
  pages={11--12},
  year={1975},
  organization={IEEE}
}

@unpublished{parallel-sl,
	title={Parallel Discrete Sampling via Continuous Walks},
	author={Anari, Nima and Huang, Yizhi and Vuong, Thuy-Duong and Xu, Brian and Yu, Katherine},
	year={2023}
}

@article{teng1995independent,
  title={Independent sets versus perfect matchings},
  author={Teng, Shang-Hua},
  journal={Theoretical Computer Science},
  volume={145},
  number={1-2},
  pages={381--390},
  year={1995},
  publisher={Elsevier}
}

@article{Berkowitz84,
title = {On computing the determinant in small parallel time using a small number of processors},
author = {Stuart J. Berkowitz},
journal = {Information Processing Letters},
volume = {18},
number = {3},
pages={147--150},
year = {1984}
}

@article{ADVY21,
  author    = {Nima Anari and
               Michal Derezinski and
               Thuy{-}Duong Vuong and
               Elizabeth Yang},
  title     = {Domain Sparsification of Discrete Distributions using Entropic Independence},
  journal   = {CoRR},
  volume    = {abs/2109.06442},
  year      = {2021}
}

@misc{gartrell2020scalable,
      title={Scalable Learning and MAP Inference for Nonsymmetric Determinantal Point Processes}, 
      author={Mike Gartrell and Insu Han and Elvis Dohmatob and Jennifer Gillenwater and Victor-Emmanuel Brunel},
      year={2020},
      eprint={2006.09862},
      archivePrefix={arXiv},
      primaryClass={cs.LG}
}

@inproceedings{ALO20,
    author = "Anari, Nima and Liu, Kuikui and Oveis Gharan, Shayan",
    title = "Spectral Independence in High-Dimensional Expanders and Applications to the Hardcore Model",
    booktitle = "Proceedings of the 61st {IEEE} Annual Symposium on Foundations of Computer Science",
    year = "2020",
    publisher = "{IEEE} Computer Society"
}

@inproceedings{ALOV19,
  title={Log-concave polynomials II: high-dimensional walks and an FPRAS for counting bases of a matroid},
  author={Anari, Nima and Liu, Kuikui and Oveis Gharan, Shayan and Vinzant, Cynthia},
  booktitle={Proceedings of the 51st Annual ACM SIGACT Symposium on Theory of Computing},
  pages={1--12},
  year={2019}
}

@article{BBL09,
  title={Negative dependence and the geometry of polynomials},
  author={Borcea, Julius and Br{\"a}nd{\'e}n, Petter and Liggett, Thomas},
  journal={Journal of the American Mathematical Society},
  volume={22},
  number={2},
  pages={521--567},
  year={2009}
}

@inproceedings{AL20,
  title={Improved analysis of higher order random walks and applications},
  author={Alev, Vedat Levi and Lau, Lap Chi},
  booktitle={Proceedings of the 52nd Annual ACM SIGACT Symposium on Theory of Computing},
  pages={1198--1211},
  year={2020}
}

@MISC{KT12,
    author = {Alex Kulesza and Ben Taskar},
    title = {k-DPPs: Fixed-Size Determinantal Point Processes},
    year = {2012}
}

@article{anari2020sampling,
  title={Sampling Arborescences in Parallel},
  author={Anari, Nima and Hu, Nathan and Saberi, Amin and Schild, Aaron},
  journal={arXiv preprint arXiv:2012.09502},
  year={2020}
}

@article{jerrum1986random,
  title={Random generation of combinatorial structures from a uniform distribution},
  author={Jerrum, Mark R and Valiant, Leslie G and Vazirani, Vijay V},
  journal={Theoretical computer science},
  volume={43},
  pages={169--188},
  year={1986},
  publisher={Elsevier}
}

@article{DM21,
  title={Determinantal point processes in randomized numerical linear algebra},
  author={Derezinski, Micha{\l} and Mahoney, Michael W},
  journal={Notices of the American Mathematical Society},
  volume={68},
  number={1},
  pages={34--45},
  year={2021}
}

@article{Aldous90,
  title={The random walk construction of uniform spanning trees and uniform labelled trees},
  author={Aldous, David J},
  journal={SIAM Journal on Discrete Mathematics},
  volume={3},
  number={4},
  pages={450--465},
  year={1990},
  publisher={SIAM}
}

@inproceedings{Broder89,
  title={Generating random spanning trees},
  author={Broder, Andrei Z},
  booktitle={FOCS},
  volume={89},
  pages={442--447},
  year={1989},
  organization={Citeseer}
}

@inproceedings{BWLGCG18,
 author = {Brunel, Victor-Emmanuel},
 booktitle = {Advances in Neural Information Processing Systems},
 editor = {S. Bengio and H. Wallach and H. Larochelle and K. Grauman and N. Cesa-Bianchi and R. Garnett},
 pages = {7365--7374},
 publisher = {Curran Associates, Inc.},
 title = {Learning Signed Determinantal Point Processes through the Principal Minor Assignment Problem},
 url = {https://proceedings.neurips.cc/paper/2018/file/e1228be46de6a0234ac22ded31417bc7-Paper.pdf},
 volume = {31},
 year = {2018}
}

@inproceedings{AD20,
  title={Isotropy and Log-Concave Polynomials: Accelerated Sampling and High-Precision Counting of Matroid Bases},
  author={Anari, Nima and Derezi{\'n}ski, Micha{\l}},
  booktitle={Proceedings of the 61st Annual Symposium on Foundations of Computer Science},
  year={2020}
}

@article{Gartrell2019LearningND,
  title={Learning Nonsymmetric Determinantal Point Processes},
  author={Mike Gartrell and Victor-Emmanuel Brunel and Elvis Dohmatob and Syrine Krichene},
  journal={ArXiv},
  year={2019},
  volume={abs/1905.12962}
}

@misc{gong2014largemargin,
      title={Large-Margin Determinantal Point Processes}, 
      author={Boqing Gong and Wei-lun Chao and Kristen Grauman and Fei Sha},
      year={2014},
      eprint={1411.1537},
      archivePrefix={arXiv},
      primaryClass={stat.ML}
}

@inproceedings{Elfeki19,
  title={GDPP: Learning diverse generations using determinantal point processes},
  author={Elfeki, Mohamed and Couprie, Camille and Riviere, Morgane and Elhoseiny, Mohamed},
  booktitle={International Conference on Machine Learning},
  pages={1774--1783},
  year={2019},
  organization={PMLR}
}

@inproceedings{GartPK16,
author = {Gartrell, Mike and Paquet, Ulrich and Koenigstein, Noam},
title = {Bayesian Low-Rank Determinantal Point Processes},
year = {2016},
isbn = {9781450340359},
publisher = {Association for Computing Machinery},
address = {New York, NY, USA},
url = {https://doi.org/10.1145/2959100.2959178},
doi = {10.1145/2959100.2959178},
abstract = {Determinantal point processes (DPPs) are an emerging model for encoding probabilities over subsets, such as shopping baskets, selected from a ground set, such as an item catalog. They have recently proved to be appealing models for a number of machine learning tasks, including product recommendation. DPPs are parametrized by a positive semi-definite kernel matrix. Prior work has shown that using a low-rank factorization of this kernel provides scalability improvements that open the door to training on large-scale datasets and computing online recommendations, both of which are infeasible with standard DPP models that use a full-rank kernel. A low-rank DPP model can be trained using an optimization-based method, such as stochastic gradient ascent, to find a point estimate of the kernel parameters, which can be performed efficiently on large-scale datasets. However, this approach requires careful tuning of regularization parameters to prevent overfitting and provide good predictive performance, which can be computationally expensive. In this paper we present a Bayesian method for learning a low-rank factorization of this kernel, which provides automatic control of regularization. We show that our Bayesian low-rank DPP model can be trained efficiently using stochastic gradient Hamiltonian Monte Carlo (SGHMC). Our Bayesian model generally provides better predictive performance on several real-world product recommendation datasets than optimization-based low-rank DPP models trained using stochastic gradient ascent, and better performance than several state-of-the art recommendation methods in many cases.},
booktitle = {Proceedings of the 10th ACM Conference on Recommender Systems},
pages = {349–356},
numpages = {8},
keywords = {mcmc inference, recommender systems, determinantal point processes},
location = {Boston, Massachusetts, USA},
series = {RecSys '16}
}

@article{KuleszaT12,
  author    = {Alex Kulesza and
               Ben Taskar},
  title     = {Determinantal Point Processes for Machine Learning},
  journal   = {Found. Trends Mach. Learn.},
  volume    = {5},
  number    = {2-3},
  pages     = {123--286},
  year      = {2012}
}

@article{KD16,
  title={On the complexity of constrained determinantal point processes},
  author={Celis, L Elisa and Deshpande, Amit and Kathuria, Tarun and Straszak, Damian and Vishnoi, Nisheeth K},
  journal={arXiv preprint arXiv:1608.00554},
  year={2016}
}

@inproceedings{FHY21,
  title={Distributed metropolis sampler with optimal parallelism},
  author={Feng, Weiming and Hayes, Thomas P and Yin, Yitong},
  booktitle={Proceedings of the 2021 ACM-SIAM Symposium on Discrete Algorithms (SODA)},
  pages={2121--2140},
  year={2021},
  organization={SIAM}
}

@article{LB12,
author = {Lin, Hui and Bilmes, Jeff},
year = {2012},
month = {10},
pages = {},
title = {Learning Mixtures of Submodular Shells with Application to Document
Summarization},
journal = {Uncertainty in Artificial Intelligence - Proceedings of the 28th Conference, UAI 2012}
}

@article{MS15,
  title={Fixed-point algorithms for determinantal point processes},
  author={Mariet, Zelda and Sra, Suvrit},
  journal={CoRR, abs/1508.00792},
  year={2015}
}

@inproceedings{Wilhelm18,
author = {Wilhelm, Mark and Ramanathan, Ajith and Bonomo, Alexander and Jain, Sagar and Chi, Ed H. and Gillenwater, Jennifer},
title = {Practical Diversified Recommendations on YouTube with Determinantal Point Processes},
year = {2018},
isbn = {9781450360142},
publisher = {Association for Computing Machinery},
address = {New York, NY, USA},
url = {https://doi.org/10.1145/3269206.3272018},
doi = {10.1145/3269206.3272018},
booktitle = {Proceedings of the 27th ACM International Conference on Information and Knowledge Management},
pages = {2165–2173},
numpages = {9},
keywords = {diversity, recommender system},
location = {Torino, Italy},
series = {CIKM '18}
}

@misc{pemantle2013concentration,
      title={Concentration of Lipschitz functionals of determinantal and other strong Rayleigh measures}, 
      author={Robin Pemantle and Yuval Peres},
      year={2013},
      eprint={1108.0687},
      archivePrefix={arXiv},
      primaryClass={math.PR}
}

@article{LiJS16,
  author    = {Chengtao Li and
               Stefanie Jegelka and
               Suvrit Sra},
  title     = {Fast {DPP} Sampling for Nystr{\"{o}}m with Application to Kernel
               Methods},
  journal   = {CoRR},
  volume    = {abs/1603.06052},
  year      = {2016},
  url       = {http://arxiv.org/abs/1603.06052},
  archivePrefix = {arXiv},
  eprint    = {1603.06052},
  timestamp = {Mon, 13 Aug 2018 16:47:30 +0200},
  biburl    = {https://dblp.org/rec/journals/corr/LiJS16.bib},
  bibsource = {dblp computer science bibliography, https://dblp.org}
}

@article{chen2021spectral,
  title={Spectral independence via stability and applications to holant-type problems},
  author={Chen, Zongchen and Liu, Kuikui and Vigoda, Eric},
  journal={arXiv preprint arXiv:2106.03366},
  year={2021}
}

@article{barvinok2018approximating,
  title={Approximating real-rooted and stable polynomials, with combinatorial applications},
  author={Barvinok, Alexander},
  journal={arXiv preprint arXiv:1806.07404},
  year={2018}
}

@article{patel2017deterministic,
  title={Deterministic polynomial-time approximation algorithms for partition functions and graph polynomials},
  author={Patel, Viresh and Regts, Guus},
  journal={SIAM Journal on Computing},
  volume={46},
  number={6},
  pages={1893--1919},
  year={2017},
  publisher={SIAM}
}

@article{liu2021simple,
  title={Simple Parallel Algorithms for Single-Site Dynamics},
  author={Liu, Hongyang and Yin, Yitong},
  journal={arXiv preprint arXiv:2111.04044},
  year={2021}
}

@article{AJKPV21,
  author    = {Nima Anari and
               Vishesh Jain and
               Frederic Koehler and
               Huy Tuan Pham and
               Thuy{-}Duong Vuong},
  title     = {Entropic Independence I: Modified Log-Sobolev Inequalities for Fractionally Log-Concave Distributions and High-Temperature Ising Models},
  journal   = {CoRR},
  volume    = {abs/2106.04105},
  year      = {2021},
  url       = {https://arxiv.org/abs/2106.04105},
  archivePrefix = {arXiv},
  eprint    = {2106.04105},
  timestamp = {Fri, 11 Jun 2021 11:04:16 +0200},
  biburl    = {https://dblp.org/rec/journals/corr/abs-2106-04105.bib},
  bibsource = {dblp computer science bibliography, https://dblp.org}
}

@article{AJKPV21b,
  title={Entropic Independence II: Optimal Sampling and Concentration via Restricted Modified Log-Sobolev Inequalities},
  author={Anari, Nima and Jain, Vishesh and Koehler, Frederic and Pham, Huy Tuan and Vuong, Thuy-Duong},
  journal={arXiv preprint arXiv:2111.03247},
  year={2021}
}

@article{AASV21,
  title={Fractionally Log-Concave and Sector-Stable Polynomials: Counting Planar Matchings and More},
  author={Alimohammadi, Yeganeh and Anari, Nima and Shiragur, Kirankumar and Vuong, Thuy-Duong},
  journal={arXiv preprint arXiv:2102.02708},
  year={2021}
}

@inproceedings{Celis2017OnTC,
  title={On the Complexity of Constrained Determinantal Point Processes},
  author={L. Elisa Celis and Amit Deshpande and Tarun Kathuria and Damian Straszak and Nisheeth K. Vishnoi},
  booktitle={APPROX-RANDOM},
  year={2017}
}

@INPROCEEDINGS{GM87,
  author={Gazit, Hillel and Miller, Gary L.},
  booktitle={28th Annual Symposium on Foundations of Computer Science (sfcs 1987)}, 
  title={A parallel algorithm for finding a separator in planar graphs}, 
  year={1987},
  volume={},
  number={},
  pages={238-248},
  doi={10.1109/SFCS.1987.3}}

@article{Kas67,
    author = {Pieter Kasteleyn},
    title = {Graph theory and crystal physics},
    journal = {Graph theory and theoretical physics},
    pages = {43--110},
    year = {1967}
}
\end{document}